\setlist[enumerate,1]{label=(\roman*)}
\newcommand{\cE}{\mathcal{E}}
\newcommand{\cK}{\mathcal{K}}
\newcommand{\cP}{\mathcal{P}}
\title{Global Characterization of Equilibria in Tirole's (1985) Model with a Dividend-Paying Asset\footnote{This paper is a revised version of our earlier paper ``Long-Run Behavior of Equilibrium in Tirole (1985)'s Model with Dividend-Paying Asset'' \citep{PhamTodaTirole} posted to arXiv under a different title in January 2025. We thank Stefano Bosi 
and Cuong Le Van for constructive comments and suggestions. Toda acknowledges financial support from Japan Center for Economic Research, Murata Science and Education Foundation, and Zengin Foundation.}}
\author{Ngoc-Sang Pham\thanks{EM Normandie Business School, M\'etis Lab. Email: \href{mailto:npham@em-normandie.fr}{npham@em-normandie.fr}.} \and Alexis Akira Toda\thanks{Department of Economics, Emory University and Research Institute for Economics and Business Administration, Kobe University. Email: \href{mailto:alexis.akira.toda@emory.edu}{alexis.akira.toda@emory.edu}.}}
\numberwithin{equation}{section}
\numberwithin{lem}{section}
\numberwithin{prop}{section}
\begin{document}
\maketitle

\begin{abstract}
We revisit Tirole's classic paper ``Asset Bubbles and Overlapping Generations'' (1985, \emph{Econometrica}) in the case of a dividend-paying asset. Recently, Pham and Toda (2026) constructed a counterexample to Proposition 1(c), showing that Tirole's equilibrium classification is incorrect as stated and that long-run outcomes can depend on initial capital. This paper characterizes the entire set of equilibrium initial asset prices under capital over-accumulation. Exactly one of three regimes occurs: (i) a unique bubbleless equilibrium with capital converging to zero (capital collapse), (ii) a unique asymptotically bubbly equilibrium converging to a positive steady state (bubble necessity), or (iii) a continuum of equilibria with different long-run bubble behavior (indeterminacy). We further derive a threshold for initial capital under the bubble necessity condition, establish preference-free sufficient conditions for capital collapse, and show that the continuum in the pure bubble model survives sufficiently small dividend perturbations. Closed-form examples illustrate the possible long-run outcomes.

\medskip

\noindent
\textbf{Keywords:} asset price bubble, long-run behavior, overlapping generations.
		
\medskip

\noindent
\textbf{JEL codes:} D53, E44, G12.
\end{abstract}
	
\section{Introduction}

In a seminal paper, \citet{Tirole1985} showed that asset price bubbles can
arise in an overlapping generations (OLG) economy with production and can solve the well-known capital over-accumulation problem by absorbing
saving that would otherwise finance physical capital. Beyond this
central insight, Tirole offered a sharp classification of equilibrium
multiplicity and long-run behavior in terms of growth and interest
rates. This combination of economic intuition and tractable predictions
has made the model a benchmark for studying rational bubbles in
production economies.

More precisely, Proposition 1 of \citet{Tirole1985} can be summarized as
follows. Consider \citet{Diamond1965}'s overlapping generations neoclassical
growth model and introduce a dividend-paying asset in fixed supply. An equilibrium is called \emph{bubbleless} if the asset price ($P$) equals the fundamental value of the asset ($V$), defined by the present discounted value of dividends, and \emph{bubbly} otherwise ($P>V$). To distinguish long-run bubble size, an equilibrium is called \emph{asymptotically bubbleless} if the bubble per capita has lower limit zero and \emph{asymptotically bubbly} if it remains bounded away from zero. Let $G$ be the economic (population) growth rate, $G_d$ the dividend growth rate, and $R$ the steady-state interest rate in the absence of the asset. Under a monotonicity condition ensuring that saving is increasing in the interest rate, Proposition 1 of \citet{Tirole1985} claims that there are three cases depending on the magnitude of $G,G_d,R$:
\begin{enumerate}[(a)]
    \item If $R>G$, there exists a unique equilibrium, it is bubbleless, and the interest rate converges to $R$.
    \item If $G_d<R<G$, there exists a continuum of equilibria with the initial asset price in some interval $p_0\in [\ubar{p}_0,\bar{p}_0]$. Furthermore, $p_0=\ubar{p}_0$ is bubbleless, any $p_0\in (\ubar{p}_0,\bar{p}_0)$ is bubbly but asymptotically bubbleless, and $p_0=\bar{p}_0$ is asymptotically bubbly.
    \item If $R<G_d<G$, there exists a unique equilibrium, it is asymptotically bubbly, and the interest rate converges to $G$.
\end{enumerate}
\citet{Tirole1985} provides strong intuition for these results, and the special case with zero dividends (the so-called ``pure bubble'' case) has been extended to a variety of settings.

Recently, \citet{PhamToda2026ECMA} constructed a counterexample to Proposition 1(c) of \citet{Tirole1985}, showing that both its statement and proof are incorrect. Their example has a unique equilibrium in which capital converges to zero even though the assumptions of Proposition 1(c) are satisfied. They also restore its conclusion under the additional conditions that initial capital is sufficiently large and dividends are sufficiently small. Thus, the long-run outcome depends not only on the growth-rate comparison in Tirole's proposition but also on the initial condition. This counterexample establishes that Tirole's equilibrium classification is incorrect as stated. Given the significance of Tirole's paper, a global characterization of equilibria in this classic model is therefore warranted---not only a correction of Proposition 1(c), but a reexamination of Proposition 1 as a whole.

This paper provides such a characterization for \citet{Tirole1985}'s model with general, potentially nonstationary dividend processes and arbitrary initial capital. Theorem \ref{thm:eqset} characterizes all possible forms of the equilibrium set under capital over-accumulation. The contrast with the Diamond model is useful. Without the asset, the dynamics are one-dimensional, so the initial capital stock uniquely determines the equilibrium path. With the asset, the initial price $p_0$ is a free variable and the dynamics become two-dimensional. The theorem shows that exactly one of three regimes occurs:
\begin{enumerate*}
    \item \emph{capital collapse}: a unique bubbleless equilibrium in which capital converges to zero and the interest rate diverges;
    \item \emph{bubble necessity}: a unique asymptotically bubbly equilibrium converging to a positive bubbly steady state; or
    \item \emph{indeterminacy}: a continuum of equilibria whose interior elements are bubbly but asymptotically bubbleless and whose upper endpoint is asymptotically bubbly.
\end{enumerate*}

The subsequent results provide guidance on which regime occurs. Under the bubble necessity condition $R<G_d<G$ \citep{HiranoToda2025JPE}, Theorem \ref{thm:necessity} rules out indeterminacy and establishes uniqueness. The unique equilibrium displays capital collapse when initial capital is below a threshold and is asymptotically bubbly when initial capital is above it. Theorem \ref{thm:curse} gives explicit, preference-free sufficient conditions for capital collapse when initial capital is small: the wage function must have a power-law behavior near $k=0$ and detrended dividends must lie between suitable geometric bounds. Unlike capital collapse that may already arise in the Diamond economy under weak Inada conditions \citep{GalorRyder1989}, Theorem \ref{thm:curse} applies to general technologies including Cobb-Douglas for which the Diamond economy globally converges to a positive steady state. The collapse is therefore caused by the dividend-paying asset's crowding out of capital investment. Theorem \ref{thm:continuum} provides a sufficient condition for indeterminacy. If initial capital is at least the golden rule value and the present discounted value of dividends in the \citet{Diamond1965} economy does not exceed the steady-state bubble, then the equilibrium set is a nondegenerate interval and its minimum corresponds to the unique bubbleless equilibrium. In particular, the continuum of pure bubble equilibria in \citet{Tirole1985} survives sufficiently small dividend perturbations. Indeterminacy is familiar in overlapping generations pure bubble models since \citet{Gale1973}; what Theorem \ref{thm:continuum} adds is a condition stated entirely in terms of primitives including dividends. The ordering arguments underlying Theorems \ref{thm:bubbleless}--\ref{thm:necessity} and \ref{thm:continuum} require saving to be increasing in the interest rate, as imposed in Assumption \ref{asmp:s}; the preference-free capital collapse result in Theorem \ref{thm:curse} does not.

\S\ref{sec:example} complements the general results with closed-form examples. Under log utility and Cobb-Douglas production, Lemma \ref{lem:CD} transforms the equilibrium system so that one can choose an auxiliary sequence and recover the corresponding dividend process. Examples \ref{exmp:k0}--\ref{exmp:asym_bubbleless} construct, respectively, a unique bubbleless equilibrium with capital collapse, a unique asymptotically bubbly equilibrium, and a bubbly but asymptotically bubbleless equilibrium. Taken together, the theoretical and analytical results show why each part of Proposition 1 of \citet{Tirole1985} requires qualification: in each case, capital collapse is possible and the initial capital stock and the size of dividends matter.

Our paper belongs to the so-called ``rational bubble'' literature, in which the asset price can exceed the present discounted value of dividends in general equilibrium with rational agents. Foundational contributions include \citet{Samuelson1958}, \citet{Bewley1980}, \citet{Tirole1985}, \citet{Kocherlakota1992}, and \citet{SantosWoodford1997}. Subsequent work studies, among other issues, endogenous growth, financial frictions, liquidity, and equilibrium determinacy; representative contributions include \citet{Olivier2000}, \citet{CaballeroKrishnamurthy2006}, \citet{HellwigLorenzoni2009}, \citet{FarhiTirole2012}, \citet{MartinVentura2012}, \citet{HiranoYanagawa2017}, \citet{BloiseCitanna2019}, \citet*{BosiLeVanPham2022}, and \citet{Sorger2026}. See \citet{MartinVentura2018} and \citet{HiranoToda2024JME,HiranoTodaReal} for reviews.

The closest antecedent is \citet{PhamToda2026ECMA}. That paper isolates the failure of Proposition 1(c) of \citet{Tirole1985}, constructs a counterexample, and restores the conclusion when initial capital is large and dividends are small. The present paper addresses a different question: it characterizes the full equilibrium set under capital over-accumulation, identifies the three possible global regimes, provides an explicit sufficient condition for the continuum of equilibria, and develops closed-form examples of the distinct long-run outcomes.\footnote{\citet*{BosiLeVanPham2025} study similar issues in an OLG exchange economy with time-dependent endowments.} Thus, the substantive contribution here is not the counterexample already established in \citet{PhamToda2026ECMA}, but the characterization of the equilibrium set and the conditions governing capital collapse, bubble necessity, and indeterminacy.

Two other closely related papers are \citet*{BosiHa-HuyLeVanPhamPham2018} and \citet{HiranoToda2025JPE}. \citet{BosiHa-HuyLeVanPhamPham2018} extend \citet{Tirole1985}'s model with altruism and provide conditions under which
\begin{enumerate*}
    \item there is no bubble or
    \item there exists a continuum of bubbly equilibria,
\end{enumerate*}
which are questions addressed in Proposition 1(a)(b) of \citet{Tirole1985}. \citet{HiranoToda2025JPE} provide conditions under which any equilibrium (if it exists) must have a bubble in macro-finance models, which is related to Proposition 1(c) of \citet{Tirole1985}. Relative to these papers, our characterization of the equilibrium set is sharper: for instance, \citet[\S V.A]{HiranoToda2025JPE} provide a conditional bubble necessity result in \citet{Tirole1985}'s model for equilibria that do not collapse, but they neither characterize the equilibrium set, identify its endpoints and intermediate equilibria, nor establish the global trichotomy in Theorem \ref{thm:eqset}, and utility is restricted to log. A detailed result-by-result comparison with \citet{Tirole1985} and these related papers is provided in Appendix \ref{sec:disc}.

\section{Model}\label{sec:model}

\subsection{\texorpdfstring{\citet{Tirole1985}'s model}{}}\label{subsec:model_tirole}
We briefly review \citet{Tirole1985}'s model, which introduces a long-lived asset to \citet{Diamond1965}'s model. Time is discrete and infinite, indexed by $t=0,1,\dotsc$. There is a homogeneous good whose spot price is normalized to 1.

\paragraph{Agents}
There are overlapping generations of agents that live for two periods, when young and old. Let $N_t>0$ be the population of the young at time $t$, which is exogenous. Each young agent is endowed with a unit of labor, which is supplied inelastically. The old do not have any labor endowment. Therefore, the aggregate labor supply at time $t$ is also $N_t$. Agents in generation $t$ have utility function $U_t(c_t^y,c_{t+1}^o)$, where $c_t^y,c_{t+1}^o>0$ denote the consumption of an agent in generation $t$ when young and old. We assume $U_t:\R_{++}^2\to \R$ is continuous, quasi-concave, and strictly increasing. The initial old only care about their consumption $c_0^o$.

\paragraph{Production}
At time $t$, a representative firm produces the good using capital (denoted $K_t$) and labor (denoted $L_t$) as inputs. Let $F_t(K_t,L_t)$ be the output, where $F_t$ is a neoclassical production function, meaning that $F_t:\R_+^2\to \R_+$ is homogeneous of degree 1, concave, and continuously differentiable on $\R_{++}^2$ with positive partial derivatives. Capital fully depreciates after production.\footnote{Full capital depreciation is without loss of generality. To see why, suppose output is $f(K,L)$ and capital depreciates at rate $\delta\in [0,1]$. Then the output including undepreciated capital is $F(K,L)=f(K,L)+(1-\delta)K$, and the analysis depends only on $F$. This point is obvious but noted in \citet[p.~1093]{Coleman1991}.} Letting $R_t>0$ be the capital rent and $w_t>0$ be the wage at time $t$ (which are both endogenous), at time $t$ the firm seeks to maximize profit
\begin{equation}
    F_t(K_t,L_t)-R_tK_t-w_tL_t. \label{eq:profit}
\end{equation}
Because $F_t$ is homogeneous of degree 1, if there is a solution to the profit maximization problem, the profit must be zero. Therefore, we do not need to specify the ownership of firms. The economy starts at $t=0$ with an exogenous stock of capital $K_0>0$, which is owned by the initial old.

\paragraph{Capital investment and asset}
The consumption good at time $t$ can be converted to capital available for production at time $t+1$ at a $1:1$ ratio. Thus converting one unit of the consumption good to capital at time $t$ yields the capital rent $R_{t+1}$ at time $t+1$. In addition to capital, there is a unit supply of a long-lived asset that pays exogenous dividend $D_t\ge 0$ at time $t$,\footnote{Alternatively, we may interpret the model as having one consumption good and three inputs for production---capital, labor, and land---and the time $t$ production function takes the form $F_t(K,L,X)=F(K,L)+D_tX$, where $X$ is the land input (which is in unit supply) and $D_t$ is land productivity.} which is initially owned by the old and can be freely disposed of. Let $P_t\ge 0$ be the price of the asset, which is endogenous.

\paragraph{Individual problem}

Agents maximize utility subject to the budget constraints, taking prices as given. For the initial old, noting that the population is $N_{-1}$ and the asset is in unit supply (so each initial old is endowed with capital $K_0/N_{-1}$ and $1/N_{-1}$ shares of the asset), the solution is
\begin{equation}
    c_0^o=\frac{F_0(K_0,L_0)-w_0L_0+P_0+D_0}{N_{-1}}. \label{eq:c0}
\end{equation}

An agent in generation $t\ge 0$ maximizes utility subject to the budget constraints
\begin{subequations}\label{eq:budget}
\begin{align}
    &\text{Young:} & c_t^y+i_t+P_tx_t&=w_t, \label{eq:budget_young}\\
    &\text{Old:} & c_{t+1}^o&=R_{t+1}i_t+(P_{t+1}+D_{t+1})x_t, \label{eq:budget_old}
\end{align}
\end{subequations}
where $i_t\ge 0$ denotes capital investment and $x_t\in \R$ denotes asset holdings.

\paragraph{Equilibrium}

Because the economy features no uncertainty, we focus on deterministic equilibria. Furthermore, because agents in each generation are homogeneous, without loss of generality we focus on symmetric equilibria in which each agent in the same cohort makes the same decision.

\begin{defn}\label{defn:eq}
Let initial capital $K_0>0$, young population $\set{N_t}_{t=-1}^\infty$, and dividends $\set{D_t}_{t=0}^\infty$ be given. A \emph{rational expectations equilibrium} consists of a nonnegative sequence
\begin{equation}
    \set{(P_t,R_{t+1},w_t,c_t^y,c_t^o,i_t,x_t,K_t,L_t)}_{t=0}^\infty \label{eq:seq1}
\end{equation}
such that the following conditions hold.
\begin{enumerate}
    \item\label{item:eq1} (Utility maximization) $c_0^o$ satisfies \eqref{eq:c0}; for each $t\ge 0$, $(c_t^y,c_{t+1}^o,i_t,x_t)$ maximizes utility subject to budget constraints \eqref{eq:budget}.
    \item\label{item:eq2} (Profit maximization) For each $t$, $(K_t,L_t)$ maximizes the profit \eqref{eq:profit}.
    \item\label{item:eq3} (Commodity market clearing) For each $t$, we have
    \begin{equation}
        N_t(c_t^y+i_t)+N_{t-1}c_t^o=F_t(K_t,L_t)+D_t. \label{eq:c_clear}
    \end{equation}
    \item\label{item:eq4} (Labor market clearing) For each $t$, we have $L_t=N_t$.
    \item\label{item:eq5} (Capital and asset market clearing) For each $t$, we have
    \begin{subequations}
    \begin{align}
        N_ti_t&=K_{t+1}, \label{eq:k_clear}\\
        N_tx_t&=1. \label{eq:x_clear}
    \end{align}
    \end{subequations}
\end{enumerate}
\end{defn}

The right-hand side of \eqref{eq:c_clear} is aggregate output at time $t$, which must be either consumed or invested as capital. (Recall that the consumption good can be converted to capital at a $1:1$ ratio.) The left-hand side of \eqref{eq:k_clear} is aggregate capital investment at time $t$, which must equal aggregate capital $K_{t+1}$. The left-hand side of \eqref{eq:x_clear} is aggregate asset holdings, which must equal 1 (because the asset is in unit supply).

Definition \ref{defn:eq} involves many objects. The following lemma simplifies the equilibrium conditions.

\begin{lem}\label{lem:eq}
A rational expectations equilibrium is equivalent to a nonnegative sequence $\set{(P_t,R_{t+1},w_t,s_t,K_t)}_{t=0}^\infty$ such that, for each $t$,
\begin{enumerate}
    \item (Utility maximization) saving $s_t$ solves
    \begin{equation}
        \max_{s\in [0,w_t]}U_t(w_t-s,R_{t+1}s), \label{eq:UMP}
    \end{equation}
    \item (Profit maximization) $(K_t,L_t)$ maximizes the profit \eqref{eq:profit},
    \item (No-arbitrage) we have
    \begin{equation}
        P_t=\frac{1}{R_{t+1}}(P_{t+1}+D_{t+1}), \label{eq:R}
    \end{equation}
    \item (Market clearing) we have $L_t=N_t$ and
    \begin{equation}
        N_ts_t=K_{t+1}+P_t. \label{eq:s_clear}
    \end{equation}
\end{enumerate}
\end{lem}

The left-hand side of \eqref{eq:s_clear} is aggregate savings at time $t$, which must be allocated to capital investment or asset purchase. (Recall that the asset is in unit supply.) The following proposition establishes the existence of equilibrium.

\begin{prop}\label{prop:exist}
A rational expectations equilibrium in Definition \ref{defn:eq} exists.
\end{prop}

\begin{rem}\label{rem:exist}
Proposition \ref{prop:exist} is a supporting result needed because the subsequent analysis allows arbitrary initial capital and nonstationary dividend processes. It establishes the nonemptiness of the equilibrium set used
in Proposition \ref{prop:p0} and throughout \S\ref{sec:main}, and its truncation argument is also used in the proof of Theorem \ref{thm:bubbleless} to construct a bubbleless equilibrium. \citet{Tirole1985} does not establish existence in this general setting, and the argument in his Lemma 1 is incomplete; see Appendix
\ref{sec:disc}. \citet*[Lemma 2]{BosiHa-HuyLeVanPhamPham2018} establish
existence in a generalized model with altruism, but impose the gross
substitutes property. In contrast, Proposition \ref{prop:exist} requires only standard conditions such as quasi-concavity.
\end{rem}

\begin{rem}\label{rem:multiple}
In our model description, we assume that there is only one dividend-paying asset. In contrast, \citet{Tirole1985} assumes that there is a dividend-paying asset whose price is always equal to its fundamental value and an intrinsically worthless asset that pays no dividends. The two approaches are equivalent. To see why, consider the two-asset setting in \citet{Tirole1985}, with $V_t$ the asset price equal to the fundamental value defined by the present discounted value of dividends
\begin{equation}
    V_t\coloneqq \sum_{s=1}^\infty \frac{D_{t+s}}{R_{t+1}\dotsb R_{t+s}} \label{eq:Vt1}
\end{equation}
and $B_t$ the bubble. The absence of arbitrage as in \eqref{eq:R} implies that
\begin{equation*}
    V_t=\frac{1}{R_{t+1}}(V_{t+1}+D_{t+1}) \quad \text{and} \quad B_t=\frac{1}{R_{t+1}}B_{t+1}.
\end{equation*}
Taking the sum and letting $P_t=V_t+B_t$, we obtain the no-arbitrage condition \eqref{eq:R}. Thus, \citet{Tirole1985}'s two-asset setting reduces to a one-asset setting with potentially an asset price bubble. Conversely, by starting with a one-asset setting like our model, we may define the fundamental value $V_t$ by \eqref{eq:Vt1} and the bubble $B_t\coloneqq P_t-V_t\ge 0$ to recover \citet{Tirole1985}'s two-asset setting.

More generally, if there are multiple long-lived assets, the absence of arbitrage implies that we can bundle them together as one asset. Thus, assuming a single dividend-paying asset is without loss of generality. See also \citet{HiranoTodaReal} for more discussion on this issue.
\end{rem}

\subsection{Equilibrium system}

In this section, we introduce additional assumptions to derive the dynamical system that describes the equilibrium. As we are interested in the long-run behavior of the equilibrium, we introduce the following stationarity assumption.

\begin{asmp}\label{asmp:G}
The utility function $U_t=U$ and the production function $F_t=F$ are time-invariant. Population is $N_t=G^t$, where $G>0$.
\end{asmp}

In what follows, let $(k_t,p_t,d_t)\coloneqq (K_t/N_t,P_t/N_t,D_t/N_t)$ be the detrended capital, asset price, and dividend.

\begin{asmp}\label{asmp:f}
The production function $F$ is homogeneous of degree 1, concave, and twice continuously differentiable on $\R_{++}^2$ with positive partial derivatives. Furthermore, $f(k)\coloneqq F(k,1)$ satisfies $f'(0)=\infty$, $f'(\infty)<G$, and $f''<0$.
\end{asmp}

Assumption \ref{asmp:f} is standard. The homogeneity of $F$ implies that the technology exhibits constant returns to scale. The condition $f'(0)=\infty$ is the Inada condition, which guarantees an interior solution. The condition $f'(\infty)<G$ prevents detrended capital from diverging to infinity. A typical example satisfying Assumption \ref{asmp:f} is the Cobb-Douglas production function
\begin{equation}
    F(K,L)=AK^\alpha L^{1-\alpha}+(1-\delta)K, \label{eq:CD}
\end{equation}
where $A>0$ is productivity, $\alpha\in (0,1)$ is output elasticity of capital, and $\delta\in [0,1]$ is the capital depreciation rate with $1-\delta<G$. Then $f(k)=Ak^\alpha+(1-\delta)k$ and $f'(\infty)=1-\delta<G$.

Under the maintained assumptions, the equilibrium is interior ($k_t>0$) and the detrended capital and asset price are uniformly bounded.

\begin{lem}\label{lem:k}
If Assumptions \ref{asmp:G}, \ref{asmp:f} hold, in equilibrium we have $k_t>0$, $R_t=f'(k_t)\ge f'(\infty)$, $w_t=f(k_t)-k_tf'(k_t)>0$, $k_{t+1}\le f(k_t)/G$, $p_t\le f(k_t)$, and $\sup_t k_t<\infty$.
\end{lem}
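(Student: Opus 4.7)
The plan is to establish the six claims in an order that respects their dependencies: first the factor-price identities with the monotonicity bound $R_t\ge f'(\infty)$, next the resource-type inequalities, and finally the interiority $k_t>0$ together with $w_t>0$.

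The factor-price formulas follow from profit maximization. Since $F$ is homogeneous of degree one and differentiable on $\R_{++}^2$, the first-order conditions give $R_t=F_K(K_t,N_t)=f'(k_t)$ (using $L_t=N_t$ and the homogeneity of degree zero of $F_K$), and Euler's theorem yields $w_t=F_L(K_t,N_t)=f(k_t)-k_tf'(k_t)$. Since $f''<0$ makes $f'$ decreasing on $(0,\infty)$, $R_t\ge f'(\infty)$ is immediate, and once $k_t>0$ is shown, $w_t>0$ follows from $F_L>0$ on $\R_{++}^2$. Combining the young's budget $s_t\le w_t$ (nonnegative young consumption) with market clearing \eqref{eq:s_clear} and $N_{t+1}=GN_t$ gives
\begin{equation*}
    Gk_{t+1}+p_t=s_t\le w_t=f(k_t)-k_tf'(k_t)\le f(k_t),
\end{equation*}
so $k_{t+1}\le f(k_t)/G$ and $p_t\le f(k_t)$ follow individually by nonnegativity. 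For the uniform bound, concavity of $f$ together with $f'(\infty)<G$ yields a threshold $\bar k$ beyond which $f(k)\le Gk$; monotonicity of $f$ and a one-step induction then give $k_t\le\max\{k_0,f(\bar k)/G\}$ for all $t$, hence $\sup_t k_t<\infty$.

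The interiority $k_t>0$ is proved by induction starting from $k_0>0$. Given $k_t>0$, strict monotonicity of $U$ on $\R_{++}^2$ together with $R_{t+1}=f'(k_{t+1})>0$ force the young's optimum in \eqref{eq:UMP} to have $s_t>0$, because $U$ is only defined for strictly positive old-age consumption, so $s_t=0$ would leave the old-age argument on the boundary. Market clearing then gives $K_{t+1}+P_t>0$, and the only potential degenerate case is $K_{t+1}=0$ with $P_t>0$. In that case the Inada condition forces $R_{t+1}=f'(0)=+\infty$, while the no-arbitrage relation \eqref{eq:R} reads $R_{t+1}P_t=P_{t+1}+D_{t+1}$; but the right-hand side is finite because $D_{t+1}$ is exogenous and $P_{t+1}=N_{t+1}p_{t+1}\le N_{t+1}f(0)$ by the bound just proved, so $P_t=0$, contradicting $K_{t+1}+P_t>0$. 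Hence $k_{t+1}>0$.

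The main obstacle is this last step: it is the only part of the argument where the Inada condition, no-arbitrage, and the structure of the utility function must be woven together, and it relies on having already shown $p_{t+1}\le f(k_{t+1})$ to keep the right-hand side of the no-arbitrage equation finite in the hypothetical degenerate case. Everything else reduces to standard neoclassical bookkeeping once the budget-plus-market-clearing inequality $Gk_{t+1}+p_t\le w_t$ is in hand.
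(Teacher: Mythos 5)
Your bookkeeping for the inequalities is fine: the chain $Gk_{t+1}+p_t=s_t\le w_t\le f(k_t)$ and the threshold-plus-induction argument for $\sup_t k_t<\infty$ are correct and essentially identical to the paper's proof (the paper iterates a linear majorant $f(k)/G\le ak+b$ with $a\in(0,1)$, $b\ge 0$, which is the same device). The genuine gap is your interiority step, which is circular and manipulates an infinite price. In Definition \ref{defn:eq}, $R_{t+1}$ is a coordinate of a nonnegative \emph{real} sequence, hence a finite number; the identity $R_{t+1}=f'(k_{t+1})$ is the firm's first-order condition and holds only at an interior optimum. You cannot invoke it at the hypothetical boundary point $k_{t+1}=0$ to conclude ``$R_{t+1}=f'(0)=+\infty$,'' and the subsequent step that plugs $R_{t+1}=+\infty$ into the no-arbitrage relation \eqref{eq:R} to force $P_t=0$ is not a well-formed statement about equilibrium objects. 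Your opening has the same defect in milder form: you assert $R_t=f'(k_t)$ and $w_t=f(k_t)-k_tf'(k_t)$ \emph{before} proving $k_t>0$, but those first-order conditions presuppose interiority, which is exactly what is at stake.

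The repair---and the paper's actual argument---is much simpler and eliminates the induction, the savings problem, and no-arbitrage entirely. In equilibrium the firm's choice $(K_t,L_t)$ with $L_t=N_t>0$ maximizes $F(K_t,L_t)-R_tK_t-w_tL_t$ at the \emph{finite} prices $(R_t,w_t)$. Writing the profit as $L_t\,\pi(k)$ with $\pi(k)=f(k)-R_tk-w_t$, the Inada condition gives $\pi'(k)=f'(k)-R_t\to\infty$ as $k\downarrow 0$, so $k=0$ cannot be profit-maximizing: the firm would strictly gain by demanding capital. Hence $k_t>0$ for every $t$ directly, and only then do the first-order conditions yield $R_t=f'(k_t)\ge f'(\infty)$ and $w_t=f(k_t)-k_tf'(k_t)=F_L(K_t,L_t)>0$. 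Note also that your route leans on $s_t>0$ via the domain of $U$ and on a bound for $P_{t+1}$, neither of which is needed (every equilibrium object is finite by definition), and that Lemma \ref{lem:k} assumes only Assumptions \ref{asmp:G} and \ref{asmp:f}---not Assumption \ref{asmp:s}---so arguments that lean on fine properties of the household's problem are best avoided; the paper's proof uses the household side only through $s_t\le w_t$.
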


Let $s_t=s(w_t,R_{t+1})$ be the optimal savings obtained by solving the utility maximization problem \eqref{eq:UMP}. Using Lemma \ref{lem:k}, the market clearing condition \eqref{eq:s_clear}, and Assumption \ref{asmp:G}, we obtain the equilibrium condition
\begin{equation}
    Gk_{t+1}+p_t=s(f(k_t)-k_tf'(k_t),f'(k_{t+1})).
    \label{eq:eqcond}
\end{equation}
\citet{Tirole1985}'s analysis heavily depends on the monotonicity of the dynamical system analogous to \eqref{eq:eqcond}. To this end, \citet{Tirole1985} imposes a high-level assumption on his function $\psi$, which is inherited from \citet{Diamond1965}. (See Appendix \ref{subsec:disc_Tirole}.) Instead, we derive his assumption from explicit restrictions on preferences.

\begin{asmp}\label{asmp:s}
Given $w,R>0$, there exists a unique $s=s(w,R)\in (0,w)$ that maximizes $U(w-s,Rs)$. Furthermore, $s$ is strictly increasing in $w$ and increasing in $R$.
\end{asmp}

Assumption \ref{asmp:s} is still a high-level assumption. The following lemma provides a sufficient condition for Assumption \ref{asmp:s} to hold.

\begin{lem}\label{lem:s}
Suppose that $U(c^y,c^o)=u(c^y)+v(c^o)$, where $u$ is continuously differentiable on $(0,\infty)$, $u'>0$, $u'(0)=\infty$, $u'$ is strictly decreasing, and the same conditions hold for $v$. If $c\mapsto cv'(c)$ is increasing,\footnote{If $v$ is twice differentiable, because $(cv'(c))'=v'(c)+cv''(c)$, it follows that $cv'(c)$ is increasing if and only if $v$ has relative risk aversion (RRA) bounded above by 1, which is a standard condition in general equilibrium theory to establish equilibrium uniqueness \citep{TodaWalsh2024JME}. This parameter restriction is really about the elasticity of intertemporal substitution (EIS), not about RRA, to ensure that the supply of savings is monotonic in the interest rate. See \citet[Remark 1]{GalorRyder1989} and \citet{FlynnSchmidtToda2023TE} for more discussion.} then Assumption \ref{asmp:s} holds.
\end{lem}

Assumption \ref{asmp:s} is substantive. The monotonicity of saving in the
interest rate is used in Lemma \ref{lem:g} to obtain a monotone law of
motion and hence underlies the ordering of equilibria in Proposition \ref{prop:p0} and the global results in Theorems \ref{thm:bubbleless}--\ref{thm:necessity} and \ref{thm:continuum}.
Without such monotonicity, \citet{Tirole1985}'s model may exhibit periodic equilibria; see \citet{Jullien1988}. By contrast, the capital collapse result in Theorem \ref{thm:curse} does not require Assumption \ref{asmp:s}. The following lemma shows that, under the maintained assumptions, the equilibrium condition \eqref{eq:eqcond} can be uniquely solved for $k_{t+1}$, which is monotonic in $k_t$ and $p_t$.\footnote{Lemma \ref{lem:g} is similar to \citet[Proposition 1.3]{delaCroixMichel2002} and \citet*[Claim 1]{BosiHa-HuyLeVanPhamPham2018}.}

\begin{lem}\label{lem:g}
Let $k>0$ and $p\ge 0$. If Assumptions \ref{asmp:G}--\ref{asmp:s} hold, the equation
\begin{equation}
    Gx+p-s(f(k)-kf'(k),f'(x))=0 \label{eq:xeq}
\end{equation}
has at most one solution $x=g(k,p)>0$. Furthermore, letting $\dom g$ be the domain of $g$, the following statements hold.
\begin{enumerate}
    \item\label{item:g1} $(k,0)\in \dom g$ for all $k>0$ and $g(k,0)<k$ for large enough $k>0$.
    \item\label{item:g2} $g$ is continuous, strictly increasing in $k$, and strictly decreasing in $p$.
    \item\label{item:g3} If $(k,p)\in \dom g$, $k'\ge k$, and $0\le p'\le p$, then $(k',p')\in \dom g$ and
    \begin{equation} 
        g(k,p)\le g(k',p'). \label{eq:gineq}
    \end{equation}
\end{enumerate}
\end{lem}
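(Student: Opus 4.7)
The plan is to recast \eqref{eq:xeq} as a one-dimensional root-finding problem and then exploit monotonicity throughout. Define $\phi(x;k,p) \coloneqq Gx + p - s(f(k) - kf'(k), f'(x))$ so that $x > 0$ solves \eqref{eq:xeq} iff $\phi(x;k,p) = 0$. The key structural observation is that $\phi$ is \emph{strictly} increasing in $x$: $Gx$ is strictly increasing, $f'$ is strictly decreasing by $f'' < 0$ (Assumption \ref{asmp:f}), and $s(w,R)$ is increasing in $R$ by Assumption \ref{asmp:s}, so $-s(\cdot, f'(x))$ is increasing in $x$. At most one positive root can exist, which gives uniqueness and justifies writing $g(k,p)$ for that root. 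Continuity of $\phi$ in $(x,k,p)$, needed repeatedly below, rests on continuity of $s$, which I would deduce from the maximum theorem applied to \eqref{eq:UMP} together with the uniqueness stipulated in Assumption \ref{asmp:s}.

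For statement \ref{item:g1}, I would establish $(k,0) \in \dom g$ by the intermediate value theorem. Writing $w(k) \coloneqq f(k) - kf'(k)$, which is positive by Lemma \ref{lem:k}, the bound $s(w,R) < w$ forces $\phi(x;k,0) \to +\infty$ as $x \to \infty$. For $x$ small enough that $f'(x) \ge 1$ (possible since $f'(0) = \infty$), monotonicity of $s$ in $R$ gives $s(w(k), f'(x)) \ge s(w(k),1) > 0$, so $\phi(x;k,0) < 0$ once $Gx < s(w(k),1)$. An IVT argument supplies the root. For $g(k,0) < k$ when $k$ is large, the observation is $w(k)/k = f(k)/k - f'(k) \to f'(\infty) - f'(\infty) = 0$, because concavity together with $f'(\infty) < G$ yields $f(k)/k \to f'(\infty) < G$; hence $w(k) < Gk$ eventually, and then $Gg(k,0) < w(k) < Gk$.

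For \ref{item:g2}, continuity of $g$ on $\dom g$ is a routine consequence of continuity of $\phi$ and its strict $x$-monotonicity (take any subsequential limit of $g(k_n,p_n)$, which is bounded since $Gg \le w(k_n)$, and identify it with $g(k,p)$ via uniqueness). Strict monotonicity in $k$ rests on the computation $w'(k) = -kf''(k) > 0$ combined with strict monotonicity of $s$ in $w$ (Assumption \ref{asmp:s}), which together make $\phi$ strictly decreasing in $k$; strict monotonicity in $p$ is immediate because $\phi$ is strictly increasing in $p$.

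For \ref{item:g3}, fix $(k,p) \in \dom g$ with $x = g(k,p)$ and take $k' \ge k$, $0 \le p' \le p$. The central estimate is
\begin{equation*}
\phi(x; k', p') = Gx + p' - s(w(k'), f'(x)) \le Gx + p - s(w(k), f'(x)) = 0,
\end{equation*}
using $p' \le p$ together with $w(k') \ge w(k)$ and monotonicity of $s$ in its first argument. Since $\phi(y;k',p') \to +\infty$ as $y \to \infty$, IVT produces a root $y \ge x$, which must coincide with $g(k',p')$; this delivers simultaneously $(k',p') \in \dom g$ and the inequality \eqref{eq:gineq}. I expect the main subtlety to be verifying continuity of $s$ rigorously (via the maximum theorem), since Assumption \ref{asmp:s} itself only asserts existence, uniqueness, and monotonicity of the savings function.
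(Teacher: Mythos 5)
Your proposal is correct and takes essentially the same approach as the paper's proof: both recast \eqref{eq:xeq} as a root-finding problem for a function $\Phi(x,k,p)$ that is strictly increasing in $x$ (via $f''<0$ and monotonicity of $s$ in $R$), strictly decreasing in $k$, and strictly increasing in $p$, deduce uniqueness, continuity, and monotonicity of $g$ from this, and establish parts \ref{item:g1} and \ref{item:g3} by intermediate value theorem arguments using the bound $s(w,R)<w$ and the limit $w(k)/k\to 0$ as $k\to\infty$. The one refinement is your explicit appeal to the maximum theorem for the continuity of $s$, a point the paper leaves implicit under the high-level Assumption \ref{asmp:s}.
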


Applying Lemma \ref{lem:g}, we obtain the following proposition, which describes the equilibrium system.

\begin{prop}[Equilibrium system]\label{prop:system}
If Assumptions \ref{asmp:G}--\ref{asmp:s} hold, the equilibrium has a one-to-one correspondence with the system
\begin{subequations}\label{eq:system}
\begin{align}
    k_{t+1}&=g(k_t,p_t), \label{eq:system_k}\\
    p_{t+1}&=\frac{f'(k_{t+1})}{G}p_t-d_{t+1}, \label{eq:system_p}
\end{align}
\end{subequations}
where $k_0>0$ is given, $k_t>0$, $p_t\ge 0$, and $g$ in \eqref{eq:system_k} is defined in Lemma \ref{lem:g}.
\end{prop}

\begin{proof}
The equilibrium condition \eqref{eq:eqcond} and Lemma \ref{lem:g} imply \eqref{eq:system_k}. The no-arbitrage condition \eqref{eq:R} and the definitions of $p_t,d_t$ imply \eqref{eq:system_p}.
\end{proof}

We introduce some terminology. In any equilibrium, we may decompose the asset price as $P_t=V_t+B_t$, where $V_t$ is the fundamental value \eqref{eq:Vt1} and $B_t=P_t-V_t\ge 0$ is the bubble. (See \citet[\S2]{HiranoToda2024JME} for more details.) Accordingly, we may decompose the detrended asset price as $p_t=v_t+b_t\eqqcolon V_t/G^t+B_t/G^t$, where $v_t,b_t$ are the fundamental and bubble components. 
We say that an equilibrium is \emph{bubbly} (\emph{bubbleless}) if $b_t>0$ ($b_t=0$), and \emph{asymptotically bubbly (bubbleless)} if $\liminf_{t\to\infty}b_t>0$ ($=0$).\footnote{\citet{Tirole1985} defined ``asymptotically bubbly'' if the bubble per capita does not converge to zero, meaning $\limsup_{t\to\infty}b_t>0$. He did not clearly define ``asymptotically bubbleless''. Our definition here follows \citet{HiranoToda2025JPE}.} The following Bubble Characterization Lemma, due to \citet[Proposition 7]{Montrucchio2004}, is extremely useful for determining the presence or absence of bubbles.

\begin{lem}[Bubble Characterization]\label{lem:bubble}
If $P_t>0$ for all $t$, then the asset price exhibits a bubble ($P_t>V_t$ for all $t$) if and only if $\sum_{t=1}^\infty D_t/P_t<\infty$.
\end{lem}

Since $k_0>0$ is given, by Proposition \ref{prop:system}, an equilibrium has a one-to-one correspondence with the initial asset price $p_0\ge 0$. For this reason, we say that $p_0\ge 0$ is an equilibrium if it corresponds to an equilibrium system in Proposition \ref{prop:system} and denote the equilibrium set by $\cP_0$. Obviously, Proposition \ref{prop:exist} implies $\cP_0\neq\emptyset$. We sometimes say $p_0$ is \emph{bubbly} (\emph{bubbleless}) if the equilibrium corresponding to $p_0$ is bubbly (bubbleless).

The following proposition establishes monotonicity properties of the equilibrium, which play an important role in the subsequent analysis.

\begin{prop}[Equilibrium monotonicity]\label{prop:p0}
Suppose Assumptions \ref{asmp:G}--\ref{asmp:s} hold and let $\cP_0$ be the equilibrium set. Then the following statements hold.
\begin{enumerate}
    \item\label{item:p0_interval} $\cP_0$ is a nonempty compact interval.
    \item\label{item:p0_monotone} Let  $p_0,p_0'\in \cP_0$ and $p_0<p_0'$. Let $\set{(k_t,p_t,R_t,w_t)}_{t=0}^\infty$ satisfy the equilibrium system \eqref{eq:system} and let $p_t=v_t+b_t$ be the fundamental-bubble decomposition. Define $(k_t',p_t',R_t',w_t',v_t',b_t')$ analogously. Then for all $t\ge 1$ we have
    \begin{align*}
        k_t&>k_t', & p_t&<p_t', & R_t&<R_t',\\
        w_t&>w_t', & v_t&\ge v_t', & b_t&<b_t'.
    \end{align*}
\end{enumerate}
\end{prop}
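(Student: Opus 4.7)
The plan is to prove the monotonicity statement (ii) by a direct induction and then derive the structural properties in (i) from it. The delicate part is closedness of $\cP_0$ at its upper endpoint; here the Inada condition $f'(0)=\infty$ and the uniform bound $p_t\le f(k_t)$ from Lemma \ref{lem:k} do the real work.

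For part (ii), I would induct on $t\ge 1$. From $k_0=k_0'$ and $p_0<p_0'$, Lemma \ref{lem:g}(ii) immediately gives $k_1=g(k_0,p_0)>g(k_0,p_0')=k_1'$; then $f''<0$ yields $R_1<R_1'$, and the identity $w'(k)=-kf''(k)>0$ for $w(k)=f(k)-kf'(k)$ yields $w_1>w_1'$. The recursion \eqref{eq:system_p} produces $p_1=(R_1/G)p_0-d_1<(R_1'/G)p_0'-d_1=p_1'$, with strictness guaranteed by $p_0'>0$ (forced by $p_0\ge 0$ and $p_0<p_0'$) and $R_1'>0$. The inductive step runs identically. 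For the fundamental value, detrending \eqref{eq:Vt1} yields $v_t=\sum_{s=1}^\infty d_{t+s}\big/\prod_{\tau=1}^{s}(R_{t+\tau}/G)$, which is $\ge v_t'$ term by term since each $R_{t+\tau}<R_{t+\tau}'$. Finally, $b_t-b_t'=(p_t-p_t')-(v_t-v_t')<0$ because $p_t<p_t'$ and $v_t\ge v_t'$.

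For part (i), nonemptiness is Theorem \ref{thm:exist} and boundedness follows from $p_0\le f(k_0)$ in Lemma \ref{lem:k}. To see that $\cP_0$ is an interval, given $p_0<p_0'$ in $\cP_0$ and $p_0''\in(p_0,p_0')$, I would build the $p_0''$-trajectory inductively while carrying the sandwich $k_t\ge k_t''\ge k_t'>0$ and $0\le p_t\le p_t''\le p_t'$; Lemma \ref{lem:g}(iii) then forces $(k_t'',p_t'')\in\dom g$ so that $k_{t+1}''$ is defined by \eqref{eq:system_k}, and \eqref{eq:system_p} together with $R_{t+1}\le R_{t+1}''\le R_{t+1}'$ propagates the sandwich. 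For closedness, take $p_0^{(n)}\in\cP_0$ with $p_0^{(n)}\to p_0$, extract a monotone subsequence, and invoke (ii) to conclude that $(k_t^{(n)},p_t^{(n)})$ is monotone in $n$ at every $t$. The sequences are uniformly bounded by Lemma \ref{lem:k}, so converge to some $(k_t^*,p_t^*)$. The case $p_0^{(n)}\downarrow p_0$ is immediate: $k_t^*\ge k_t^{(1)}>0$, Lemma \ref{lem:g}(iii) puts the limit back in $\dom g$, and continuity of $g$ closes the argument.

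The main obstacle is the case $p_0^{(n)}\uparrow p_0$, in which $k_t^{(n)}$ decreases in $n$ and could a priori collapse to $0$. I would rule this out inductively: assuming $k_0^*,\dots,k_t^*>0$ (so $w_t^*>0$), suppose for contradiction $k_{t+1}^*=0$. Then $R_{t+1}^{(n)}=f'(k_{t+1}^{(n)})\to\infty$ by Inada, so passing to the limit in the savings identity $G k_{t+1}^{(n)}+p_t^{(n)}=s(w_t^{(n)},R_{t+1}^{(n)})$ and using monotonicity of $s$ in $R$ forces $p_t^*=\lim_{R\to\infty}s(w_t^*,R)>0$. But then $p_{t+1}^{(n)}=(R_{t+1}^{(n)}/G)p_t^{(n)}-d_{t+1}\to+\infty$, contradicting the uniform bound $p_{t+1}^{(n)}\le f(k_{t+1}^{(n)})\to f(0)$ from Lemma \ref{lem:k}. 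Hence $k_t^*>0$ for all $t$; the same identity places $(k_t^*,p_t^*)$ inside $\dom g$, and continuity of $g$ gives $k_{t+1}^*=g(k_t^*,p_t^*)$, so the limit trajectory satisfies the full system \eqref{eq:system} and $p_0\in\cP_0$.
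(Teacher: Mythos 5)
Your proposal is correct and follows the paper's overall strategy: prove part (ii) by induction on $t$ using the strict monotonicity of $g$ from Lemma \ref{lem:g} together with $f''<0$, compare fundamental values term by term to get $v_t\ge v_t'$ and hence $b_t<b_t'$, then obtain the interval property of $\cP_0$ by the same sandwich induction and compactness by monotone limits along sequences $p_0^{(n)}\downarrow\ubar{p}_0$ and $p_0^{(n)}\uparrow\bar{p}_0$. The one place where you genuinely depart from the paper is the step you correctly flagged as delicate: ruling out collapse $k_{t+1}^*=0$ along the increasing sequence. The paper argues economically: at the first $t_1$ with $k_{t_1}=0$ we have $R_{t_1}=f'(0)=\infty$, so the young at $t_1-1$, who have positive income, face a zero price for date-$t_1$ consumption, their demand diverges, and market clearing fails against the bounded supply $f(\bar{k}_{t_1})$. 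You instead stay entirely inside the reduced-form system: the savings identity \eqref{eq:eqcond} plus monotonicity of $s$ forces $p_t^*>0$, and then the recursion \eqref{eq:system_p} makes $p_{t+1}^{(n)}=(R_{t+1}^{(n)}/G)p_t^{(n)}-d_{t+1}$ diverge, contradicting the bound $p_{t+1}^{(n)}\le f(k_{t+1}^{(n)})$ from Lemma \ref{lem:k}. Both hinge on the Inada condition; your version has the merit of never reopening the utility-maximization problem (everything is an inequality between objects already in the system \eqref{eq:system}), while the paper's is shorter but leaves the divergence of demand informal. One small imprecision on your side: passing to the limit in the savings identity only yields the inequality $p_t^*\ge\lim_{R\to\infty}s(w_t^*,R)$, not the equality you wrote, because $w_t^{(n)}\to w_t^*$ and $R_{t+1}^{(n)}\to\infty$ jointly need not drive $s(w_t^{(n)},R_{t+1}^{(n)})$ to that exact limit; since all you need is $p_t^*>0$, the argument stands as is.
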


\begin{cor}[Uniqueness of bubbleless equilibrium]\label{cor:unique_bubbleless}
There exists at most one bubbleless equilibrium, which corresponds to $p_0=\min \cP_0$.
\end{cor}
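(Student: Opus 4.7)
The plan is to use the equilibrium monotonicity (Proposition~\ref{prop:p0}\ref{item:p0_monotone}) as a direct contradiction device. The key observation is that the bubble component evolves multiplicatively via $B_{t+1}=R_{t+1}B_t$, so ``bubbleless'' is equivalent to $b_t=0$ for every $t\ge 0$; in particular a bubbleless equilibrium satisfies $b_1=0$.

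First I would prove uniqueness. Suppose, for contradiction, that $p_0,p_0'\in\cP_0$ are both bubbleless and distinct. Relabeling if necessary, assume $p_0<p_0'$. By Proposition~\ref{prop:p0}\ref{item:p0_monotone}, the bubble components $b_t,b_t'$ satisfy the strict inequality $b_1<b_1'$. But bubblelessness forces $b_1=b_1'=0$, a contradiction. Hence there is at most one bubbleless equilibrium.

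Next I would identify the (possibly unique) bubbleless $p_0$ with $\min\cP_0$, which exists by Proposition~\ref{prop:p0}\ref{item:p0_interval}. Let $p_0\in\cP_0$ be bubbleless and let $\hat{p}_0\coloneqq \min\cP_0\in\cP_0$. If $\hat{p}_0<p_0$, applying Proposition~\ref{prop:p0}\ref{item:p0_monotone} to the pair $(\hat{p}_0,p_0)$ yields $\hat{b}_1<b_1=0$, which contradicts $\hat{b}_1\ge 0$. Therefore $p_0=\hat{p}_0=\min\cP_0$.

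There is no real obstacle here: both parts are one-line contradictions once Proposition~\ref{prop:p0} is available. The only point requiring a brief comment is why bubblelessness at any single date (say $t=1$) already gives $b_t=0$ for all $t$, which follows from the recursion $b_{t+1}=(R_{t+1}/G)b_t$ inherited from the no-arbitrage condition \eqref{eq:R} together with $R_{t+1}>0$.
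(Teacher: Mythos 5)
Your proposal is correct and follows essentially the same route as the paper: both rest entirely on the strict monotonicity of the bubble component in Proposition \ref{prop:p0}\ref{item:p0_monotone} together with $b_t\ge 0$, the paper simply combining your two steps into one observation (any $p_0'>\min\cP_0$ satisfies $b_t'>b_t\ge 0$ and is therefore bubbly). Your added remark that bubblelessness at one date propagates to all dates via $b_{t+1}=(R_{t+1}/G)b_t$ is a fine clarification but not a substantive departure.
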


\begin{proof}
If $p_0,p_0'\in \cP_0$ and $p_0<p_0'$, by Proposition \ref{prop:p0}\ref{item:p0_monotone} we have $b_t'>b_t\ge 0$, so $p_0'$ is bubbly. Therefore, if a bubbleless equilibrium exists, it must be $p_0=\min \cP_0$.
\end{proof}

\section{Global characterization of equilibria}\label{sec:main}

This section characterizes the long-run behavior of equilibrium in various settings. It is instructive to start with the special case of the \citet{Diamond1965} model without the asset, which satisfies the one-dimensional dynamics $k_{t+1}=g(k_t,0)$ by Proposition \ref{prop:system}. The following lemma shows that capital always converges in the \citet{Diamond1965} model, which bounds the capital sequence in any equilibrium in the \citet{Tirole1985} model.

\begin{lem}\label{lem:diamond}
Suppose Assumptions \ref{asmp:G}--\ref{asmp:s} hold and define the set of steady-state capital-labor ratios without the asset by
\begin{equation}
    \cK^*\coloneqq \set{k>0:k=g(k,0)}. \label{eq:cK}
\end{equation}
For any $k_0>0$, define the sequence $\set{k_t^*}_{t=0}^\infty\subset (0,\infty)$ by $k_0^*=k_0$ and $k_{t+1}^*=g(k_t^*,0)$. Then:
\begin{enumerate}
    \item\label{item:diamond_converge} $k_t^*\to k^*\in \set{0}\cup \cK^*$. If $g(k,0)>k$ for sufficiently small $k>0$, then $\cK^*$ is compact and $k^*\in \cK^*$.
    \item\label{item:diamond_ub} In any equilibrium $\set{(k_t,p_t)}_{t=0}^\infty$, we have $k_t\le k_t^*$ for all $t$.
\end{enumerate}
\end{lem}

\subsection{Existence of bubbleless equilibria}

The following theorem shows that a unique bubbleless equilibrium exists when interest rates are sufficiently high or dividends are sufficiently small. In what follows, it is convenient to define the long-run dividend growth rate
\begin{equation}
    G_d\coloneqq \limsup_{t\to\infty} D_t^{1/t}. \label{eq:Gd}
\end{equation}

\begin{thm}[Existence of bubbleless equilibrium]\label{thm:bubbleless}
Suppose Assumptions \ref{asmp:G}--\ref{asmp:s} hold. For $k_0>0$, let $\set{k_t^*}_{t=0}^\infty$ be as in Lemma \ref{lem:diamond} and $R_t^*=f'(k_t^*)$. If
\begin{equation}
    V_0^*\coloneqq \sum_{t=1}^\infty \frac{D_t}{R_1^*\dotsb R_t^*}<\infty, \label{eq:bubbleless_cond}
\end{equation}
then there exists a unique bubbleless equilibrium. In particular, if $G_d$ in \eqref{eq:Gd} satisfies $G_d<R^*\coloneqq f'(k^*)$, then \eqref{eq:bubbleless_cond} holds.
\end{thm}

\citet[Lemma 1]{Tirole1985}  claims that under the condition $R^*>G_d$ in our notation, there exists a unique bubbleless equilibrium and that $\lim_{t\to\infty}R_t=R^*$. We find this lemma problematic. (See Appendix \ref{subsec:disc_Tirole} for details.) Theorem \ref{thm:bubbleless} is more general than Lemma 1 in \citet{Tirole1985} and our proof is new.

Recall that we denote detrended dividend by $d_t=D_t/G^t$. The following two lemmas show that when dividends or interest rates are sufficiently large, bubbles are impossible.

\begin{lem}[Impossibility of bubbles, I]\label{lem:impossible1}
If Assumptions \ref{asmp:G}, \ref{asmp:f} hold and $\sum_{t=1}^\infty d_t=\infty$, then all equilibria are bubbleless.
\end{lem}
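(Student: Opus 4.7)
The plan is to argue by contradiction. Suppose there exists an equilibrium that is bubbly, so $b_0>0$. The no-arbitrage condition \eqref{eq:system_p} applied to the bubble component gives $b_{t+1}=(R_{t+1}/G)\,b_t$, hence the closed form $b_t=b_0\prod_{s=1}^{t}(R_s/G)$. On the other hand, Lemma \ref{lem:k} yields $p_t\le f(k_t)$ together with $\sup_t k_t<\infty$, and $f$ is continuous and increasing, so $\sup_t p_t<\infty$. Since $0\le b_t\le p_t$, the sequence $\{b_t\}$ is uniformly bounded; combined with $b_0>0$, this forces $\prod_{s=1}^{t}(R_s/G)\le M$ for some $M>0$ independent of $t$, i.e.\
\[
\frac{G^s}{R_1\cdots R_s}\ge \frac{1}{M}\quad\text{for all }s\ge 1.
\]

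Next, I would iterate the detrended no-arbitrage condition $p_{t-1}=(G/R_t)(p_t+d_t)$ from $t=1$ up to $T$, yielding
\[
p_0=\sum_{s=1}^{T}\frac{G^s d_s}{R_1\cdots R_s}+\frac{G^T p_T}{R_1\cdots R_T}.
\]
Dropping the nonnegative remainder term and using the uniform bound derived above,
\[
p_0\ge \sum_{s=1}^{T}\frac{G^s d_s}{R_1\cdots R_s}\ge \frac{1}{M}\sum_{s=1}^{T}d_s.
\]
Letting $T\to\infty$ and invoking the hypothesis $\sum_{t=1}^\infty d_t=\infty$, the right-hand side diverges, contradicting $p_0<\infty$. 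Therefore $b_0=0$, and the bubble dynamics force $b_t=0$ for all $t$, so the equilibrium is bubbleless.

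The main conceptual step — and the only nonroutine one — is recognizing that boundedness of the detrended asset price (from Lemma \ref{lem:k}) can be converted, in the presence of a positive bubble, into a uniform lower bound on the discount factors $G^s/(R_1\cdots R_s)$. Once this observation is in place, the divergence of $\sum d_t$ immediately blows up the fundamental value, producing the contradiction. No further work is needed; Assumption \ref{asmp:s} and the monotonicity machinery of Lemma \ref{lem:g} are not required here.
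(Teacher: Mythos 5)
Your proof is correct, but it follows a different route from the paper's. The paper's proof is a two-line appeal to the bubble characterization lemma (Lemma \ref{lem:bubble}, due to Montrucchio): since Lemma \ref{lem:k} gives a uniform bound $p_t\le p$, one has $\sum_{t}D_t/P_t=\sum_t d_t/p_t\ge \sum_t d_t/p=\infty$, so the equilibrium cannot be bubbly. You instead give a self-contained contradiction argument that never invokes Lemma \ref{lem:bubble}: a positive bubble grows as $b_t=b_0\prod_{s=1}^t(R_s/G)$, so boundedness of $p_t$ (again from Lemma \ref{lem:k}) forces the date-$0$ discount factors to satisfy $G^s/(R_1\cdots R_s)\ge 1/M$ uniformly, and iterating the no-arbitrage recursion then shows $p_0\ge M^{-1}\sum_{s=1}^T d_s\to\infty$, a contradiction. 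In effect you have reproved the relevant direction of Montrucchio's characterization in this special setting. What the paper's route buys is brevity, by reusing a general-purpose lemma already stated in the appendix; what your route buys is transparency and self-containment --- it makes explicit the economic mechanism (bubble growth at rate $R_{t+1}/G$ pins down the discount factors, which then blow up the present value of dividends), and it sidesteps the side condition $P_t>0$ that Lemma \ref{lem:bubble} formally requires, since your iteration never divides by $p_t$. One small bookkeeping remark: you cite \eqref{eq:system_p}, which is stated under Assumptions \ref{asmp:G}--\ref{asmp:s}, but the relation you actually use is just the no-arbitrage condition \eqref{eq:R} divided by $G^t$, which holds in any equilibrium by Lemma \ref{lem:eq}; citing it that way keeps your proof valid under only Assumptions \ref{asmp:G} and \ref{asmp:f}, as the lemma requires, consistent with your closing observation that Assumption \ref{asmp:s} plays no role.
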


\begin{proof}
By Lemma \ref{lem:k}, we can take $p>0$ such that $p_t\le p$ for all $t$. Since $\sum_{t=1}^\infty d_t=\infty$, we have $d_t>0$ infinitely often, so $p_t>0$ for all $t$. By assumption,
\begin{equation*}
    \sum_{t=1}^\infty \frac{D_t}{P_t}=\sum_{t=1}^\infty \frac{d_t}{p_t}\ge \sum_{t=1}^\infty \frac{d_t}{p}=\infty,
\end{equation*}
so the equilibrium is bubbleless by Lemma \ref{lem:bubble}.
\end{proof}

Lemma \ref{lem:impossible1} is consistent with Theorem 3.3 of \citet{SantosWoodford1997}, which precludes bubbles when dividends are non-negligible relative to aggregate resources.

\begin{lem}[Impossibility of bubbles, II; Lemma 2.3 of \citealp{PhamToda2026ECMA}]\label{lem:impossible2}
Suppose Assumptions \ref{asmp:G}--\ref{asmp:s} hold and let $\set{(k_t,p_t)}_{t=0}^\infty$ be an equilibrium. If $\bar{k}=\limsup_{t\to\infty}k_t$ and $f'(\bar{k})>G$, then the equilibrium is unique and bubbleless.
\end{lem}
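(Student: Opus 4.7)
My plan is to argue in two steps: first show that the given equilibrium is bubbleless, then use monotonicity from Proposition \ref{prop:p0} together with Corollary \ref{cor:unique_bubbleless} to rule out any other equilibrium.

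For bubblelessness, the key observation is that $f''<0$ in Assumption \ref{asmp:f} makes $f'$ strictly decreasing and continuous, so
\[
    \liminf_{t\to\infty} R_t = \liminf_{t\to\infty} f'(k_t) = f'\bigl(\limsup_{t\to\infty} k_t\bigr) = f'(\bar k) > G.
\]
Hence there exist $\epsilon>0$ and $T$ with $R_t/G \ge 1+\epsilon$ for all $t\ge T$. From the fundamental-bubble decomposition, the bubble component satisfies the homogeneous recursion $b_{t+1}=(R_{t+1}/G)b_t$, so $b_t \ge (1+\epsilon)^{t-T} b_T$ for $t\ge T$. On the other hand, Lemma \ref{lem:k} gives $b_t \le p_t \le f(k_t)$, and since $\sup_t k_t<\infty$ and $f$ is continuous, $\{b_t\}$ is uniformly bounded. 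The only way to reconcile geometric growth with boundedness is $b_T=0$, and since $R_{t+1}>0$ the recursion propagates this backwards to give $b_t=0$ for all $t$; the equilibrium is bubbleless.

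For uniqueness, suppose toward contradiction that $p_0' \in \cP_0$ with $p_0'\ne p_0$. By Corollary \ref{cor:unique_bubbleless}, the bubbleless equilibrium corresponds to $p_0=\min \cP_0$, so $p_0'>p_0$. By Proposition \ref{prop:p0}\ref{item:p0_monotone} the corresponding equilibrium $\{(k_t',p_t',\dots)\}$ satisfies $k_t'<k_t$ for all $t\ge 1$, and therefore $\limsup_{t\to\infty} k_t' \le \bar k$. Since $f'$ is decreasing, $f'(\limsup_t k_t') \ge f'(\bar k)>G$, so the hypothesis of the lemma applies to this second equilibrium as well. Applying the argument of the previous paragraph to $\{(k_t',p_t')\}$ shows that it, too, is bubbleless, contradicting Corollary \ref{cor:unique_bubbleless}. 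Hence $\cP_0=\{p_0\}$ and the equilibrium is unique.

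The only mildly delicate step is the identity $\liminf_t f'(k_t)=f'(\limsup_t k_t)$, which just uses strict monotonicity and continuity of $f'$ together with $\sup_t k_t<\infty$ (so the liminf of $R_t$ is finite and achieved along the same subsequences that realize $\bar k$); everything else is an appeal to Lemma \ref{lem:k}, Proposition \ref{prop:p0}, and Corollary \ref{cor:unique_bubbleless}.
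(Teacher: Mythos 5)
Your proof is correct and follows essentially the same route as the paper's: you establish that $R_t>G$ uniformly for large $t$ (the paper via choosing $\epsilon>0$ with $f'(\bar{k}+\epsilon)>G$ and $k_t<\bar{k}+\epsilon$ eventually, you via the equivalent identity $\liminf_t R_t=f'(\bar{k})$), then use boundedness of detrended prices from Lemma \ref{lem:k} to kill the bubble (the paper verifies the no-bubble condition $q_TP_T\to 0$ directly, you equivalently note that the recursion $b_{t+1}=(R_{t+1}/G)b_t$ would force unbounded geometric growth unless $b_t\equiv 0$), and finally obtain uniqueness from Proposition \ref{prop:p0} combined with Corollary \ref{cor:unique_bubbleless}. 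If anything, your uniqueness step is slightly more careful than the paper's, since you justify the inequality $k_t'<k_t$ by first observing that the bubbleless equilibrium must correspond to $\min\cP_0$, a point the paper leaves implicit.
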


Theorem \ref{thm:bubbleless} states that a unique bubbleless equilibrium exists if dividend growth is sufficiently low. Lemmas \ref{lem:impossible1} and \ref{lem:impossible2} (which correspond to Corollary 2 and Proposition 2.1 in \citealp{BosiHa-HuyLeVanPhamPham2018}) show that equilibria are necessarily bubbleless if either the dividend growth or the bubbleless interest rate is sufficiently high.

\subsection{Bubble possibility and necessity}

We next study under what conditions bubbles are possible (``bubbles can arise'') or necessary (``bubbles must arise''). Lemma \ref{lem:impossible1} implies that $\sum_{t=1}^\infty d_t<\infty$ is necessary for the existence of bubbles. We thus assume this condition.

\begin{asmp}\label{asmp:D}
$\sum_{t=1}^\infty d_t<\infty$.
\end{asmp}

Assumption \ref{asmp:D} implies that in the long run, dividends become negligible relative to the economy. If we define the long-run dividend growth rate by $G_d$ in \eqref{eq:Gd}, then Assumption \ref{asmp:D} implies $G_d\le G$. The following proposition characterizes the possible long-run behavior of the equilibrium.

\begin{prop}[Long-run behavior of equilibrium]\label{prop:longrun}
Suppose Assumptions \ref{asmp:G}--\ref{asmp:D} hold. In any equilibrium, one of the following statements is true.
\begin{enumerate}
    \item\label{item:lr_bubbleless} The equilibrium is bubbleless and $\set{(k_t,p_t,R_t)}$ converges to $(k,0,R)$ satisfying $k\in \set{0}\cup \cK^*$ and $R=f'(k)\ge G$.
    \item\label{item:lr_asymbubbleless} The equilibrium is asymptotically bubbleless and $\set{(k_t,p_t,R_t)}$ converges to $(k,0,R)$ satisfying $k\in \cK^*$ and $R=f'(k)\in [G_d,G]$.
    \item\label{item:lr_bubbly} The equilibrium is asymptotically bubbly and $\set{(k_t,p_t,R_t)}$ converges to $(k,p,G)$ satisfying $k=g(k,p)$, $p>0$, and $G=f'(k)$.
\end{enumerate}
\end{prop}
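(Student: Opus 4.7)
The approach is to exploit the boundedness of $\set{k_t}$ and $\set{p_t}$ from Lemma \ref{lem:k} together with the fact that, under Assumption \ref{asmp:D}, $d_t$ decays geometrically (since $d_t=D_t/G^t$ and $\limsup_{t\to\infty} d_t^{1/t}<1$). This renders the equilibrium system \eqref{eq:system} asymptotically autonomous, with limit map $(k,p)\mapsto \left(g(k,p),\, \frac{f'(g(k,p))}{G}\,p\right)$. Let $k_G>0$ denote the unique golden-rule capital stock solving $f'(k_G)=G$, which exists by Assumption \ref{asmp:f}. The fixed points of this limit map split cleanly into (a) $p=0$ with $k=g(k,0)$, and (b) $p>0$ with $k=k_G$ and $k_G=g(k_G,p)$; the three cases of the proposition will correspond to these fixed-point structures together with a runaway regime in which $k_t$ does not converge.

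The case split starts from the bubble component, which satisfies $b_{t+1}=(R_{t+1}/G)b_t$ (obtained by subtracting the fundamental-value recursion $v_{t+1}=(R_{t+1}/G)v_t-d_{t+1}$ from \eqref{eq:system_p}), so either $b_t=0$ for all $t$ or $b_t>0$ for all $t$. In the bubbly branch, Lemma \ref{lem:impossible2} forces $\limsup_{t\to\infty} k_t\ge k_G$; positivity and boundedness of $b_t=b_0\prod_{s=1}^{t}R_s/G$ force the partial products of $R_s/G$ to be bounded, which combined with the strict monotonicity of $g$ in Lemma \ref{lem:g}\ref{item:g2} and the asymptotic autonomy should drive $(k_t,p_t,R_t)$ to the unique bubbly fixed point $(k_G,p,G)$, establishing case (iii). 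In the bubbleless branch, $p_t=v_t=\sum_{s\ge 1}d_{t+s}\prod_{i=1}^{s}G/R_{t+i}$. If $R_t>G$ for all large $t$, i.e.\ $k_t<k_G$ eventually, the geometric decay of $d_t$ together with uniformly contractive factors $G/R_{t+i}\le 1-\delta'<1$ gives $p_t\to 0$, yielding case (i). Otherwise $k_t\ge k_G$ infinitely often, and a parallel monotone-dynamics argument produces convergence to a fixed point $(k,0,R)$ with $R=f'(k)\le G$. The bound $R\ge G_d$ in case (ii) follows from the root test applied to the series defining $v_t$: finiteness combined with $\limsup_{s\to\infty}d_{t+s}^{1/s}=G_d/G$ and $(\prod_{i=1}^{s}G/R_{t+i})^{1/s}\to G/R$ gives $G_d/R\le 1$.

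The main obstacle will be upgrading subsequential convergence to genuine convergence of $(k_t,p_t)$ in the non-runaway regimes. Because $g$ is strictly increasing in $k$ and strictly decreasing in $p$ by Lemma \ref{lem:g}\ref{item:g2}, the one-step dynamics has a competitive monotone structure, and its $\omega$-limit sets should reduce to fixed points; carrying out this reduction rigorously in the presence of the nonautonomous forcing $d_t$ is the technical heart of the proof. A related delicate point is pinning down $R_t\to G$ in the bubbly case rather than merely controlling $\limsup R_t$: the bounded partial products $\prod_{s}R_s/G$ allow individual terms to oscillate, so I must couple the bubble equation with the capital dynamics through $R_t=f'(k_t)$ and Lemma \ref{lem:g}\ref{item:g2} to exclude sustained oscillations. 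Finally, $R\ge G_d$ in case (ii) needs care because $G_d$ is defined as a $\limsup$ while interest rates converge only asymptotically, so I must pass to the tail of the sequence before invoking the root test.
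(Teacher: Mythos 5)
Your opening case split---$b_t\equiv 0$ versus $b_t>0$ for all $t$---does not map onto the trichotomy you are trying to prove, and the bubbly branch as you describe it is false. From $b_t=b_0\prod_{s=1}^{t}(R_s/G)$, boundedness of $\set{b_t}$ only bounds the partial products \emph{above}; it does not prevent them from tending to zero, which is exactly what happens in a bubbly but asymptotically bubbleless equilibrium. This case is not exotic: Example \ref{exmp:asym_bubbleless} of the paper exhibits such an equilibrium, with $b_t>0$ for all $t$, $R_t\to R<G$, $b_t\to 0$, and $\set{(k_t,p_t,R_t)}\to (k,0,R)$, i.e., case \ref{item:lr_asymbubbleless} of the proposition. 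So the bubbly branch cannot ``establish case (iii)''; it must also produce case (ii), and your observation via Lemma \ref{lem:impossible2} that $\limsup_{t\to\infty}k_t\ge k_G$ is perfectly consistent with that, since in case (ii) the limit capital satisfies $f'(k)\le G$, i.e., $k\ge k_G$. The proposed structure therefore proves a false dichotomy, not just an incomplete one.

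Beyond this, the three steps you yourself flag---upgrading subsequential convergence to convergence, pinning down $R_t\to G$ in the bubbly regime, and excluding oscillations---\emph{are} the content of the proposition, and the proposal leaves them as intentions (``should drive,'' ``a parallel monotone-dynamics argument''). The paper does not argue via asymptotic autonomy at all; it partitions equilibria by the monotonicity pattern of the interest rate: (1) $R_t\ge R_{t-1}$ for all $t$ (Lemma \ref{lem:Rinc}); (2) $R_t<R_{t-1}$ and $R_t\le G$ for some $t$ (Lemma \ref{lem:Rdec1}); (3) $R_t<R_{t-1}$ for some $t$, but only when $R_t>G$ (Lemma \ref{lem:Rdec2}). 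In (1), $\set{R_t}$ converges by monotonicity (possibly to $\infty$), and the $p$-recursion \eqref{eq:system_p} then makes $\set{p_t}$ monotone when $R\le G$. In (2), an induction shows the configuration propagates: $R_t\le G$ gives $p_t\le p_{t-1}$, and $k_t>k_{t-1}$ with Lemma \ref{lem:g}\ref{item:g2} gives $k_{t+1}=g(k_t,p_t)>g(k_{t-1},p_{t-1})=k_t$, so $\set{R_t}$ decreases forever after and converges below $G$. In (3), $R_t>G$ for all large $t$, hence $b_{t+1}/b_t=R_{t+1}/G\ge 1$ eventually, so the bounded sequence $\set{b_t}$ is eventually monotone and converges; a positive limit forces $R_t\to G$. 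This monotone trichotomy is precisely the device that eliminates the oscillation problem you correctly identify but do not solve; any repair of your outline would need to replace ``asymptotic autonomy'' with an argument of this kind.
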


Note that statements \ref{item:lr_bubbleless} and \ref{item:lr_asymbubbleless} in Proposition \ref{prop:longrun} are not mutually exclusive because there could be a bubbleless equilibrium with $R_t\to G$. Statement \ref{item:lr_bubbly} completely characterizes the long-run behavior of the asymptotically bubbly equilibrium. As a corollary, we obtain its uniqueness.

\begin{cor}[Uniqueness of asymptotically bubbly equilibrium]\label{cor:unique_bubble}If Assumptions \ref{asmp:G}--\ref{asmp:D} hold, there exists at most one asymptotically bubbly equilibrium, which corresponds to $p_0=\max \cP_0$.
\end{cor}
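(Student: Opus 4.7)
The plan is to argue by contradiction using two inputs already established: the equilibrium monotonicity of Proposition \ref{prop:p0}\ref{item:p0_monotone} and the long-run characterization of Proposition \ref{prop:longrun}\ref{item:lr_bubbly}. Suppose there exist two distinct asymptotically bubbly equilibria corresponding to $p_0<p_0'$ in $\cP_0$, and denote the induced sequences without and with primes respectively. Proposition \ref{prop:p0}\ref{item:p0_monotone} yields $R_t<R_t'$ and $0<b_t<b_t'$ for every $t\geq 1$, where positivity of $b_t$ follows from the bubble recursion $b_{t+1}=(R_{t+1}/G)b_t$ combined with $\liminf_t b_t>0$.

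The core of the argument is the ratio $\rho_t:=b_t'/b_t$. Dividing the two bubble recursions gives
\begin{equation*}
    \rho_{t+1}=\frac{R_{t+1}'}{R_{t+1}}\,\rho_t,
\end{equation*}
so $\rho_t$ is strictly increasing because $R_{t+1}'>R_{t+1}$. In particular, $\rho_t\geq\rho_1>1$ for every $t\geq 1$.

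On the other hand, Proposition \ref{prop:longrun}\ref{item:lr_bubbly} forces both equilibria to converge to the same steady state $(k^*,p^*,G)$: $k^*$ is the unique root of $f'(k)=G$ (since $f''<0$), and $p^*$ is then the unique root of $k^*=g(k^*,p)$ (since $g$ is strictly decreasing in $p$ by Lemma \ref{lem:g}). Under Assumption \ref{asmp:D} the detrended dividends $d_t$ decay geometrically; together with $R_t\to G>0$ and the boundedness of $k_t$ (Lemma \ref{lem:k}), a tail estimate shows that the detrended fundamental value $v_t$ tends to $0$. Consequently $b_t=p_t-v_t\to p^*$ and likewise $b_t'\to p^*$, hence $\rho_t\to 1$. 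This contradicts $\rho_t\geq\rho_1>1$ and establishes uniqueness.

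Finally, to identify the asymptotically bubbly equilibrium as $p_0=\max\cP_0$: if some asymptotically bubbly equilibrium had $p_0<\max\cP_0=:p_0''$, applying Proposition \ref{prop:p0}\ref{item:p0_monotone} to the pair $p_0<p_0''$ yields $b_t''>b_t>0$ for all $t\geq 1$, so $\liminf_t b_t''\geq\liminf_t b_t>0$ and $p_0''$ is also asymptotically bubbly, contradicting the uniqueness just shown. The step I expect to require the most care is the verification $v_t\to 0$: writing $v_t=\sum_{s\geq 1}d_{t+s}\prod_{\tau=1}^{s}(G/R_{t+\tau})$ and using that eventually $R_{t+\tau}\geq G(1-\eta)$ while $d_{t+s}\lesssim\lambda^{t+s}$ for some $\lambda<1$ with $\lambda<1-\eta$, the geometric series bound gives $v_t\to 0$ exponentially, but the choice of $\eta$ relative to $\lambda$ must be made along the specific equilibrium tail.
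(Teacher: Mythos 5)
Your proof is correct, and its skeleton is the same as the paper's: suppose two asymptotically bubbly equilibria $p_0<p_0'$ exist, order the paths by Proposition \ref{prop:p0}\ref{item:p0_monotone}, show that a ratio along the two paths is strictly increasing and stays above its initial value $>1$, and contradict this with convergence to a common steady state, whose uniqueness you pin down exactly as the paper does ($f''<0$ gives a unique $k$ with $f'(k)=G$, then strict monotonicity of $g(k,\cdot)$ gives a unique $p$). The difference is which ratio you track. The paper uses the price ratio: from \eqref{eq:system_p},
\begin{equation*}
    \frac{p_t'}{p_t}=\frac{(R_t'/G)p_{t-1}'-d_t}{(R_t/G)p_{t-1}-d_t}\ge \frac{(R_t'/G)p_{t-1}'}{(R_t/G)p_{t-1}}>\frac{p_{t-1}'}{p_{t-1}},
\end{equation*}
where the first inequality holds because subtracting the same $d_t\ge 0$ from a larger numerator and a smaller positive denominator weakly raises the ratio; then $p_t'/p_t\to p/p=1$ follows directly from Proposition \ref{prop:longrun}\ref{item:lr_bubbly}, with no reference to the fundamental value. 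You instead use the bubble ratio $b_t'/b_t$, which buys a cleaner monotonicity step (the recursion $b_{t+1}=(R_{t+1}/G)b_t$ is exact and multiplicative, so no inequality manipulation with dividends is needed) but costs you the extra tail estimate $v_t\to 0$ needed to translate $p_t\to p^*$ into $b_t\to p^*>0$. That estimate is carried out correctly (choose $\lambda$ with $\limsup_{t\to\infty} d_t^{1/t}<\lambda<1$, then $\eta<1-\lambda$; this mirrors the derivation of \eqref{eq:vt_ub} inside the proof of Lemma \ref{lem:Rinc}), so there is no gap; your route simply relocates the work from the monotonicity step to the limit step. The final identification of the asymptotically bubbly equilibrium with $\max\cP_0$ is the same in both proofs.
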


\begin{proof}
Uniqueness follows from Lemma A.3 of \citet{PhamToda2026ECMA}. If $p_0\in \cP_0$ is asymptotically bubbly, Proposition \ref{prop:p0}\ref{item:p0_monotone} forces $p_0=\max \cP_0$.
\end{proof}

Proposition \ref{prop:longrun} describes the long-run behavior of a particular equilibrium. We are now interested in characterizing all possible forms of the equilibrium set $\cP_0$. The following theorem provides all possible forms of the equilibrium set $\cP_0$ when the steady-state interest rate is low (capital over-accumulation).

\begin{thm}[Trichotomy under capital over-accumulation]\label{thm:eqset}
Suppose Assumptions \ref{asmp:G}--\ref{asmp:D} hold and $g(k,0)>k$ for small enough $k>0$. If
\begin{equation}
    f'(k)<G \label{eq:over-accumulation}
\end{equation}
for all $k\in \cK^*$, then exactly one of the following statements is true.
\begin{enumerate}
    \item\label{item:eqset1} (Capital collapse) There exists a unique equilibrium, which is bubbleless and $\set{(k_t,p_t,R_t)}$ converges to $(0,0,\infty)$.

    \item\label{item:eqset2} (Bubble necessity) There exists a unique equilibrium, which is asymptotically bubbly and $\set{(k_t,p_t,R_t)}$ converges to $(k,p,G)$ satisfying $k=g(k,p)$, $p>0$, and $G=f'(k)$.

    \item\label{item:eqset3} (Indeterminacy) There exists a continuum of equilibria and the equilibrium set is a compact interval $\cP_0=[\ubar{p}_0,\bar{p}_0]$.
    \begin{enumerate}
        \item\label{item:eqset3a} If $p_0>\ubar{p}_0$, then the equilibrium is bubbly.
        \item\label{item:eqset3b} If $p_0\in [\ubar{p}_0,\bar{p}_0)$, then the equilibrium is asymptotically bubbleless and $\set{(k_t,p_t,R_t)}$ converges to $(k,0,R)$ satisfying $k=g(k,0)>0$ and $R=f'(k)$.
        \item\label{item:eqset3c} If $p_0=\bar{p}_0$, then the equilibrium is asymptotically bubbly and $\set{(k_t,p_t,R_t)}$ converges to $(k,p,G)$ satisfying $k=g(k,p)$, $p>0$, and $G=f'(k)$.
    \end{enumerate}
\end{enumerate}
\end{thm}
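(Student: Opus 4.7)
The strategy is to combine the interval characterization $\cP_0=[\ubar{p}_0,\bar{p}_0]$ from Proposition \ref{prop:p0} with the three-way long-run classification of Proposition \ref{prop:longrun} and the uniqueness Corollaries \ref{cor:unique_bubbleless} and \ref{cor:unique_bubble}. The three cases of the theorem will correspond to the possible classifications of the endpoints $\ubar{p}_0$ and $\bar{p}_0$ together with whether they coincide.

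The crucial step is to rule out that $\bar{p}_0$ corresponds to a type (ii) trajectory of Proposition \ref{prop:longrun}. If it did, the trajectory would converge to a bubbleless steady state $(k,0,R)$ with $k\in\cK$ and $R=f'(k)\le\sup_{k\in\cK}f'(k)<G$ by the over-accumulation hypothesis \eqref{eq:over-accumulation}. At such a steady state the bubble recursion $b_{t+1}=(R_{t+1}/G)b_t$ contracts locally with factor $R/G<1$, so a shooting/continuity argument in $p_0$ near $\bar{p}_0$ should construct an admissible equilibrium with slightly larger $p_0$ (the perturbed bubble remains positive but shrinks, while the capital trajectory stays in $\dom g$), contradicting maximality. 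Hence $\bar{p}_0$ is of type (i) or (iii).

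If $\bar{p}_0$ is of type (iii), Corollary \ref{cor:unique_bubble} identifies it as the unique asymptotically bubbly equilibrium, converging to $(k,p,G)$ with $f'(k)=G$ and $p>0$; either $\ubar{p}_0=\bar{p}_0$ (Case 2 of the theorem) or $\ubar{p}_0<\bar{p}_0$ (Case 3). In Case 3, Corollary \ref{cor:unique_bubble} together with Proposition \ref{prop:p0}(ii) monotonicity forces every $p_0\in[\ubar{p}_0,\bar{p}_0)$ to have $b_t\to 0$, hence by Proposition \ref{prop:longrun} to be of type (ii) converging to some $(k,0,R)$ with $k\in\cK$ and $R<G$, with Assumption \ref{asmp:isolate} guaranteeing the limit is a well-defined element of $\cK$; Corollary \ref{cor:unique_bubbleless} then identifies $\ubar{p}_0$ as the unique bubbleless equilibrium, and monotonicity gives that any $p_0\in(\ubar{p}_0,\bar{p}_0]$ is strictly bubbly. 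If instead $\bar{p}_0$ is of type (i), it is bubbleless, so Corollary \ref{cor:unique_bubbleless} forces $\ubar{p}_0=\bar{p}_0$, giving uniqueness. I would then prove $k_t\to 0$ as follows: since $R_t>G$ eventually forces $k_t<k_g<\ubar{k}$ eventually (where $f'(k_g)=G$), any subsequential limit $k_*\in(0,\ubar{k})$ of $k_t$ would satisfy $g(k_*,0)\ge k_*$ by continuity (using $p_t\to 0$), but iterating the hypothesis $g(k,0)>k$ on $(0,\ubar{k})$ would push the trajectory upward toward $\ubar{k}$, contradicting $k_t<k_g$ eventually. Hence $k_t\to 0$ and so $R_t\to\infty$, yielding Case 1.

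The main obstacle will be the shooting argument excluding type (ii) at $\bar{p}_0$: one must show that for some $\epsilon>0$ the trajectory from $(k_0,\bar{p}_0+\epsilon)$ stays in $\dom g$ with $p_t\ge 0$ for all $t$. The key ingredients are the uniform local contraction of the bubble direction ($R/G<1$ near the steady state), continuous dependence of $g$ on $(k,p)$, and Assumption \ref{asmp:isolate} which ensures the bubbleless steady state is locally isolated so the perturbed trajectory can be trapped in its basin of attraction.
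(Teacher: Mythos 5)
Your strategy is, in substance, the paper's own proof: the interval structure comes from Proposition \ref{prop:p0}, the long-run classification from Proposition \ref{prop:longrun} (made usable by Lemma \ref{lem:kconverge}, which is where Assumption \ref{asmp:isolate} actually enters), the endpoints are pinned down by Corollaries \ref{cor:unique_bubbleless} and \ref{cor:unique_bubble}, and your ``shooting'' step ruling out an asymptotically bubbleless limit with positive capital at $\bar{p}_0$ is exactly the paper's Lemma \ref{lem:right}. One correction on that step: the trap is not a basin-of-attraction argument resting on Assumption \ref{asmp:isolate}. It is built from the hypothesis $g(k,0)>k$ on $(0,\ubar{k})$, which makes $p(\ubar{k}-\epsilon)>0$ in \eqref{eq:pk}, together with $f'(\ubar{k})<G$ from \eqref{eq:over-accumulation}: the perturbed capital path stays above $\ubar{k}-\epsilon$ as long as the price stays below $p(\ubar{k}-\epsilon)$, and the price stays below (indeed contracts) as long as capital stays above $\ubar{k}-\epsilon$, since then $R_t/G<1$. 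Your sketch omits the role of the $g(k,0)>k$ hypothesis, without which there is no positive price buffer and the joint induction does not close.

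The genuine gap is at the left endpoint in your Case 3. Knowing from Corollary \ref{cor:unique_bubbleless} that a bubbleless equilibrium, if any, sits at $\ubar{p}_0$ does not place $\ubar{p}_0$ in case \ref{item:lr_asymbubbleless} of Proposition \ref{prop:longrun}: a priori it could be a case-\ref{item:lr_bubbleless} ``collapse'' equilibrium, bubbleless with $k_t\to 0$ and $R_t\to\infty$ (this is precisely the behavior exhibited in Example \ref{exmp:k0}, so it cannot be waved away). If that happened, statement \ref{item:eqset3b} would fail at $p_0=\ubar{p}_0$. Your bubbliness argument excludes case \ref{item:lr_bubbleless} only for interior $p_0$ (because case-\ref{item:lr_bubbleless} equilibria are bubbleless), not for $\ubar{p}_0$ itself, and nothing else in your proposal addresses it. The paper closes this hole with Lemma \ref{lem:impossible2}: if $k_t\to 0$, then $f'\bigl(\limsup_{t\to\infty}k_t\bigr)=\infty>G$, so the equilibrium would be unique, contradicting $\ubar{p}_0<\bar{p}_0$. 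Alternatively, you could use Proposition \ref{prop:p0}\ref{item:p0_monotone}: any $p_0'>\ubar{p}_0$ has $k_t'<k_t\to 0$, contradicting the convergence $k_t'\to k>0$ you already established for interior points. Either one-line repair completes the argument; as written, the step is missing.
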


\begin{rem}
To appreciate the significance of Theorem \ref{thm:eqset}, it is useful to recall the situation without the asset. In the \citet{Diamond1965} model, the dynamics are one-dimensional and capital converges by Lemma \ref{lem:diamond}. With the asset, the problem becomes two-dimensional: given initial capital $k_0$, the initial
price $p_0$ is a free variable, and different choices of $p_0\in \cP_0$ lead to different long-run
outcomes. Theorem \ref{thm:eqset} completely characterizes this. Importantly, \citet{Tirole1985} did not consider the possibility of case \ref{item:eqset1}; as we shall see in Theorem \ref{thm:curse} below, this case arises robustly. Thus, the introduction of the dividend-paying asset fundamentally changes the economic intuition.
\end{rem}

\begin{rem}
In case \ref{item:eqset3}, the equilibrium conditions characterize but do not select the initial asset price: each $p_0\in \cP_0$ is supported by a
corresponding path of price expectations. Indeterminacy of this kind is a well-known feature of OLG models \citep{Gale1973}. Theorem \ref{thm:necessity} below identifies conditions under which the indeterminacy disappears, whereas Theorem \ref{thm:continuum} identifies conditions under which it must occur. Note that in case \ref{item:eqset3b}, we do not claim $p_0=\ubar{p}_0$ is bubbleless, and indeed, it could be bubbly (Appendix \ref{sec:bubbly-minimum}).
\end{rem}

We provide a few remarks regarding the additional assumptions in Theorem \ref{thm:eqset}.

\begin{rem}\label{rem:singlecross}
If we map \citet{Tirole1985}'s notation to ours (see Appendix \ref{subsec:disc_Tirole}), he assumes that the set $\cK^*=\set{k^*}$ in \eqref{eq:cK} is a singleton and that $k\gtrless g(k,0)$ according as $k\gtrless k^*$. This single-crossing condition is relatively strong and it is not easy to provide sufficient conditions based only on exogenous objects. It holds with constant-elasticity-of-substitution (CES) production functions with elasticity at least 1, but not with CES production functions with elasticity less than 1 \citep{delaCroixMichel2002, HiranoToda2024EL}. In contrast, we do not impose any condition on $\cK^*$.
\end{rem}

\begin{rem}\label{rem:kubar}
The condition $g(k,0)>k$ for small enough $k>0$ in Theorem \ref{thm:eqset} is standard in the literature \citep[pp.~34--36]{delaCroixMichel2002}.\footnote{\label{fn:g}\citet[p.~36, Proposition 1.7]{delaCroixMichel2002} provide a sufficient condition. In our context, it suffices to assume $U$ is additively separable as $U(c^y,c^o)=(1-\beta)u(c^y)+\beta u(c^o)$ and $\lim_{k\to 0}\omega(k)/k>G/\beta$, where $\omega(k)\coloneqq f(k)-kf'(k)$ is the wage function. See also \citet{GalorRyder1989}, who show that a condition of this type is necessary to preclude convergence to $k=0$.} If $g(k,0)<k$ for all $k\in (0,\bar{k})$, then for $k_0<\bar{k}$, Lemma \ref{lem:diamond}\ref{item:diamond_converge} implies $k_t^*\to 0$, and Lemma \ref{lem:diamond}\ref{item:diamond_ub} implies $0\le k_t\le k_t^*\to 0$ in every equilibrium. Hence Lemma \ref{lem:impossible2} gives uniqueness and only case \ref{item:eqset1} of Theorem \ref{thm:eqset} occurs.
\end{rem}

Theorem \ref{thm:necessity} below states that when the dividend growth rate is intermediate, asymptotically bubbleless equilibria (case \ref{item:eqset3} in Theorem \ref{thm:eqset}) are ruled out, and the equilibrium is unique and either bubbleless or asymptotically bubbly. Note that for this result, we do not require $g(k,0)>k$ for small enough $k>0$.

\begin{thm}[Collapse-bubble dichotomy under bubble necessity condition]\label{thm:necessity}
Suppose Assumptions \ref{asmp:G}--\ref{asmp:D} hold and let $k^*\coloneqq \lim_{t\to\infty}k_t^*$ in Lemma \ref{lem:diamond}. If
\begin{equation}
    f'(k)<G_d \label{eq:necessity}
\end{equation}
for all $k\in \cK^*\cap (0,k^*]$, then the equilibrium is unique and either case \ref{item:eqset1} or \ref{item:eqset2} of Theorem \ref{thm:eqset} occurs. If in addition \eqref{eq:necessity} holds for all $k\in \cK^*$, then there exists $\kappa\in [0,\infty]$ such that case \ref{item:eqset1} occurs if $k_0<\kappa$ and case \ref{item:eqset2} occurs if $k_0>\kappa$.
\end{thm}

Theorem \ref{thm:eqset} provides all possible forms of the equilibrium set. Theorem \ref{thm:necessity} shows that the bubble necessity condition $R<G_d<G$ in \citet{HiranoToda2025JPE} rules out case \ref{item:eqset3}. Furthermore, whether case \ref{item:eqset1} or \ref{item:eqset2} occurs depends on whether initial capital is small or large. (No claim is made when $k_0=\kappa$.) Theorem 1 of \citet{PhamToda2026ECMA} shows that $\kappa<\infty$ if $\sup_t d_t$ is small enough, so case \ref{item:eqset2} can occur.

The following theorem shows that under some explicit conditions on the production function near $k=0$ and dividend growth, capital converges to zero regardless of preferences if initial capital is sufficiently small. In particular, $\kappa>0$ in Theorem \ref{thm:necessity} and case \ref{item:eqset1} can occur.

\begin{thm}[Capital collapse]\label{thm:curse}
Suppose Assumptions \ref{asmp:G} and \ref{asmp:f} hold and let $\omega(k)\coloneqq f(k)-kf'(k)$ be the wage function. Suppose there exist $0<D_1\le D_2$ and $0<r_1\le r_2<1$ such that $D_1r_1^t\le d_t\le D_2r_2^t$ for all $t$ and
\begin{equation}
    L\coloneqq \lim_{k\to 0}k^{-\alpha}\omega(k)\in (0,\infty) \label{eq:w_lim}
\end{equation}
for some
\begin{equation}
    \frac{1}{1+\log r_2/\log r_1}<\alpha\le 1. \label{eq:alpha}
\end{equation}
Then there exists $\kappa>0$ such that, if $k_0<\kappa$, we have $k_t\to 0$ in any equilibrium.
\end{thm}

To show that there is nothing unusual about the condition \eqref{eq:w_lim}, we present two particular cases.

\begin{exmp}[$\alpha<1$] \label{curse_example1}
Consider the Cobb-Douglas production function (with partial capital depreciation) $f(k)=Ak^\alpha+(1-\delta)k$, where $A>0$, $\alpha\in (0,1)$, and $\delta\in [0,1]$. Then $\omega(k)=f(k)-kf'(k)=A(1-\alpha)k^\alpha$, so \eqref{eq:w_lim} holds.
\end{exmp}

\begin{exmp}[$\alpha=1$] \label{curse_example2}
To construct a counterexample to Proposition 1(c) of \citet{Tirole1985}, \citet{PhamToda2026ECMA} consider $f(k)=Ak\log(1+1/k)$, where $A>0$. Then $\omega(k)=f(k)-kf'(k)=Ak/(1+k)$, so \eqref{eq:w_lim} holds with $\alpha=1$.
\end{exmp}

\begin{rem}\label{rem:collapse}
Theorem \ref{thm:curse} is preference-free and the condition \eqref{eq:w_lim} depends only on the behavior of the production function near $k=0$, so it is quite robust. The conclusion of Theorem \ref{thm:curse} also holds if $\lim_{k\to 0}\omega(k)/k=0$ with no restrictions on dividends.\footnote{Here is the proof. Take $\kappa>0$ and $r\in (0,1)$ such that $\omega(k)/k\le rG$ for $k\in (0,\kappa]$. Suppose $k_0\le \kappa$ and take any equilibrium. By feasibility, $Gk_1\le Gk_1+p_0\le \omega(k_0)$, so $k_1\le rk_0\le \kappa$. By induction, we have $k_t\le r^t k_0$ for all $t$, so $k_t\to 0$.} Under constant aggregate dividends, $d_t=DG^{-t}$, so \eqref{eq:alpha} reduces to $\alpha\in (1/2,1]$. Thus, Theorem \ref{thm:curse} identifies a broad class of technologies---including Cobb-Douglas---for which sufficiently small initial capital leads to capital collapse. As $f$ in Theorem \ref{thm:curse} is essentially unrestricted away from $k=0$ and the possibility of capital collapse is absent in Proposition 1(a)(b)(c) of \citet{Tirole1985}, these results are incorrect as stated. It is worth being precise about what drives this collapse. When $\alpha<1$ and the preference conditions in Footnote \ref{fn:g} hold, we have $\omega(k)/k\to\infty$ as $k\to 0$, so $g(k,0)>k$ near the origin and the Diamond economy converges to a positive steady state (Lemma \ref{lem:diamond}). Thus, for this subclass, the collapse in Theorem \ref{thm:curse} is not inherited from the Diamond dynamics; it is induced by the dividend-paying asset's crowding out of capital investment. The intuition for the capital collapse in the presence of a dividend-paying asset is as follows. The budget constraint of the young implies that wage is spent on consumption, future capital, and asset purchase. If dividends are not too small, the asset price cannot be too small, so future capital must be small. Thus, it may be possible to sustain an equilibrium in which capital collapses.
\end{rem}

Finally, we provide conditions under which case \ref{item:eqset3} in Theorem \ref{thm:eqset} must occur. To this end, it is convenient to define
\begin{equation}
    p(k)\coloneqq s(f(k)-kf'(k),f'(k))-Gk. \label{eq:pk}
\end{equation}
Using \eqref{eq:eqcond}, we can interpret $p(k)$ as the asset price consistent with steady-state capital $k$. Let $\Phi(x,k,p)$ be the left-hand side of \eqref{eq:xeq}. By the proof of Lemma \ref{lem:g}, $\Phi$ is strictly increasing in $x,p$ and strictly decreasing in $k$. Therefore, if $g(k,0)>k$, by the definition of $g$, we have
\begin{equation*}
    p(k)=-\Phi(k,k,0)>-\Phi(g(k,0),k,0)=0.
\end{equation*}
Furthermore, the definition of $g$ and $p(k)$ in \eqref{eq:pk} imply $k=g(k,p(k))$.

The following theorem shows that, if initial capital is sufficiently large and the present discounted value of dividends under the \citet{Diamond1965} interest rate is small enough, then there exists a continuum of equilibria.

\begin{thm}[Existence of a continuum of equilibria]\label{thm:continuum}
Let everything be as in Theorem \ref{thm:eqset}, $k_g>0$ be the golden rule capital defined by $f'(k_g)=G$, and $V_0^*$ be as in \eqref{eq:bubbleless_cond}. Then $p(k_g)>0$. If $k_0\ge k_g$ and $V_0^*\le p(k_g)$, then case \ref{item:eqset3} in Theorem \ref{thm:eqset} occurs and $p_0=\min \cP_0$ corresponds to the unique bubbleless equilibrium.
\end{thm}

Theorem \ref{thm:continuum} implies that the continuum of pure bubble equilibria in the \citet{Tirole1985} model with $D=0$ survives sufficiently small dividend perturbations.

\section{Analytical examples}\label{sec:example}

In this section, we present analytical examples to illustrate the asymptotic behavior in Theorem \ref{thm:eqset}. Consider the Cobb-Douglas production function \eqref{eq:CD} with full capital depreciation ($\delta=1$), so $f(k)=Ak^\alpha$. Obviously, Assumptions \ref{asmp:G}, \ref{asmp:f} hold. Furthermore, assume log utility
\begin{equation}
    U(c^y,c^o)=(1-\beta)\log c^y+\beta \log c^o, \label{eq:utility_log}
\end{equation}
where $\beta\in (0,1)$. Then the savings function in Lemma \ref{lem:s} admits the closed-form expression $s(w,R)=\beta w$ and Assumption \ref{asmp:s} holds.

Under this specification, the equilibrium system \eqref{eq:system} reduces to
\begin{subequations}\label{systemCD}
\begin{align}
    k_{t+1}&=\frac{\beta A(1-\alpha)k_t^\alpha-p_t}{G}, \label{eq:systemCD_k}\\
    p_{t+1}&=\frac{A\alpha k_{t+1}^{\alpha-1}}{G}p_t-d_{t+1}. \label{eq:systemCD_p}
\end{align}
\end{subequations}
We obtain the bubbleless steady state by solving $k=g(k,0)$ for $k>0$, or
\begin{equation}
    k=\frac{\beta A(1-\alpha)}{G}k^\alpha\iff k^*=\left(\frac{\beta A(1-\alpha)}{G}\right)^\frac{1}{1-\alpha}. \label{eq:k_bubbleless}
\end{equation}
Note that the set $\cK^*$ in \eqref{eq:cK} is a singleton consisting of $k^*$ and $g(k,0)>k$ for $k\in (0,k^*)$. The steady-state interest rate is
\begin{equation*}
    R=f'(k^*)=A\alpha (k^*)^{\alpha-1}=\frac{\alpha G}{\beta(1-\alpha)}.
\end{equation*}
To see whether condition \eqref{eq:over-accumulation} in Theorem \ref{thm:eqset} is satisfied, we define
\begin{equation}
    \rho\coloneqq \frac{R}{G}=\frac{\alpha}{\beta(1-\alpha)}. \label{eq:rho}
\end{equation}

The following lemma, which is essentially a change of variables and similar to Step 1 in the proof of Example 1 in \citet{BosiHa-HuyLeVanPhamPham2018}, allows us to construct many examples.

\begin{lem}\label{lem:CD}
Assume log utility \eqref{eq:utility_log} and Cobb-Douglas production function \eqref{eq:CD} with $\delta=1$. Let $k_0>0$ be given and take any $d_0\ge 0$. Then the following statements hold.
\begin{enumerate}
    \item\label{item:lem_CD1} For any equilibrium $\set{(k_t,p_t)}_{t=0}^\infty$ with $p_t>0$, $x_t\coloneqq A\alpha k_t^\alpha/(Gk_{t+1})$ satisfies
    \begin{subequations}\label{eq:xineq}
    \begin{align}
        x_t&>\rho, \label{eq:xineq1}\\
        x_t+\frac{\rho}{x_{t+1}}&\ge 1+\rho. \label{eq:xineq2}
    \end{align}
    \end{subequations}
    \item\label{item:lem_CD2} For any sequence $\set{x_t}_{t=0}^\infty$ satisfying \eqref{eq:xineq}, if we define
    \begin{subequations}\label{eq:xdef}
    \begin{align}
        k_{t+1}&=\frac{A\alpha k_t^\alpha}{Gx_t}, \label{eq:xdef_k}\\
        p_t&=\frac{A\alpha}{\rho}k_t^\alpha-Gk_{t+1}, \label{eq:xdef_p}\\
        d_{t+1}&=\frac{A\alpha}{G}k_{t+1}^{\alpha-1}p_t-p_{t+1}, \label{eq:xdef_d}
    \end{align}
    \end{subequations}
    the sequence $\set{(k_t,p_t)}_{t=0}^\infty$ is an equilibrium for the model with dividend $D_t=d_tG^t$ given by \eqref{eq:xdef_d}. Furthermore, we have
    \begin{subequations}\label{eq:kdx}
    \begin{align}
        k_{t+1}&=\left(\frac{A\alpha}{G}\right)^\frac{1-\alpha^{t+1}}{1-\alpha}k_0^{\alpha^{t+1}}\frac{1}{x_tx_{t-1}^\alpha\dotsb x_0^{\alpha^t}}, \label{eq:kdx_k}\\
        d_{t+1}&=\frac{A\alpha}{\rho}k_{t+1}^\alpha\left(x_t+\frac{\rho}{x_{t+1}}-1-\rho\right),\label{eq:kdx_d} \\
        \frac{d_t}{p_t}&=\frac{x_{t-1}+\rho/x_t-1-\rho}{1-\rho/x_t}. \label{eq:kdx_dp} 
    \end{align}
    \end{subequations}
\end{enumerate}
\end{lem}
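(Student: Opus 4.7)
The plan is to verify both directions of the lemma by direct algebraic manipulation of the equilibrium system \eqref{systemCD}. The change of variables $x_t = A\alpha k_t^\alpha/(G k_{t+1})$ is chosen so that the Cobb-Douglas structure linearizes and the dividend's sign constraint becomes a simple two-term inequality in consecutive values of $x_t$.

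For the necessity direction \ref{item:lem_CD1}, I would combine the definition of $x_t$ with \eqref{eq:systemCD_k} to obtain
\[
G k_{t+1} x_t = A\alpha k_t^\alpha = \rho \beta A(1-\alpha) k_t^\alpha = \rho(G k_{t+1} + p_t),
\]
using $\rho = \alpha/[\beta(1-\alpha)]$. Solving for $p_t$ gives $p_t = G k_{t+1}(x_t-\rho)/\rho$, so the hypothesis $p_t > 0$ yields \eqref{eq:xineq1}. Substituting this formula (at both $t$ and $t-1$) into \eqref{eq:systemCD_p} and using $A\alpha k_t^{\alpha-1}/G = x_t k_{t+1}/k_t$ reduces the no-arbitrage equation to
\[
d_t = \frac{A\alpha}{\rho}\, k_t^\alpha \left(x_{t-1} + \frac{\rho}{x_t} - 1 - \rho\right).
\]
Since dividends are nonnegative, \eqref{eq:xineq2} follows after shifting the index.

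For the sufficiency direction \ref{item:lem_CD2}, given $\{x_t\}$ satisfying \eqref{eq:xineq}, I would verify that the sequence $\{(k_t, p_t)\}$ defined by \eqref{eq:xdef_k}--\eqref{eq:xdef_p} satisfies \eqref{eq:systemCD_k}: adding these two definitions and invoking $\rho = \alpha/[\beta(1-\alpha)]$ gives $G k_{t+1} + p_t = \beta A(1-\alpha) k_t^\alpha$. Equation \eqref{eq:xdef_d} is just \eqref{eq:systemCD_p} rearranged for $d_{t+1}$, so it holds by construction. The inequalities \eqref{eq:xineq1}--\eqref{eq:xineq2} then translate, via the same algebra used in Part \ref{item:lem_CD1}, into $p_t > 0$ and $d_{t+1} \geq 0$, confirming that $\{(k_t, p_t)\}$ is a valid equilibrium for the dividend process $D_t = d_t G^t$.

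The closed-form expressions \eqref{eq:kdx_k}--\eqref{eq:kdx_dp} are then obtained mechanically: \eqref{eq:kdx_k} by induction on the recursion $k_{t+1} = (A\alpha/G) k_t^\alpha/x_t$ (the resulting exponents on $A\alpha/G$ sum the geometric series $1+\alpha+\dots+\alpha^t$); \eqref{eq:kdx_d} by index-shifting the explicit formula for $d_t$ derived above; and \eqref{eq:kdx_dp} by dividing \eqref{eq:kdx_d} by \eqref{eq:xdef_p} and canceling common factors via $A\alpha k_t^\alpha = G k_{t+1} x_t$. No conceptual obstacle arises; the main care is keeping the index shifts between $p_{t-1}, p_t, d_t$ and $x_{t-1}, x_t$ consistent, which is the likeliest place to slip.
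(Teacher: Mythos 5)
Your proposal is correct and follows essentially the same route as the paper's proof: the same algebraic change of variables yielding $p_t = G k_{t+1}(x_t-\rho)/\rho = A\alpha k_t^\alpha\left(1/\rho - 1/x_t\right)$ and $d_t = \frac{A\alpha}{\rho}k_t^\alpha\left(x_{t-1}+\rho/x_t-1-\rho\right)$, with the sign constraints $p_t>0$, $d_t\ge 0$ giving \eqref{eq:xineq}, the reverse direction by construction, and the closed forms \eqref{eq:kdx_k}--\eqref{eq:kdx_dp} by induction and cancellation. The index bookkeeping you flag as the delicate point is handled consistently, so there is no gap.
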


Furthermore, the following lemma is useful.

\begin{lem}\label{lem:kx}
Let $\set{x_t}_{t=0}^\infty$ be a positive sequence converging to $x\in (0,\infty]$. For any $k_0>0$, define $\set{k_t}_{t=1}^\infty$ by \eqref{eq:xdef_k}. Then $k_t\to k=\left(\frac{A\alpha}{Gx}\right)^\frac{1}{1-\alpha}$.
\end{lem}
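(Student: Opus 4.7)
The plan is to linearize the recurrence \eqref{eq:xdef_k} by taking logarithms. Set $y_t \coloneqq \log k_t$ and $a_t \coloneqq \log\bigl(A\alpha/(Gx_t)\bigr)$; then $k_{t+1} = A\alpha k_t^\alpha/(Gx_t)$ becomes the affine recurrence $y_{t+1} = \alpha y_t + a_t$, which unrolls to
$$y_{t+1} = \alpha^{t+1} y_0 + \sum_{s=0}^{t} \alpha^{t-s} a_s.$$
Since $\alpha \in (0,1)$, the initial-condition term $\alpha^{t+1}y_0$ vanishes, and the entire problem reduces to computing the limit of the weighted sum under the hypothesis $a_t \to a \coloneqq \log\bigl(A\alpha/(Gx)\bigr)$ (with $a = -\infty$ if $x = \infty$).

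For $x \in (0,\infty)$, I would apply a standard weighted-Cesaro argument. Given $\epsilon > 0$, choose $T$ so that $|a_s - a| < \epsilon$ for $s \geq T$, and split the sum at $T$. The head $\sum_{s=0}^{T-1} \alpha^{t-s} a_s$ is bounded in absolute value by $(\max_{s<T}|a_s|)\,\alpha^{t-T+1}/(1-\alpha)$, which vanishes as $t\to\infty$. On the tail, write $a_s = a + (a_s-a)$: the constant part contributes $a \sum_{s=T}^t \alpha^{t-s} \to a/(1-\alpha)$, while the fluctuation part is bounded by $\epsilon/(1-\alpha)$. Letting $t \to \infty$ and then $\epsilon \to 0$ gives $y_t \to a/(1-\alpha) = \log\bigl((A\alpha/(Gx))^{1/(1-\alpha)}\bigr)$, and exponentiating yields the claim.

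For $x = \infty$, the claimed limit is $k = 0$, consistent with the natural convention $(A\alpha/(G\cdot\infty))^{1/(1-\alpha)} = 0$. In this case $a_t \to -\infty$, so for any $M > 0$ I pick $T$ with $a_s < -M$ for $s \geq T$; the same head/tail splitting (now using only the upper bound $a_s < -M$) yields $\limsup_{t\to\infty} y_t \leq -M/(1-\alpha)$. Letting $M \to \infty$ gives $y_t \to -\infty$, i.e., $k_t \to 0$.

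No step here looks like a genuine obstacle: the logarithmic substitution is immediate, the weighted-Cesaro limit is classical, and the $x=\infty$ case is handled by the same splitting applied to a one-sided bound. The only mild bookkeeping is phrasing the conclusion so that the formula $k = (A\alpha/(Gx))^{1/(1-\alpha)}$ accommodates $x = \infty$ via the convention $1/\infty = 0$.
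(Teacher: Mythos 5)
Your proof is correct, but it takes a genuinely different route from the paper's. You linearize the full non-autonomous recursion in logarithms, unroll it to $y_{t+1}=\alpha^{t+1}y_0+\sum_{s=0}^{t}\alpha^{t-s}a_s$, and evaluate the limit of the convolution sum by a head/tail (weighted Ces\`aro, Toeplitz-type) estimate, handling $x=\infty$ by the same splitting applied to the one-sided bound $a_s<-M$; all steps check out, including the limsup bookkeeping before sending $\epsilon\to 0$ or $M\to\infty$. The paper instead never touches the non-autonomous sum: for any $\bar{y}>y\coloneqq A\alpha/(Gx)$ it freezes the coefficient, defines the autonomous comparison sequence $\bar{k}_{t+1}=\bar{y}\bar{k}_t^\alpha$ (whose convergence to $\bar{y}^{1/(1-\alpha)}$ follows from the constant-coefficient linear recursion in logs), uses monotonicity of $k\mapsto \bar{y}k^\alpha$ to get $k_t\le\bar{k}_t$, concludes $\limsup_{t\to\infty}k_t\le \bar{y}^{1/(1-\alpha)}$, and sends $\bar{y}\downarrow y$; the lower bound is symmetric when $y>0$, and when $x=\infty$ (so $y=0$) the upper bound alone suffices since $k_t>0$. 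The trade-off: your argument is a single self-contained computation that also makes the convergence rate visible (geometric, up to the rate at which $a_t\to a$), whereas the paper's sandwich argument substitutes monotonicity for the explicit summation, needing only the elementary autonomous case, at the cost of running the bound twice and taking an extra limit in $\bar{y}$.
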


\subsection{Capital collapse}

We first construct an equilibrium with $k_t\to 0$. By Lemma \ref{lem:kx}, we need $x_t\to\infty$. Thus, set $x_t=C\sigma^t$, where $C>0$ and $\sigma>1$. Set $C\ge 1+\rho$ so that \eqref{eq:xineq} holds. Therefore, we may construct an equilibrium using Lemma \ref{lem:CD}\ref{item:lem_CD2}. Iterating \eqref{eq:kdx}, we obtain $\log d_t=\mu_t\log C+\nu_t\log\sigma+O(1)$, where
\begin{align*}
    \mu_t&=1-(\alpha+\alpha^2+\dots+\alpha^t)=1-\alpha\frac{1-\alpha^t}{1-\alpha},\\
    \nu_t&=(t-1)-(\alpha(t-1)+\alpha^2(t-2)+\dots+\alpha^t\cdot 0)\\
    &=\frac{1-2\alpha}{1-\alpha}(t-1)+\left(\frac{\alpha}{1-\alpha}\right)^2(1-\alpha^{t-1}).
\end{align*}
It follows that
\begin{equation}
    \lim_{t\to\infty}d_t^{1/t}=\lim_{t\to\infty}C^{\mu_t/t}\sigma^{\nu_t/t}=\sigma^\frac{1-2\alpha}{1-\alpha}. \label{eq:d3}
\end{equation}
Since $\sigma>1$, Assumption \ref{asmp:D} holds by setting $\alpha>1/2$. Therefore, we obtain the following result.

\begin{exmp}[Unique, bubbleless equilibrium with $k_t\to 0$]\label{exmp:k0}
Let everything be as in Lemma \ref{lem:CD}. Let $\alpha>1/2$ and define $\rho>1/\beta$ by \eqref{eq:rho}. For any $C\ge 1+\rho$ and $\sigma>1$, let $x_t=C\sigma^t$ and define $\set{(k_{t+1},p_t,d_{t+1})}_{t=0}^\infty$ by \eqref{eq:xdef}. Then $\set{(k_t,p_t)}_{t=0}^\infty$ is the unique equilibrium of the economy with detrended dividend $\set{d_t}_{t=0}^{\infty}$, which converges to $(0,0)$.
\end{exmp}

Note that $k_t\to 0$ follows from $x_t\to\infty$ and Lemma \ref{lem:kx}; $p_t\to 0$ follows from \eqref{eq:xdef_p}; $d_t\to 0$ follows from \eqref{eq:d3} and $\alpha>1/2$; and the uniqueness of equilibrium follows from Lemma \ref{lem:impossible2}. Figure \ref{fig:bubbleless} shows the time path of $\set{(k_t,p_t,d_t)}$ for a numerical example. We set $G=1$ and $A=1/(\beta(1-\alpha))$ so that there is no growth and the steady-state capital is normalized to 1.

\begin{figure}[!htb]
\centering
\includegraphics[width=0.7\linewidth]{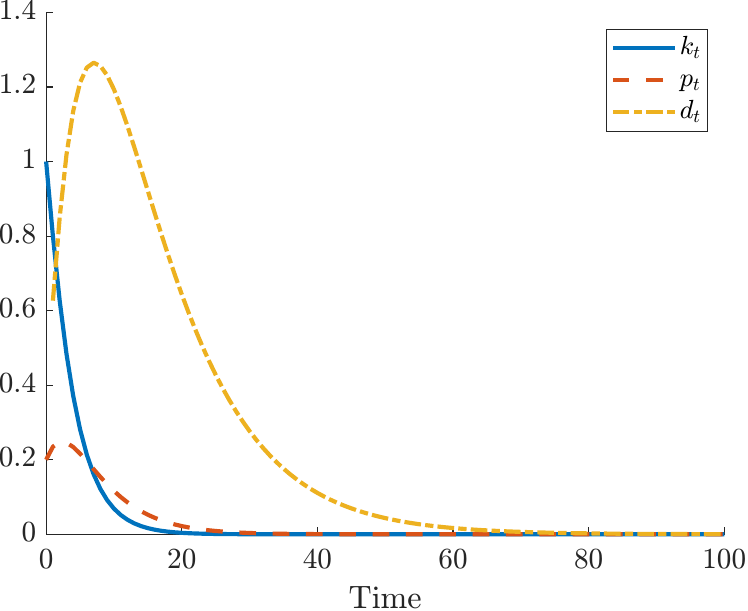}
\caption{Bubbleless equilibrium with $k_t\to 0$.}\label{fig:bubbleless}
\caption*{\footnotesize Note: Parameter values in Example \ref{exmp:k0} are $G=1$, $\alpha=2/3$, $\beta=1/2$, $A=1/(\beta(1-\alpha))$, $C=1+\rho$, $\sigma=1.1$, and $k_0=1$.}
\end{figure}

Although the conclusion of Example \ref{exmp:k0} is the same as Theorem \ref{thm:eqset}\ref{item:eqset1}, it does not satisfy its hypotheses. This is because $\alpha>1/2$ and $\beta\in (0,1)$ force $\rho=\frac{\alpha}{\beta(1-\alpha)}>1$, so $R>G$ and the condition \eqref{eq:over-accumulation} fails. Nevertheless, case \ref{item:eqset1} in Theorem \ref{thm:eqset} is possible, as Proposition 1 of \citet{PhamToda2026ECMA} shows.
\subsection{Bubble necessity}

We next seek an example of Theorem \ref{thm:eqset}\ref{item:eqset2}. If such an equilibrium exists, we have $G=f'(k)=A\alpha k^{\alpha-1}$. Using \eqref{eq:xdef_k}, it must be $x_t\to 1$. Therefore, set $x_t=1+C\sigma^t$ for some constants $C>0$ and $\sigma\in (0,1)$. For condition \eqref{eq:necessity} to hold, set $\alpha<\frac{\beta}{1+\beta}$ so that $\rho$ in \eqref{eq:rho} satisfies $\rho\in (0,1)$. Then \eqref{eq:xineq1} clearly holds. To check \eqref{eq:xineq2}, we compute
\begin{align}
    x_t+\frac{\rho}{x_{t+1}}-1-\rho&=1+C\sigma^t+\frac{\rho}{1+C\sigma^{t+1}}-1-\rho \notag \\
   &=C\sigma^t\left(1-\frac{\rho\sigma}{1+C\sigma^{t+1}}\right). \label{eq:xineq2diff}
\end{align}
Because $C>0$ and $\rho,\sigma\in (0,1)$, \eqref{eq:xineq2} holds. Therefore, we may construct an equilibrium using Lemma \ref{lem:CD}\ref{item:lem_CD2}. This equilibrium is bubbly. To see why, note that the denominator of \eqref{eq:kdx_dp} converges to $1-\rho>0$ as $t\to\infty$. Using \eqref{eq:xineq2diff}, the numerator of \eqref{eq:kdx_dp} has the order of magnitude $C(1-\rho\sigma)\sigma^{t-1}$ as $t\to\infty$. Therefore, $\sum_{t=1}^\infty d_t/p_t<\infty$, so by Lemma \ref{lem:bubble} the equilibrium is bubbly. To apply Theorem \ref{thm:necessity} (whose conclusions are stronger than Theorem \ref{thm:eqset}), it remains to verify condition \eqref{eq:necessity}.

Since $x_t\to 1$, by Lemma \ref{lem:kx}, we have $k_t\to k=(A\alpha/G)^\frac{1}{1-\alpha}$, so case \ref{item:eqset1} in Theorem \ref{thm:eqset} is ruled out. By \eqref{eq:kdx_d} and \eqref{eq:xineq2diff}, $d_t/\sigma^t$ converges to a positive constant. Therefore,
\begin{equation*}
    \frac{G_d}{G}=\limsup_{t\to\infty}(D_t/G^t)^{1/t}=\limsup_{t\to\infty}d_t^{1/t}=\sigma
\end{equation*}
and Assumption \ref{asmp:D} holds. Since by definition $\rho=R/G$, condition \eqref{eq:necessity} holds if and only if $\rho<\sigma<1$. Therefore, we obtain the following result.

\begin{exmp}[Unique, asymptotically bubbly equilibrium]\label{exmp:bubbly}
Let everything be as in Lemma \ref{lem:CD}. Let $\alpha<\frac{\beta}{1+\beta}$ so that $\rho$ in \eqref{eq:rho} satisfies $\rho\in (0,1)$. For any $C>0$ and $\sigma\in (\rho,1)$, let $x_t=1+C\sigma^t$ and define $\set{(k_{t+1},p_t,d_{t+1})}_{t=0}^\infty$ by \eqref{eq:xdef}. Then $\set{(k_t,p_t)}_{t=0}^\infty$ is the unique equilibrium of the economy and the conclusion of Theorem \ref{thm:eqset}\ref{item:eqset2} holds.
\end{exmp}

Example \ref{exmp:bubbly} shows that case \ref{item:eqset2} in Theorem \ref{thm:eqset} is possible. As far as we are aware, Example \ref{exmp:bubbly} is the first closed-form example in which \citet{Tirole1985}'s model has a unique asymptotically bubbly equilibrium. Figure \ref{fig:bubbly} shows the time path of $\set{(k_t,p_t,d_t)}$ for a numerical example. We set $G=1$ and $A=1/(\beta(1-\alpha))$ so that there is no growth and the steady-state capital is normalized to 1. We start the economy at the steady state $k_0=1$. As dividends are initially high, capital undershoots and then converges to the bubbly steady-state value, while the asset price converges to a positive value although the detrended dividend converges to zero.

\begin{figure}[!htb]
\centering
\includegraphics[width=0.7\linewidth]{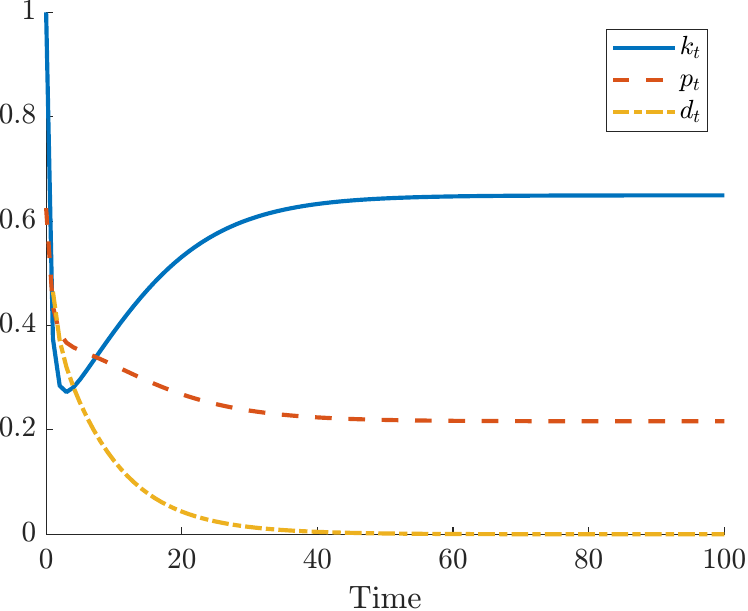}
\caption{Asymptotically bubbly equilibrium.}\label{fig:bubbly}
\caption*{\footnotesize Note: Parameter values in Example \ref{exmp:bubbly} are $G=1$, $\alpha=1/3$, $\beta=2/3$, $A=1/(\beta(1-\alpha))$, $C=1$, $\sigma=0.9$, and $k_0=1$.}
\end{figure}

\subsection{Indeterminacy}

Finally, we seek an example of Theorem \ref{thm:eqset}\ref{item:eqset3}. If a bubbly but asymptotically bubbleless equilibrium exists, we have $k_t\to k^*$ given by \eqref{eq:k_bubbleless}. By \eqref{eq:rho} and Lemma \ref{lem:kx}, we must have $x_t\to \rho$. Therefore, set $x_t=\rho+C\sigma^t$ for some constants $C>0$ and $\sigma\in (0,1)$. Then \eqref{eq:xineq1} clearly holds. To check \eqref{eq:xineq2}, we compute
\begin{align}
    x_t+\frac{\rho}{x_{t+1}}-1-\rho&=\rho+C\sigma^t+\frac{\rho}{\rho+C\sigma^{t+1}}-1-\rho \notag \\
   &=\frac{C\sigma^t}{\rho+C\sigma^{t+1}}(\rho+C\sigma^{t+1}-\sigma). \label{eq:xineq2diff2}
\end{align}
Thus \eqref{eq:xineq2} holds if $\sigma\le \rho$. To check whether the equilibrium is bubbly, using \eqref{eq:kdx_dp} and \eqref{eq:xineq2diff2}, we compute
\begin{equation}
    \frac{d_t}{p_t}=\frac{C\sigma^{t-1}}{\rho+C\sigma^t}(\rho+C\sigma^t-\sigma)\frac{1}{1-\frac{\rho}{\rho+C\sigma^t}}=\frac{\rho-\sigma+C\sigma^t}{\sigma}. \label{eq:dp4}
\end{equation}
Since $\sigma\le \rho$, we have $\sum_{t=1}^\infty d_t/p_t<\infty$ if and only if $\sigma=\rho$. Under this condition, by Lemma \ref{lem:bubble} the equilibrium is bubbly. Since $d_t/p_t=C\rho^{t-1}$ and $p_t$ is bounded above, we have $\sum_{t=1}^\infty d_t<\infty$ and Assumption \ref{asmp:D} holds. Furthermore, $k_t\to k^*$ implies $0\le b_t\le p_t\to 0$ by \eqref{eq:xdef_p}, so the equilibrium is asymptotically bubbleless. Therefore, we obtain the following result.

\begin{exmp}[Bubbly but asymptotically bubbleless equilibrium]\label{exmp:asym_bubbleless}
Let everything be as in Lemma \ref{lem:CD}. Let $\alpha<\frac{\beta}{1+\beta}$ so that $\rho$ in \eqref{eq:rho} satisfies $\rho\in (0,1)$. For any $C>0$, let $x_t=\rho+C\rho^t$ and define $\set{(k_{t+1},p_t,d_{t+1})}_{t=0}^\infty$ by \eqref{eq:xdef}. Then $\set{(k_t,p_t)}_{t=0}^\infty$ is bubbly but asymptotically bubbleless.
\end{exmp}

Because all assumptions of Theorem \ref{thm:eqset} are satisfied, and Example \ref{exmp:asym_bubbleless} provides a bubbly but asymptotically bubbleless equilibrium, case \ref{item:eqset3} in Theorem \ref{thm:eqset} is possible. Figure \ref{fig:asymbubbleless} shows the time path of $\set{(k_t,p_t,d_t)}$ for a numerical example. We set $G=1$ and $A=1/(\beta(1-\alpha))$ so that there is no growth and the steady-state capital is normalized to 1.

\begin{figure}[!htb]
\centering
\includegraphics[width=0.7\linewidth]{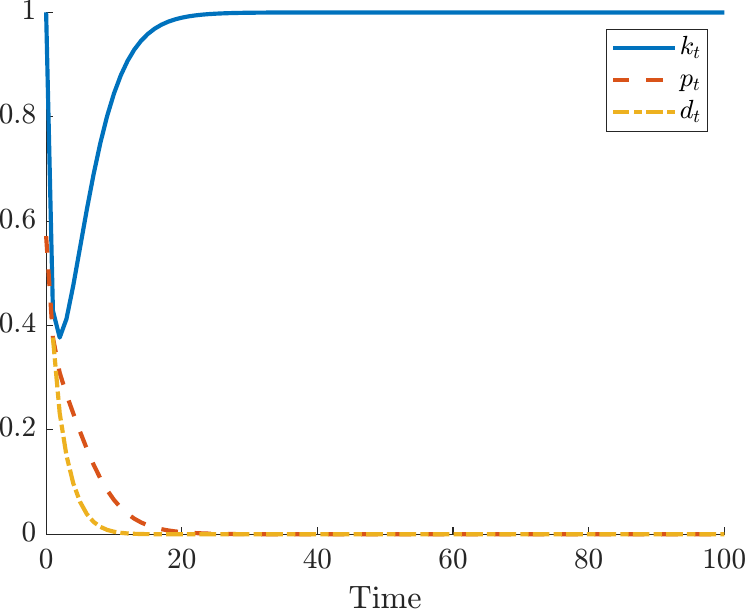}
\caption{Bubbly but asymptotically bubbleless equilibrium.}\label{fig:asymbubbleless}
\caption*{\footnotesize Note: Parameter values in Example \ref{exmp:asym_bubbleless} are $G=1$, $\alpha=1/3$, $\beta=2/3$, $A=1/(\beta(1-\alpha))$, $C=1$, and $k_0=1$.}
\end{figure}

The exact reverse-engineering construction in this section exploits two special features of the log-Cobb-Douglas specification: saving is a constant fraction of wage income, and the capital law of motion is a power recursion. These features make it possible to choose the auxiliary sequence $\set{x_t}$ and recover dividends in closed form. The general results in \S\ref{sec:main} do not rely on these functional forms; the purpose of the examples is instead to show constructively that each case in Theorem \ref{thm:eqset} can occur. We do not claim that the same closed-form formulas extend to general preferences and technologies.

\appendix

\section{Proof of \S\ref{sec:model} results}\label{sec:proof}

\subsection{Proof of Lemma \ref{lem:eq}}

Take any rational expectations equilibrium \eqref{eq:seq1}.

We first show that without loss of generality, the no-arbitrage condition \eqref{eq:R} holds. If $P_t=0$, then it must be $P_{t+1}+D_{t+1}=0$, for otherwise the agent can increase old consumption $c_{t+1}^o$ at no cost by letting $x_t\to\infty$ and increase utility, which violates optimality. Then \eqref{eq:R} holds. Suppose $P_t>0$ and let $R_{t+1}'=(P_{t+1}+D_{t+1})/P_t$ be the gross return on the asset. To simplify the notation, let us suppress time subscripts. If $R'<R$, the asset return is dominated by the capital return. Since asset holdings $x$ is unrestricted, agents can reduce $x$ by $\Delta x$ and increase $i$ by $P\Delta x$, which leaves young consumption $c^y$ unchanged but increases old consumption $c^o$ by $(R-R')P\Delta x>0$, which violates optimality. Therefore, it must be $R'\ge R$. If $R'>R$, the capital return is dominated by the asset return, so agents choose $i=i_t=0$ and hence $K=K_{t+1}=0$. Homogeneity of $F$ implies zero profits $0=F(0,L)-R\cdot 0-wL$, while profit maximization implies $0\ge F(K,L)-RK-wL$ for all $K\ge 0$. Since $R'>R$, these two conditions also hold with $R'$ instead of $R$. Thus we may simply redefine the capital rent as $R'$ without changing the equilibrium allocation. Therefore, we may assume $R'=R$, and the no-arbitrage condition \eqref{eq:R} holds.

Under the no-arbitrage condition \eqref{eq:R}, if we define $s=i+Px$, the budget constraints \eqref{eq:budget} reduce to $c^y+s=w$ and $c^o=Rs$. Therefore, the utility maximization problem reduces to maximizing $U(w-s,Rs)$, which is \eqref{eq:UMP}. 

Profit maximization is the same in Definition \ref{defn:eq} and Lemma \ref{lem:eq}.

Finally, using the capital market clearing condition \eqref{eq:k_clear}, asset market clearing condition \eqref{eq:x_clear}, and the definition $s_t=i_t+P_tx_t$, we obtain
\begin{equation*}
    N_ts_t=N_t(i_t+P_tx_t)=N_ti_t+P_tN_tx_t=K_{t+1}+P_t,
\end{equation*}
which is \eqref{eq:s_clear}.

Conversely, let the sequence $\set{(P_t,R_{t+1},w_t,s_t,K_t)}_{t=0}^\infty$ satisfy the conditions in Lemma \ref{lem:eq}. Define $c_t^y=w_t-s_t$, $c_{t+1}^o=R_{t+1}s_t$, $i_t=K_{t+1}/N_t$, $x_t=1/N_t$, and $L_t=N_t$. Using the budget constraints \eqref{eq:budget}, it is clear that utility maximization, profit maximization, labor market clearing, capital market clearing \eqref{eq:k_clear}, and asset market clearing \eqref{eq:x_clear} hold. Therefore, it suffices to show the commodity market clearing condition \eqref{eq:c_clear}. 

By the definition of $c_t^y$, $L_t=N_t$, and asset market clearing \eqref{eq:s_clear}, we obtain
\begin{equation*}
    N_tc_t^y=N_t(w_t-s_t)=w_tL_t-(K_{t+1}+P_t).
\end{equation*}
Similarly, the definition of $c_t^o$ and \eqref{eq:s_clear} imply
\begin{equation*}
    N_{t-1}c_t^o=N_{t-1}R_ts_{t-1}=R_t(K_t+P_{t-1}).
\end{equation*}
Taking the sum and using the definition $i_t=K_{t+1}/N_t$, we obtain
\begin{align*}
    N_t(c_t^y+i_t)+N_{t-1}c_t^o&=w_tL_t-(K_{t+1}+P_t)+K_{t+1}+R_t(K_t+P_{t-1})\\
    &=(R_tK_t+w_tL_t)+R_tP_{t-1}-P_t=F_t(K_t,L_t)+D_t,
\end{align*}
where the last equality follows from zero profit and the no-arbitrage condition \eqref{eq:R}. Therefore, the commodity market clearing condition \eqref{eq:c_clear} holds. \hfill \qedsymbol

\subsection{Proof of Proposition \ref{prop:exist}}

The idea of the proof is similar to \citet{BalaskoShell1980} and \citet{Wilson1981}, who consider endowment economies, as well as \citet{LeVanPham2016}, who consider an infinite-horizon production economy with capital and productive asset. For each $T\in \N$, we define a $T$-equilibrium as follows.

\begin{defn}\label{defn:Teq}
The sequence \eqref{eq:seq1} is a \emph{$T$-equilibrium} if \ref{item:eq1} individual optimization holds for $t=0,1,\dots,T-1$, \ref{item:eq2} profit maximization holds for $t=0,1,\dots,T$, \ref{item:eq4} labor market clearing holds for $t=0,1,\dots,T$, and \ref{item:eq5} capital and asset market clearing holds for $t=0,1,\dots,T-1$.
\end{defn}

We first prove the existence of a $T$-equilibrium. To this end, for each $T\in \N$, we define a $T$-truncated Arrow-Debreu economy $\cE_T$ as follows.
\begin{itemize}
    \item Time is denoted by $t=-1,0,1,\dots,T$. For each $t\ge -1$, there is a consumption good with (date 0) price $q_t$. For each $t\ge 0$, there is labor service with (date 0) price $\omega_t$.
    \item For each $t$, there are homogeneous agents with population $N_t>0$ with the following preferences and endowments.
    \begin{enumerate*}
        \item Each agent in generation $t=-1$ has utility $u_0(c_0^o)=c_0^o$ over date 0 consumption $c_0^o$ and is endowed with $K_0/N_{-1}$ units of date $t=-1$ good and $D_t/N_{-1}$ units of date $t\ge 0$ good.
        \item Each agent in generation $t=0,1,\dots,T-1$ has utility $U_t(c_t^y,c_{t+1}^o)$ over date $(t,t+1)$ consumption $(c_t^y,c_{t+1}^o)$ and is endowed with a unit of date $t$ labor service.
        \item Each agent in generation $T$ has utility $u_T(c_T^y)=c_T^y$ over date $T$ consumption $c_T^y$ and is endowed with a unit of date $T$ labor service.
    \end{enumerate*}
    \item For each $t\ge 0$, there is a firm that uses date $t-1$ consumption good $K_t$ and date $t$ labor $L_t$ as inputs to produce the date $t$ consumption good. Let $F_t(K_t,L_t)$ be the production function.
\end{itemize}
A competitive equilibrium of $\cE_T$ consists of sequences of prices $\set{q_t}_{t=-1}^T$, $\set{\omega_t}_{t=0}^T$, consumption $\set{(c_t^y,c_t^o)}_{t=0}^T$, and inputs $\set{(K_t,L_t)}_{t=0}^T$ such that,
\begin{enumerate}
    \item (Utility maximization) For each $t=0,\dots,T-1$, $(c_t^y,c_{t+1}^o)$ maximizes utility $U_t$ subject to the budget constraint
    \begin{equation*}
        q_tc_t^y+q_{t+1}c_{t+1}^o\le \omega_t.
    \end{equation*}
    Furthermore, $q_0c_0^o=q_{-1}K_0/N_{-1}+\sum_{t=0}^Tq_tD_t/N_{-1}$ and $q_Tc_T^y=\omega_T$.
    \item (Profit maximization) For each $t=0,\dots,T$, firm $t$ maximizes the profit
    \begin{equation*}
        q_tF_t(K_t,L_t)-q_{t-1}K_t-\omega_tL_t.
    \end{equation*}
    \item (Commodity market clearing) For each $t=0,\dots,T$, the commodity market clears:
    \begin{equation*}
        N_tc_t^y+N_{t-1}c_t^o+K_{t+1}=F_t(K_t,L_t)+D_t,
    \end{equation*}
    where $K_{T+1}=0$.
    \item (Labor market clearing) For each $t=0,1,\dots,T$, the labor market clears: $L_t=N_t$.
\end{enumerate}
Note that since $F_t$ is homogeneous of degree 1, the maximized profit is zero, so we do not need to specify the ownership of firms.

\begin{lem}\label{lem:Teq}
A $T$-equilibrium exists.
\end{lem}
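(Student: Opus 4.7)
The strategy is to recognize $\cE_T$ as a finite-dimensional Arrow-Debreu economy and apply a standard existence theorem, then translate the resulting prices and allocations into the $T$-equilibrium objects of Definition \ref{defn:Teq}. The first step is to bound the feasible allocation set: since $N_t$ is finite, labor is inelastically supplied, and $F_t$ is CRS, output at date $t$ is bounded above by $F_t(K_t, N_t)$, and iterating this bound starting from the given $K_0$ together with the finite dividend stream $\set{D_t}_{t=0}^T$ produces uniform upper bounds on every $K_t$, $L_t$, and date-$t$ consumption good. After truncating production and consumption sets by these bounds, a standard Arrow-Debreu existence result (e.g., Debreu 1959, or a Kakutani-based argument in the style of \citet{BalaskoShell1980}) applies: each $U_t$ is continuous, strictly increasing, and quasi-concave on $\R_{++}^2$; the boundary utilities $u_0, u_T$ are linear hence also continuous, monotone, and concave; every production set is closed, convex, and contains the origin (as $F_t$ is concave, CRS, and $F_t(0,0)=0$); and the survival condition holds because the initial old are endowed with positive amounts of every dated consumption good while every other cohort owns its own labor.

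A delicate point is ensuring that the equilibrium consumption lies in the interior of $\R_{++}^2$ (the domain of each $U_t$) and that $K_t > 0$ for every $t \leq T$. I would settle this by a forward induction using the Inada condition $f'(0) = \infty$ of Assumption \ref{asmp:f}: given $K_t > 0$, profit maximization and strict concavity of $F_t$ give $w_t = f(k_t) - k_t f'(k_t) > 0$, and the Inada condition then forces any utility-maximizing agent facing a positive wage to save a strictly positive amount, which through market clearing yields $K_{t+1} > 0$. Starting from the exogenous $K_0 > 0$ this propagates to every date, and monotonicity together with quasi-concavity of $U_t$ delivers $(c_t^y, c_{t+1}^o) \in \R_{++}^2$.

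Finally, I would map the Arrow-Debreu prices $\set{q_t}, \set{\omega_t}$ and allocations into the $T$-equilibrium variables by setting $R_t = q_{t-1}/q_t$, $w_t = \omega_t/q_t$, $L_t = N_t$, $i_t = K_{t+1}/N_t$, $x_t = 1/N_t$, and $P_t$ defined by the backward recursion $P_T = 0$, $P_{t-1} = (P_t + D_t)/R_t$. By construction the no-arbitrage relation \eqref{eq:R} holds, so utility maximization with the asset at prices $(P_t, R_{t+1})$ is equivalent (via the reduction in Lemma \ref{lem:eq}) to the Arrow-Debreu utility maximization in $\cE_T$; profit maximization and labor market clearing transfer verbatim, and capital and asset market clearing for $t \leq T-1$ follow directly from the definitions. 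The main obstacle is the interior/Inada step, where the specific OLG production structure rather than a generic finite-dimensional Arrow-Debreu argument is really being invoked; the CRS issue is handled routinely by truncation and a posteriori verification that the truncation bounds are not binding.
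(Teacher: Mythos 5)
Your proposal is correct in its core and follows essentially the same route as the paper: apply a standard Arrow--Debreu existence theorem to the finite truncated economy $\cE_T$ (the paper cites \citet{ArrowDebreu1954}), then translate the resulting prices and allocations into $T$-equilibrium objects via $R_t=q_{t-1}/q_t$, $w_t=\omega_t/q_t$, the backward recursion $P_T=0$, $P_{t-1}=(P_t+D_t)/R_t$, and $i_t=K_{t+1}/N_t$, $x_t=1/N_t$, with no-arbitrage holding by construction.

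The one genuine problem is your ``delicate point.'' The forward induction you propose invokes the Inada condition $f'(0)=\infty$ from Assumption \ref{asmp:f}, but that assumption is not available here: Lemma \ref{lem:Teq} is a step in the proof of Theorem \ref{thm:exist}, which is stated in the generality of \S\ref{subsec:model_tirole} (neoclassical $F_t$ with positive partial derivatives, no Inada condition), and Remark \ref{rem:exist} stresses that the whole point of the theorem is that it requires only standard conditions such as quasi-concavity. Fortunately, the step is also unnecessary: Definition \ref{defn:Teq} does not require $K_t>0$ or interior allocations, only optimization and the stated market-clearing conditions for the indicated dates; positivity of the date-$0$ prices follows from strict monotonicity of $U_t$ alone, and strict positivity of capital in equilibrium is established only later (Lemma \ref{lem:k}), under the additional Assumptions \ref{asmp:G}, \ref{asmp:f}. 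Dropping the interiority step leaves a proof that coincides with the paper's.
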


\begin{proof}
By standard results \citep{ArrowDebreu1954}, the $T$-truncated Arrow-Debreu economy $\cE_T$ has a competitive equilibrium. The strict monotonicity of $U_t$ implies $q_t>0$ for all $t$. Define $R_t=q_{t-1}/q_t>0$ and $w_t=\omega_t/q_t$. Define $\set{P_t}_{t=0}^T$ recursively by $P_T=0$ and $P_{t-1}=(P_t+D_t)/R_t$. Finally, define capital investment by $i_t=K_{t+1}/N_t$ using \eqref{eq:k_clear} and asset holdings $x_t=1/N_t$ using \eqref{eq:x_clear}. If we define variables at time $t>T$ arbitrarily, this Arrow-Debreu equilibrium is part of the $T$-equilibrium in Definition \ref{defn:Teq}.
\end{proof}

\begin{proof}[Proof of Proposition \ref{prop:exist}]
We first bound equilibrium quantities. Define the sequence $\set{\bar{K}_t}_{t=0}^\infty$ by $\bar{K}_0=K_0>0$ and $\bar{K}_{t+1}=F_t(\bar{K_t},N_t)$ for $t\ge 0$. Since $F_t$ has positive partial derivatives, we have $\bar{K}_t>0$ for all $t$. If an equilibrium exists, market clearing \eqref{eq:s_clear} and the homogeneity of $F_t$ imply
\begin{equation*}
    0\le K_{t+1}+P_t=N_ts_t\le w_tN_t\le R_tK_t+w_tN_t=F_t(K_t,N_t).
\end{equation*}
By induction, we must have
\begin{align*}
    &0\le K_t\le \bar{K}_t, & 
    &0\le P_t\le F_t(\bar{K}_t,N_t)\eqqcolon \bar{P}_t,\\
    &0\le s_t\le F_t(\bar{K}_t,N_t)/N_t\eqqcolon \bar{f}_t, &
    &0\le w_t\le F_t(\bar{K}_t,N_t)/N_t\eqqcolon \bar{f}_t.
\end{align*}
Young's budget constraint \eqref{eq:budget_young} implies the bound $0\le c_t^y\le w_t\le \bar{f}_t$. The commodity market clearing condition \eqref{eq:c_clear} implies the bound
\begin{equation}
    0\le c_t^o\le (F_t(\bar{K}_t,N_t)+D_t)/N_{t-1}\eqqcolon \bar{c}_t^o. \label{eq:co_ub}
\end{equation}
Finally, we bound the date 0 prices $q_t,\omega_t$. Normalize $q_0=1$. Since $F_t$ is concave and continuously differentiable, $F_{t,K}(K,L)>0$ is continuous and decreasing in $K$. By profit maximization and $K_t\le \bar{K}_t$, we have
\begin{equation*}
    \frac{q_{t-1}}{q_t}=R_t\ge F_{t,K}(K_t,N_t)\ge F_{t,K}(\bar{K}_t,N_t)\eqqcolon \bar{R}_t.
\end{equation*}
Therefore, $q_t=1/\prod_{s=1}^tR_s\le 1/\prod_{s=1}^t\bar{R}_s\eqqcolon \bar{q}_t$. Obviously,
\begin{equation*}
    0\le \omega_t=q_tw_t\le \bar{q}_t\bar{f}_t\eqqcolon \bar{\omega}_t.
\end{equation*}

Collect the equilibrium quantities as $\xi_t=(q_t,\omega_t,c_t^y,c_t^o,K_t,P_t)$ and set $\xi=(\xi_t)$. Define the nonempty compact set
\begin{equation*}
    \Xi_t=[0,\bar{q}_t]\times [0,\bar{\omega}_t]\times [0,\bar{f}_t]\times [0,\bar{c}_t^o]\times [0,\bar{K}_t]\times [0,\bar{P}_t]\subset \R^6
\end{equation*}
and $\Xi=\prod_{t=0}^\infty \Xi_t$, where we endow $\Xi$ with the product topology induced by the Euclidean topology on $\Xi_t\subset \R^6$. By Tychonoff's theorem, $\Xi$ is compact. Let $E_T\subset \Xi$ be the set of $\xi=(\xi_t)$ that induces a $T$-equilibrium. By Lemma \ref{lem:Teq}, $E_T$ is nonempty. Standard arguments show that $E_T$ is closed. Since $E_1\supset E_2\supset\dotsb$ and $\Xi$ is compact, we have $E\coloneqq \bigcap_{t=1}^\infty E_t\neq \emptyset$. 

Take $\xi=(\xi_t)\in E$. Since $\xi$ also induces a $T$-equilibrium for all $T$, it must be $q_t>0$, for otherwise utility maximization does not hold. Define $R_t=q_{t-1}/q_t>0$, $w_t=\omega_t/q_t$, and $s_t=w_t-c_t^y$. By the proof of Lemma \ref{lem:Teq}, the no-arbitrage condition \eqref{eq:R} holds. Therefore, all conditions in Lemma \ref{lem:eq} hold, so we have a rational expectations equilibrium.
\end{proof}

\subsection{Proof of Lemma \ref{lem:k}}

To simplify notation, we suppress time subscripts. Using the homogeneity of $F$ and the definition of $f$, the profit \eqref{eq:profit} can be written as
\begin{equation*}
    F(K,L)-RK-wL=L(f(k)-Rk-w)\eqqcolon L\pi(k),
\end{equation*}
where $k=K/L$. Then $\pi'(k)=f'(k)-R\to\infty$ as $k\downarrow 0$, so the optimal $k$ must be $k>0$ and satisfies $R=f'(k)\ge f'(\infty)$ due to the concavity of $f$.

Using $k=K/L$, the profit may also be written as
\begin{equation*}
    F(K,L)-RK-wL=K\left(\frac{1}{k}f(k)-R-\frac{w}{k}\right)=K\frac{\pi(k)}{k}.
\end{equation*}
Since $k>0$, profit maximization implies
\begin{equation*}
    0=\frac{\diff}{\diff k}\frac{\pi(k)}{k}=\frac{(f'(k)-R)k-(f(k)-Rk-w)}{k^2} \iff w=f(k)-kf'(k).
\end{equation*}
Clearly $w=F_L(K,L)>0$. Using market clearing \eqref{eq:s_clear} and the homogeneity of $F$, we obtain
\begin{equation}
    \max\set{K_{t+1},P_t}\le  K_{t+1}+P_t=N_ts_t\le w_tN_t\le R_tK_t+w_tN_t=F(K_t,N_t). \label{eq:Kbound}
\end{equation}
Dividing both sides by $N_t$, we obtain $Gk_{t+1}\le f(k_t)$ and $p_t\le f(k_t)$.

Finally, we show that $\set{k_t}$ is bounded.
By Assumption \ref{asmp:f}, $f$ is increasing, concave, and $f'(\infty)<G$. Therefore, we can take constants $a\in (0,1)$ and $b\ge 0$ such that
\begin{equation*}
    0\le k_{t+1}\le \frac{1}{G}f(k_t)\le ak_t+b.
\end{equation*}
Iterating this inequality yields
\begin{equation*}
    k_t\le a^t\left(k_0-\frac{b}{1-a}\right)+\frac{b}{1-a}.
\end{equation*}
Letting $t\to\infty$, we obtain $\limsup_{t\to\infty}k_t\le b/(1-a)$, so $\set{k_t}$ is bounded. \hfill \qedsymbol

\subsection{Proof of Lemma \ref{lem:s}}

The first-order condition for the utility maximization problem is
\begin{equation*}
    \Phi(s,w,R)\coloneqq -u'(w-s)+Rv'(Rs)=0.
\end{equation*}
By assumption, $\Phi$ is continuous and strictly decreasing in $s$, strictly increasing in $w$, and increasing in $R$. Since $u'(0)=v'(0)=\infty$, we have $\Phi(0,w,R)=\infty$, and $\Phi(w,w,R)=-\infty$. By the intermediate value theorem, there exists a unique $s=s(w,R)\in (0,w)$ such that $\Phi(s,w,R)=0$, which achieves the unique maximum of $U(w-s,Rs)=u(w-s)+v(Rs)$. Since $u',v'$ are continuous, so is $s$.

To show the monotonicity of $s$, let $w^1<w^2$ and $R^1\le R^2$. Fixing $R$, we obtain
\begin{equation*}
    \Phi(s(w^2,R),w^2,R)=0=\Phi(s(w^1,R),w^1,R)< \Phi(s(w^1,R),w^2,R).
\end{equation*}
Since $\Phi$ is strictly decreasing in $s$, we obtain $s(w^2,R)>s(w^1,R)$, so $s$ is strictly increasing in $w$. Similarly, fixing $w$, we obtain
\begin{equation*}
    \Phi(s(w,R^2),w,R^2)=0=\Phi(s(w,R^1),w,R^1)\le \Phi(s(w,R^1),w,R^2).
\end{equation*}
Since $\Phi$ is strictly decreasing in $s$, we obtain $s(w,R^1)\le s(w,R^2)$, so $s$ is increasing in $R$. \hfill \qedsymbol

\subsection{Proof of Lemma \ref{lem:g}}
Let $\Phi(x,k,p)$ be the left-hand side of \eqref{eq:xeq}. By Lemma \ref{lem:s} and Assumption \ref{asmp:f}, $x\mapsto s(f(k)-kf'(k),f'(x))$ is decreasing. Therefore, $\Phi$ is strictly increasing in $x$, so \eqref{eq:xeq} has at most one solution denoted by $x=g(k,p)$.

\medskip
\noindent
\ref{item:g1} Fix $\epsilon>0$. If $x\in (0,\epsilon)$, then
\begin{equation*}
    \Phi(x,k,0)=Gx-s(f(k)-kf'(k),f'(x))\le Gx-s(f(k)-kf'(k),f'(\epsilon)).
\end{equation*}
Therefore,
\begin{equation*}
    \lim_{x\downarrow 0}\Phi(x,k,0)\le -s(f(k)-kf'(k),f'(\epsilon))<0.
\end{equation*}
Similarly, since $s(w,R)<w$, we obtain
\begin{equation}
    \Phi(x,k,0)=Gx-s(f(k)-kf'(k),f'(x))>Gx-(f(k)-kf'(k)), \label{eq:xk0}
\end{equation}
so $\lim_{x\to\infty}\Phi(x,k,0)=\infty$. Therefore, $x=g(k,0)$ exists, so $(k,0)\in \dom g$. Furthermore, setting $x=k$ in \eqref{eq:xk0}, dividing by $k$, and letting $k\to\infty$, we obtain
\begin{equation*}
    \frac{\Phi(k,k,0)}{k}>G-\frac{f(k)}{k}+f'(k)\to G-f'(\infty)+f'(\infty)=G>0
\end{equation*}
by l'H\^opital's theorem. Therefore, $g(k,0)<k$ for large enough $k$.

\medskip
\noindent
\ref{item:g2} To show the continuity of $g$, suppose $(k_n,p_n)\to (k,p)$. Any root $x_n\coloneqq g(k_n,p_n)$ is locally bounded because $Gx_n+p_n=s(\omega(k_n),f'(x_n))\le \omega(k_n)$. Take any convergent subsequence $x_{n_j}\to x$. Then the continuity of $\Phi$ implies $\Phi(x,k,p)=0$, and uniqueness implies $x=g(k,p)$. Therefore, since the limit is unique, we have $g(k_n,p_n)=x_n\to x=g(k,p)$, so $g$ is continuous.

Since $(f(k)-kf'(k))'=-kf''(k)>0$, by Lemma \ref{lem:s} $\Phi$ is strictly decreasing in $k$. Clearly $\Phi$ is strictly increasing in $p$. The rest of the proof is the same as Lemma \ref{lem:s}.

\medskip
\noindent 
\ref{item:g3} Let $(k,p)\in \dom g$, $k'\ge k$, and $0\le p'\le p$. By the definition of $g$ and the monotonicity of $\Phi$, we have
\begin{align*}
    \Phi(g(k,p),k',p')\le \Phi(g(k,p),k,p)=0.
\end{align*}
Since $s(w,R)<w$, for $w'=f(k')-k'f'(k')$ we obtain
\begin{equation*}
    \Phi(x,k',p')=Gx+p'-s(w',f'(x))\ge Gx-w'\to \infty
\end{equation*}
as $x\to\infty$, so there exists a unique $x\ge g(k,p)$ such that $\Phi(x,k',p')=0$. Therefore, $(k',p')\in \dom g$ and \eqref{eq:gineq} holds. \hfill \qedsymbol

\subsection{Proof of Proposition \ref{prop:p0}}

\ref{item:p0_interval} Proposition \ref{prop:exist} implies $\cP_0\neq\emptyset$. Suppose $p_0^1,p_0^2\in \cP_0$ with $p_0^1<p_0^2$ and let $p_0\in [p_0^1,p_0^2]$. Let $\set{(k_t^j,p_t^j)}_{t=0}^\infty$ be detrended capital and asset price corresponding to $p_0^j$. By assumption, we have $k_0^j=k_0$.

Let us show by induction that there exists a unique sequence $\set{(k_t,p_t)}_{t=0}^T$ satisfying \eqref{eq:system}, $p_t\in [p_t^1,p_t^2]$, and $k_t\in [k_t^2,k_t^1]$. If $T=0$, the claim is trivial because $p_0\in [p_0^1,p_0^2]$ and  $k_0=k_0^1=k_0^2$ are given. Suppose the claim holds up to $T-1$ and consider $T$. By the induction hypothesis, there exist unique $p_t\in [p_t^1,p_t^2]$ and $k_t\in [k_t^2,k_t^1]$ for $t\le T-1$. Using \eqref{eq:system_k} and \eqref{eq:gineq}, we obtain
\begin{equation*}
    k_T^2=g(k_{T-1}^2,p_{T-1}^2)\le g(k_{T-1},p_{T-1})\le g(k_{T-1}^1,p_{T-1}^1)=k_T^1,
\end{equation*}
so $k_T\in [k_T^2,k_T^1]\subset (0,\infty)$ is uniquely defined by \eqref{eq:system_k}. Letting $R_t^j=f'(k_t^j)$ and $R_t=f'(k_t)$, $k_T\in [k_T^2,k_T^1]$ and $f''<0$ imply $R_T^1\le R_T\le R_T^2$. Therefore, \eqref{eq:system_p} implies
\begin{equation*}
    p_T^1=\frac{R_T^1}{G}p_{T-1}^1-d_T\le \frac{R_T}{G}p_{T-1}-d_T\le \frac{R_T^2}{G}p_{T-1}^2-d_T=p_T^2,
\end{equation*}
so $p_T\in [p_T^1,p_T^2]\subset [0,\infty)$ is uniquely defined by \eqref{eq:system_p}. Thus the claim holds for $T$ as well. By induction, $\set{(k_t,p_t)}_{t=0}^\infty$ satisfies the equilibrium system \eqref{eq:system}, so $p_0\in \cP_0$. Therefore, $\cP_0$ is an interval.

To show that $\cP_0$ is compact, define the sequence $\set{\bar{k}_t}_{t=0}^\infty$ by $\bar{k}_0=k_0>0$ and $\bar{k}_{t+1}=f(\bar{k}_t)/G>0$. Using the market clearing condition \eqref{eq:s_clear} and the fact that the savings function satisfies
\begin{equation*}
    s(w,R)<w=f(k)-kf'(k)<f(k),
\end{equation*}
it follows that in any equilibrium, we have $k_t\in [0,\bar{k}_t]$ and $p_t\in [0,f(\bar{k}_t)]$. Let $\ubar{p}_0=\inf \cP_0$ and $\bar{p}_0=\sup \cP_0\le f(k_0)$. To show that $\cP_0$ is compact, it suffices to show that $\ubar{p}_0,\bar{p}_0\in \cP_0$.

By the definition of $\ubar{p}_0$, we can take a decreasing sequence $\set{p_0^n}_{n=1}^\infty\subset \cP_0$ such that $p_0^n\downarrow \ubar{p}_0$. Let $\set{(k_t^n,p_t^n)}_{t=0}^\infty$ be the corresponding path. By the above proof, $k_t^n\in [0,\bar{k}_t]$ is increasing in $n$ and $p_t^n\in [0,f(\bar{k}_t)]$ is decreasing in $n$, so they converge to some $(k_t,p_t)$ with $p_0=\ubar{p}_0$. Since $k_t^n>0$ is increasing in $n$, we have $k_t>0$. Clearly $\set{(k_t,p_t)}_{t=0}^\infty$ satisfies the equilibrium system \eqref{eq:system}, so we have an equilibrium. Therefore, $\ubar{p}_0\in \cP_0$.

The argument for $\bar{p}_0$ is similar, except that now $k_t^n$ is decreasing in $n$ and could converge to 0. To show that this never occurs, let $k_t^n\to k_t$, suppose to the contrary that $k_t=0$, and that $t$ is the smallest such $t$. Then $w_{t-1}^n\to w_{t-1}=f(k_{t-1})-k_{t-1}f'(k_{t-1})>0$ (because $k_{t-1}>0$ and by Lemma \ref{lem:k}) and $R_t^n=f'(k_t^n)\to f'(0)=\infty$ as $n\to \infty$. Choose $w,R>0$ such that $w_{t-1}^n>w$ and $R_t^n>R$ for large $n$. Then the old's consumption becomes
\begin{equation}
    c_t^{o,n}=R_t^ns(w_{t-1}^n,R_t^n)\ge R_t^ns(w,R)\to \infty \label{eq:co_inf}
\end{equation}
by Assumption \ref{asmp:s}, contradicting the uniform bound in \eqref{eq:co_ub}. Therefore, $k_t>0$ for all $t$, we have an equilibrium, and $\bar{p}_0\in \cP_0$.

\medskip
\noindent
\ref{item:p0_monotone}
The proof of $k_t>k_t'$ and $p_t<p_t'$ follows from the proof of \ref{item:p0_interval} and using the strict monotonicity of $g$ established in Lemma \ref{lem:g}\ref{item:g2}. Since $f''<0$, and $(f(k)-kf'(k))'=-kf''(k)>0$, we obtain $R_t=f'(k_t)<f'(k_t')=R_t'$ and
\begin{equation*}
    w_t=f(k_t)-k_tf'(k_t)>f(k_t')-k_t'f'(k_t')=w_t'.
\end{equation*}
Since $R_t<R_t'$, the fundamental values satisfy
\begin{equation*}
    V_t\coloneqq \sum_{s=1}^\infty \frac{D_{t+s}}{R_{t+1}\dotsb R_{t+s}}\ge \sum_{s=1}^\infty \frac{D_{t+s}}{R_{t+1}'\dotsb R_{t+s}'}\eqqcolon V_t'.
\end{equation*}
Dividing both sides by $G^t$ yields $v_t\ge v_t'$. Finally, $p_t<p_t'$ and $v_t\ge v_t'$ imply $b_t=p_t-v_t<p_t'-v_t'=b_t'$. \hfill \qedsymbol

\section{Proof of \S\ref{sec:main} results}

\subsection{Proof of Lemma \ref{lem:diamond}}
\noindent \ref{item:diamond_converge} Let $m(k)\coloneqq g(k,0)$. By Lemma \ref{lem:g}, $m$ is continuous, strictly increasing, and $m(k)<k$ for large enough $k$. Regardless of $k_1^*=m(k_0^*)\le k_0^*$ or $k_1^*>k_0^*$, by induction $\set{k_t^*}$ is monotone and hence converges to some $k^*\in [0,\infty]$. Since $m(k)<k$ for large enough $k$, it cannot be $k^*=\infty$. By continuity, it must be $k^*\in \set{0}\cup \cK^*$.

Suppose $g(k,0)>k$ for small enough $k>0$ and let $h(k)\coloneqq g(k,0)-k$. By assumption, there exists $a>0$ such that $h(k)>0$ for $k\in (0,a)$. By Lemma \ref{lem:g}\ref{item:g1}, there exists $b>0$ such that $h(k)<0$ for $k>b$. Since $h$ is continuous, by the intermediate value theorem there exists $k\in [a,b]$ such that $h(k)=0$, so $\cK^*$ is nonempty. Since $\cK^*=\set{k\in [a,b]:h(k)=0}$ is a closed subset of the compact set $[a,b]$, it is compact. Finally, we show $k^*>0$ and hence $k^*\in \cK^*$. If $\set{k_t^*}$ is increasing, we have $k^*\ge k_0>0$. Suppose $\set{k_t^*}$ is decreasing. If $k^*=0$, then $k_t^*$ becomes arbitrarily small, so $k_{t+1}^*=g(k_t^*,0)>k_t^*$, contradicting monotonicity. Therefore $k^*>0$.

\medskip
\noindent \ref{item:diamond_ub} By definition, $k_0^*=k_0$. If $k_t\le k_t^*$, since $p_t\ge 0$, Lemma \ref{lem:g}\ref{item:g3} implies $k_{t+1}=g(k_t,p_t)\le g(k_t^*,0)=k_{t+1}^*$. By induction, $k_t\le k_t^*$ for all $t$. \hfill \qedsymbol

\subsection{Proof of Theorem \ref{thm:bubbleless}}

Take the $T$-equilibrium established in Lemma \ref{lem:Teq}. Let $R_t=q_{t-1}/q_t>0$ and define $\set{P_t}_{t=0}^T$ recursively by $P_T=0$ and $P_{t-1}=(P_t+D_t)/R_t$. Let $k_t=K_t/G^t$ and $p_t=P_t/G^t$. Then \eqref{eq:system} holds for $t=0,\dots,T-1$. By the proof of Proposition \ref{prop:exist}, for fixed $t$, the sequence $\set{(k_t^T,p_t^T)}_{T=t}^\infty$ is uniformly bounded, namely it belongs to the compact set $[0,\bar{K}_t/G^t]\times [0,\bar{P}_t/G^t]$. Therefore, applying the diagonal argument, we can take a subsequence $T_1<T_2<\dotsb$ such that for each $t$, we have $(k_t^{T_n},p_t^{T_n})\to(k_t,p_t)$ as $n\to\infty$. By the same argument as in the proof of Proposition \ref{prop:p0} (see around \eqref{eq:co_inf}), we have $k_t>0$. Clearly, $\set{(k_t,p_t)}_{t=0}^\infty$ satisfies the equilibrium system \eqref{eq:system}, so we have an equilibrium. Let $R_t=f'(k_t)$ and define the date 0 price $q_t\coloneqq 1/\prod_{s=1}^t R_s$. Similarly, let $q_t^{T_n}$ be the date 0 price in the $T_n$-equilibrium and $q_t^*\coloneqq 1/\prod_{s=1}^tR_s^*$. By the same induction as in Lemma \ref{lem:diamond}\ref{item:diamond_ub}, every $T$-equilibrium satisfies $k_t^T\le k_t^*$ and $R_t^T\ge R_t^*$ for $t=0,\dots,T$, so $q_t^T\le q_t^*$. Furthermore, $q_t^{T_n}\to q_t\le q_t^*$ as $n\to\infty$.

For every fixed $T\in \N$, and all sufficiently large $n$ with $T_n>T$, we have
\begin{equation*}
    0\le p_0^{T_n}-\sum_{t=1}^T q_t^{T_n}D_t=\sum_{t=T+1}^{T_n}q_t^{T_n}D_t\le \sum_{t=T+1}^\infty q_t^*D_t.
\end{equation*}
Letting $n\to\infty$, we obtain
\begin{equation*}
    0\le p_0-\sum_{t=1}^T q_tD_t\le \sum_{t=T+1}^\infty q_t^*D_t.
\end{equation*}
Letting $T\to\infty$ and using \eqref{eq:bubbleless_cond}, we obtain $p_0=\sum_{t=1}^\infty q_tD_t\eqqcolon v_0$, so $p_0$ is bubbleless. By Corollary \ref{cor:unique_bubbleless}, the bubbleless equilibrium is unique.

We next show that $G_d<R^*\coloneqq f'(k^*)$ implies \eqref{eq:bubbleless_cond}. Since $\limsup_{t\to\infty}D_t^{1/t}=G_d$ and $R_t^*\to R^*$, we can take $\epsilon>0$, $R<R^*$, and $T>0$ such that
\begin{equation*}
    D_t^{1/t}<G_d+\epsilon<R<R_t^*
\end{equation*}
for $t\ge T$. Then for $t>T$, the $t$-th term in \eqref{eq:bubbleless_cond} can be bounded above by
\begin{equation*}
    \frac{(G_d+\epsilon)^t}{R_1^*\dotsb R_T^* R^{t-T}}=\frac{(G_d+\epsilon)^T}{R_1^*\dotsb R_T^*}\left(\frac{G_d+\epsilon}{R}\right)^{t-T},
\end{equation*}
which is summable. \hfill \qedsymbol

\subsection{Proof of Proposition \ref{prop:longrun}}

We need several lemmas to prove Proposition \ref{prop:longrun}. The following lemma shows that capital converges whenever $p_t\to 0$.

\begin{lem}\label{lem:kconverge}
Suppose Assumptions \ref{asmp:G}--\ref{asmp:s} hold. If $\lim_{t\to\infty}p_t=0$, then $k_t\to k^*\in \set{0}\cup \cK^*$.
\end{lem}

\begin{proof}
By Lemma \ref{lem:k}, $\set{k_t}$ is bounded. If $k_t\to k^*$, since $p_t\to 0$, it must be $k^*\in \set{0}\cup \cK^*$. Therefore, it suffices to show that $\set{k_t}$ converges.

Let $\ubar{k}\coloneqq \liminf_{t\to\infty}k_t\ge 0$  and $\bar{k}\coloneqq \limsup_{t\to\infty}k_t<\infty$. Suppose, toward a contradiction, that $\ubar{k}<\bar{k}$ and take any $k\in (\ubar{k},\bar{k})$. Suppose that $g(k,0)>k$. By the continuity of $g$, we can take $\eta>0$ such that $g(k,p)>k$ for $p\in [0,\eta]$. Since $p_t\to 0$, there exists $T$ such that $p_t\le \eta$ for $t\ge T$. Since $\limsup_{t\to\infty}k_t>k$, we can  choose $t\ge T$ such that $k_t>k$. By the monotonicity of $g$, we have
\begin{equation*}
    k_{t+1}=g(k_t,p_t)\ge g(k,\eta)>k.
\end{equation*}
By induction, we have $k_t>k$ for all sufficiently large $t$, so $\liminf_{t\to\infty}k_t\ge k>\ubar{k}$, which is a contradiction.

Therefore, $g(k,0)\le k$ for all $k\in (\ubar{k},\bar{k})$. Take any such $k$. Since $\liminf_{t\to\infty}k_t<k$, we can choose $t$ such that $k_t\le k$. Since $p_t\ge 0$, by the monotonicity of $g$, we have
\begin{equation*}
    k_{t+1}=g(k_t,p_t)\le g(k,0)\le k.
\end{equation*}
By induction, we have $k_t\le k$ for all sufficiently large $t$, so $\limsup_{t\to\infty}k_t\le k<\bar{k}$, which is a contradiction. 
\end{proof}

We next consider the following exhaustive and mutually exclusive cases:
\begin{enumerate*}
    \item (Lemma \ref{lem:Rinc}) $R_t\ge R_{t-1}$ for all $t$;
    \item (Lemma \ref{lem:Rdec1}) $R_t<R_{t-1}$ and $R_t\le G$ for some $t$;
    \item (Lemma \ref{lem:Rdec2}) $R_t<R_{t-1}$ for some $t$, and $R_t>G$ for any such $t$.
\end{enumerate*}

\begin{lem}\label{lem:Rinc}
Suppose Assumptions \ref{asmp:G}--\ref{asmp:D} hold and $R_t\ge R_{t-1}$ for all $t$. Then $k_t\downarrow k\in [0,\infty)$ and $R_t\uparrow R=f'(k)\in (0,\infty]$. Furthermore, exactly one of the following statements holds.
\begin{enumerate}
    \item\label{item:Rinc1} $R>G$, $p_t\to 0$, $k\in \set{0}\cup \cK^*$, and the equilibrium is bubbleless.
    \item\label{item:Rinc2} $R\le G$, $p_t\to 0$, $k\in \cK^*$, and the equilibrium is asymptotically bubbleless.
    \item\label{item:Rinc3} $R=G$, $p_t\to p>0$, $k=g(k,p)>0$, and the equilibrium is asymptotically bubbly.
\end{enumerate}
\end{lem}

\begin{proof}
Since $f''<0$, $R_t=f'(k_t)$ and $R_t\ge R_{t-1}$ imply $k_t\le k_{t-1}$. Hence the monotonic convergence of $\set{k_t}$, $\set{R_t}$ follow. If $\set{p_t}$ converges to $p$, then letting $t\to\infty$ in \eqref{eq:system_k}, we obtain $k=g(k,p)$.

Suppose $R>G$. By Lemma \ref{lem:impossible2}, the equilibrium is unique and bubbleless. Since $\set{R_t}$ is increasing, there exists $t_0$ such that $R_t\ge R_{t_0}>G$ for all $t\ge t_0$. Then for $t\ge t_0$, we have
\begin{equation}
    v_t=\sum_{s\ge 1}\frac{G^s}{R_{t+1}\cdots R_{t+s}}d_{t+s}\le \sum_{s\ge 1}d_{t+s}\to 0 \label{eq:vt_ub}
\end{equation}
as $t\to \infty$ because $\sum_{t}d_t<\infty$. Hence $p_t=v_t\to 0$ and statement \ref{item:Rinc1} holds by Lemma \ref{lem:kconverge}.

Therefore, in what follows, assume $R\le G$. Since $\set{R_t}$ is increasing, we have $R_t\le G$ for all $t$, so \eqref{eq:system_p} implies
\begin{equation*}
    0\le p_t=\frac{R_t}{G}p_{t-1}-d_t\le p_{t-1}
\end{equation*}
for all $t$. Therefore, $\set{p_t}$ converges to some $p\ge 0$. If $p=0$, then $0\le b_t\le p_t\to 0$, so the equilibrium is asymptotically bubbleless. Therefore, \ref{item:Rinc2} holds. If $p>0$, we can take $\ubar{p}>0$ such that $p_t\ge \ubar{p}$ for all $t$. Since
\begin{equation}
    \frac{G}{R_t}=\frac{p_{t-1}}{p_t+d_t}\le \frac{p_{t-1}}{p_t}, \label{eq:GR_ub}
\end{equation}
we can bound the fundamental component from above as
\begin{equation*}
    v_t=\sum_{s=1}^\infty \frac{G^s}{R_{t+1}\dotsb R_{t+s}}d_{t+s}\le \sum_{s=1}^\infty \frac{p_t}{p_{t+s}}d_{t+s}\le \frac{p_t}{\ubar{p}}\sum_{s=1}^\infty d_{t+s}\to \frac{p}{\ubar{p}}\cdot 0=0
\end{equation*}
as $t\to\infty$, so $b_t=p_t-v_t\to p>0$ and the equilibrium is asymptotically bubbly. Letting $t\to\infty$ in \eqref{eq:GR_ub}, we obtain $G/R=1$ and hence $R=G$. Therefore, \ref{item:Rinc3} holds.
\end{proof}

\begin{lem}\label{lem:Rdec1}
Suppose Assumptions \ref{asmp:G}--\ref{asmp:s} hold. If in equilibrium $R_t<R_{t-1}$ and $R_t\le G$ for some $t$, then $\set{(k_t,p_t,R_t)}$ converges to some $(k,0,R)$ satisfying $k\in \cK^*$ and $R=f'(k)$, and the equilibrium is asymptotically bubbleless.
\end{lem}

\begin{proof}
Since $f'(k_t)=R_t<R_{t-1}=f'(k_{t-1})$ and $f''<0$, we obtain $k_t>k_{t-1}$. Since $R_t\le G$, by \eqref{eq:system_p} we obtain
\begin{equation*}
    p_t=\frac{R_t}{G}p_{t-1}-d_t\le p_{t-1}.
\end{equation*}
By Lemma \ref{lem:g} and \eqref{eq:system_k}, we obtain $k_{t+1}=g(k_t,p_t)>g(k_{t-1},p_{t-1})=k_t$ and hence $R_{t+1}=f'(k_{t+1})<f'(k_t)=R_t\le G$. By induction, $G\ge R_t>R_{t+1}>\dots>0$, so $\set{R_t}$ converges to some $R\in [0,G)$. Take $\epsilon>0$ such that $R+\epsilon<G$. Then \eqref{eq:system_p} implies
\begin{equation*}
    p_t\le \frac{R+\epsilon}{G}p_{t-1}
\end{equation*}
for large enough $t$, so $p_t\to 0$. The rest of the proof is the same as Lemma \ref{lem:Rinc}.
\end{proof}

\begin{lem}\label{lem:Rdec2}
Suppose Assumptions \ref{asmp:G}--\ref{asmp:D} hold. Assume $R_{t_0}<R_{t_0-1}$ for some $t_0\ge 1$, and $R_t<R_{t-1}\implies R_t>G$. Then $R_t>G$ for $t\ge t_0$ and exactly one of the following statements is true.
\begin{enumerate}
    \item\label{item:Rdec21} $\set{k_t,p_t,R_t}$ converges to $(k,0,R)$ satisfying $k\in \set{0}\cup \cK^*$ and $R=f'(k)\ge G$, and the equilibrium is bubbleless.
    \item\label{item:Rdec22} $\set{k_t,p_t,R_t}$ converges to $(k,p,G)$ satisfying $k=g(k,p)$, $p>0$, and $G=f'(k)$, and the equilibrium is asymptotically bubbly.
\end{enumerate}
\end{lem}

\begin{proof}
By assumption, $R_{t_0}>G$. Suppose $R_t>G$ for some $t\ge t_0$. If $R_{t+1}\ge R_t$, then clearly $R_{t+1}>G$. If $R_{t+1}<R_t$, by assumption $R_{t+1}>G$. By induction, $R_t>G$ for all $t\ge t_0$.

Since $R_t>G$ for all sufficiently large $t$, we have $v_t\to 0$ by \eqref{eq:vt_ub}. If the equilibrium is bubbleless, then $p_t=v_t\to 0$, and Lemma \ref{lem:kconverge} implies $k\to \set{0}\cup \cK^*$. Since $R_t>G$ eventually, we have $R_t=f'(k_t)\to R\ge G$.

If the equilibrium is bubbly, let $p_t=v_t+b_t$ with $b_t>0$. The no-arbitrage condition implies that the growth rate of the detrended bubble satisfies
\begin{equation}
    \frac{b_{t+1}}{b_t}=\frac{R_{t+1}}{G}\ge 1 \label{eq:b_growth}
\end{equation}
for all sufficiently large $t$, so $\set{b_t}$ is eventually increasing. By Lemma \ref{lem:k}, $b_t\le p_t\le f(k_t)$ is bounded, so $\set{b_t}$ converges to some $b>0$. Letting $t\to\infty$ in $p_t=v_t+b_t$ and \eqref{eq:b_growth}, we obtain $R_t\to G$ and $p_t\to b$. Therefore, the equilibrium is asymptotically bubbly.
\end{proof}

\begin{proof}[Proof of Proposition \ref{prop:longrun}]
All claims are immediate from Lemmas \ref{lem:Rinc}--\ref{lem:Rdec2} except $R\ge G_d$ in case \ref{item:lr_asymbubbleless}. Suppose to the contrary that $R<G_d$. Take $\epsilon>0$ such that $R+\epsilon<G_d-\epsilon$. Then we can take large enough $T>0$ such that $R_t\le R+\epsilon$ for all $t\ge T$ and a subsequence $T\le t_1<t_2<\dotsb$ such that $D_{t_n}\ge (G_d-\epsilon)^{t_n}$ for all $n$. Then we can bound the fundamental value \eqref{eq:Vt1} from below as
\begin{equation*}
    V_T\ge \sum_{n=1}^\infty (R+\epsilon)^{T-t_n}(G_d-\epsilon)^{t_n}=(R+\epsilon)^T\sum_{n=1}^\infty\left(\frac{G_d-\epsilon}{R+\epsilon}\right)^{t_n}=\infty,
\end{equation*}
which is a contradiction.
\end{proof}

\subsection{Proof of Theorem \ref{thm:eqset}}

The proof uses three ingredients. Proposition \ref{prop:p0} implies that $\cP_0$ is a compact interval. Lemma \ref{lem:kconverge} characterizes equilibria for which $p_t\to 0$, while Lemma \ref{lem:right} shows that any such equilibrium converging to a positive fixed point can be perturbed upward to produce a continuum of equilibria.

\begin{lem}\label{lem:right}
Suppose Assumptions \ref{asmp:G}--\ref{asmp:D} hold, $g(k,0)>k$ for small enough $k>0$, and $\sup_{k\in \cK^*}f'(k)<G$, where $\cK^*$ is as in \eqref{eq:cK}. If $\set{(k_t,p_t)}_{t=0}^\infty$ is an equilibrium satisfying $(k_t,p_t)\to (k,0)$ with $k\in \cK^*$, for $\eta>0$ small enough, the sequence $\set{(k_t^\eta,p_t^\eta)}_{t=0}^{\infty}$ defined by $p_0^\eta=p_0+\eta$ and \eqref{eq:system} is a bubbly but asymptotically bubbleless equilibrium.
\end{lem}

\begin{proof}
By Lemma \ref{lem:diamond}, $\cK^*$ is nonempty and compact, so $0<\ubar{k}\coloneqq \min \cK^*$ exists. By assumption, $f'(\ubar{k})<G$. Take any equilibrium $\set{(k_t,p_t)}_{t=0}^\infty$ satisfying $(k_{t},p_t)\to (k,0)$ with $k\in \cK^*$. Then clearly $k\ge \ubar{k}$ and $f'(k)\le f'(\ubar{k})<G$. Take $\epsilon>0$ such that $f'(\ubar{k}-\epsilon)<G$. Since $g(k,0)>k$ for $k\in (0,\ubar{k})$, by the definition of $p(k)$ in \eqref{eq:pk} and the subsequent remark, we have $p(\ubar{k}-\epsilon)>0$. 

Since $(k_t,p_t)\to (k,0)$, we can take $T>0$ such that $k_T>\ubar{k}-\epsilon$ and $p_T<p(\ubar{k}-\epsilon)$. By continuity, for sufficiently small $\eta>0$, the sequence $\set{(k_t^\eta,p_t^\eta)}_{t=0}^T$ defined by $p_0^\eta=p_0+\eta$ and \eqref{eq:system} is well defined and satisfies $0<k_t^\eta\le k_t$, $p_t<p_t^\eta$ for all $t=0,\dots,T$ and $k_T^\eta>\ubar{k}-\epsilon$, $p_T^\eta<p(\ubar{k}-\epsilon)$. Let us show by induction on $t$ that the sequence $\set{(k_t^\eta,p_t^\eta)}_{t=0}^\infty$ is well defined and $\ubar{k}-\epsilon<k_t^\eta\le k_t$, $p_t<p_t^\eta<p(\ubar{k}-\epsilon)$ for all $t\ge T$. If $t=T$, the claim is obvious. If the claim holds for some $t$, then by \eqref{eq:system_k}, Lemma \ref{lem:g}, and \eqref{eq:pk}, we obtain
\begin{align*}
    k_{t+1}^\eta&=g(k_t^\eta,p_t^\eta)>g(\ubar{k}-\epsilon,p(\ubar{k}-\epsilon))=\ubar{k}-\epsilon,\\
    k_{t+1}^\eta&=g(k_t^\eta,p_t^\eta)\le g(k_t,p_t)=k_{t+1}.
\end{align*}
Using \eqref{eq:system_p}, we obtain
\begin{align*}
    p_{t+1}^\eta&=\frac{f'(k_{t+1}^\eta)}{G}p_t^\eta-d_{t+1}\le \frac{f'(\ubar{k}-\epsilon)}{G}p_t^\eta\le p_t^\eta<p(\ubar{k}-\epsilon),\\
    p_{t+1}^\eta&=\frac{f'(k_{t+1}^\eta)}{G}p_t^\eta-d_{t+1}> \frac{f'(k_{t+1})}{G}p_t-d_{t+1}=p_{t+1},
\end{align*}
so the claim also holds for $t+1$. By induction, $\set{(k_t^\eta,p_t^\eta)}_{t=0}^\infty$ is well defined and hence it is an equilibrium. It is bubbly because $p_t^\eta>p_t$ implies $b_t^\eta>b_t\ge 0$ by Proposition \ref{prop:p0}\ref{item:p0_monotone}. Since the construction works for any sufficiently small $\eta>0$, take such $\eta<\eta'$. Because the equilibrium associated with $\eta'$ has a higher initial price, the equilibrium associated with $\eta<\eta'$ is not the maximum price equilibrium; Corollary \ref{cor:unique_bubble} implies that it is asymptotically bubbleless.
\end{proof}

\begin{proof}[Proof of Theorem \ref{thm:eqset}]
By Proposition \ref{prop:p0}\ref{item:p0_interval}, the equilibrium set $\cP_0$ is a nonempty compact interval.

Suppose first that $\cP_0$ has an empty interior. Then $\cP_0$ is a singleton, so the equilibrium is unique. If the equilibrium is asymptotically bubbly, noting that Proposition \ref{prop:longrun} covers all cases, statement \ref{item:eqset2} holds. If the equilibrium is not asymptotically bubbly, again by Proposition \ref{prop:longrun}, we must have $p_t\to 0$. By Lemma \ref{lem:kconverge}, we have $k_t\to k\in \set{0}\cup \cK^*$. If $k>0$, by Lemma \ref{lem:right}, there exists a continuum of equilibria, which is a contradiction. Therefore $k=0$ and $R=f'(0)=\infty$. By Lemma \ref{lem:impossible2}, the equilibrium is bubbleless and statement \ref{item:eqset1} holds.

Suppose next that $\cP_0$ has a nonempty interior and write $\cP_0=[\ubar{p}_0,\bar{p}_0]$, where $\ubar{p}_0<\bar{p}_0$.

\medskip
\noindent
\ref{item:eqset3a} By Proposition \ref{prop:p0}\ref{item:p0_monotone}, any $p_0>\ubar{p}_0$ is bubbly, so statement \ref{item:eqset3a} holds.

\medskip
\noindent
\ref{item:eqset3c} Suppose to the contrary that $p_0=\bar{p}_0$ is not asymptotically bubbly. By Proposition \ref{prop:p0}\ref{item:p0_monotone}, it must be bubbly but asymptotically bubbleless. Noting that Proposition \ref{prop:longrun} covers all cases, it must be $p_t\to 0$. By Lemma \ref{lem:kconverge}, we have $k_t\to k\in \set{0}\cup \cK^*$. If $k=0$, by Lemma \ref{lem:impossible2}, the equilibrium is unique, which contradicts the assumption that $\cP_0$ has a nonempty interior. Therefore $k>0$. By Lemma \ref{lem:right}, for sufficiently small $\eta>0$, $p_0^\eta=\bar{p}_0+\eta$ is also an equilibrium, which contradicts the maximality of $\bar{p}_0$. Therefore $\bar{p}_0$ must be asymptotically bubbly, and statement \ref{item:eqset3c} holds by Proposition \ref{prop:longrun}.

\medskip
\noindent
\ref{item:eqset3b} Suppose $p_0\in [\ubar{p}_0,\bar{p}_0)$. By Corollary \ref{cor:unique_bubble}, the equilibrium is not asymptotically bubbly. Noting that Proposition \ref{prop:longrun} covers all cases, it must be $p_t\to 0$. By Lemma \ref{lem:kconverge}, we have $k_t\to k\in \set{0}\cup \cK^*$. If $k=0$, by Lemma \ref{lem:impossible2}, the equilibrium is unique, which contradicts the assumption that $\cP_0$ has a nonempty interior. Therefore $k>0$, and statement \ref{item:eqset3b} holds.
\end{proof}

\subsection{Proof of Theorem \ref{thm:necessity}}

By Proposition \ref{prop:exist}, there exists an equilibrium. By Proposition \ref{prop:longrun}, $\set{(k_t,p_t,R_t)}$ converges to some $(k,p,R)$. By Lemma \ref{lem:diamond}\ref{item:diamond_ub}, $k_t\le k_t^*$ for all $t$, and hence $k\le k^*$. Proposition \ref{prop:longrun} \ref{item:lr_asymbubbleless} cannot occur, because it would imply $k\in \cK^*\cap (0,k^*]$ and $f'(k)\ge G_d$, contradicting \eqref{eq:necessity}. Suppose Proposition \ref{prop:longrun}\ref{item:lr_bubbleless} occurs. If $k>0$, then again $k\in \cK^*\cap (0,k^*]$, so $G\le f'(k)<G_d\le G$, a contradiction. Therefore $k=0$ and statement \ref{item:eqset1} of Theorem \ref{thm:eqset} holds. If Proposition \ref{prop:longrun}\ref{item:lr_bubbly} occurs, then statement \ref{item:eqset2} of Theorem \ref{thm:eqset} obviously holds. The uniqueness of equilibrium follows from Lemma A.4 of \citet{PhamToda2026ECMA}.

Finally, we show the existence of the threshold $\kappa\in [0,\infty]$ if \eqref{eq:necessity} holds for all $k\in \cK^*$. Define the set
\begin{equation*}
    \cK_0 \coloneqq \set{k_0>0:\text{Case \ref{item:eqset2} in Theorem \ref{thm:eqset} occurs}}
\end{equation*}
and let $\kappa=\inf \cK_0$ (so by convention $\kappa=\infty$ if $\cK_0=\emptyset$). By definition, if $k_0<\kappa$, then $k_0\notin \cK_0$, so case \ref{item:eqset1} occurs. If $k_0>\kappa$, we can take some $k_0'\in [\kappa,k_0]$ such that $k_0'\in \cK_0$. Lemma A.5 of
\citet{PhamToda2026ECMA} then implies that $k_0\in \cK_0$. \hfill \qedsymbol

\subsection{Proof of Theorem \ref{thm:curse}}

We first record the behavior of $f'(k)$ near $k=0$. Since $\omega'(k)=-kf''(k)$, for any fixed $\delta>0$, integration by parts gives
\begin{align*}
    f'(k)-f'(\delta)=-\int_k^\delta f''(x)\diff x=\int_k^\delta\frac{\omega'(x)}{x}\diff x=\frac{\omega(\delta)}{\delta}-\frac{\omega(k)}{k}+\int_k^\delta \frac{\omega(x)}{x^2}\diff x.
\end{align*}
Using \eqref{eq:w_lim}, we obtain
\begin{equation*}
\int_k^\delta \frac{\omega(x)}{x^2}\diff x\sim
\begin{cases*}
\frac{L}{1-\alpha}k^{\alpha-1}
    & if $\alpha\in(0,1)$,\\
-L\log k & if $\alpha=1$
\end{cases*}
\end{equation*}
as $k\to 0$. Consequently,
\begin{equation}
f'(k)\sim
\begin{cases*}
\frac{\alpha L}{1-\alpha}k^{\alpha-1}
    & if $\alpha\in(0,1)$,\\
-L\log k & if $\alpha=1$.
\end{cases*}\label{eq:f'k}
\end{equation}
For any equilibrium, feasibility and no-arbitrage imply
\begin{align*}
Gk_{t+1}+p_t&=s_t\le \omega(k_t), \\
p_t&=\frac{G}{f'(k_{t+1})}(p_{t+1}+d_{t+1})\ge \frac{Gd_{t+1}}{f'(k_{t+1})}
\end{align*}
and hence
\begin{equation}
Gk_{t+1}+\frac{Gd_{t+1}}{f'(k_{t+1})}
\le\omega(k_t). \label{eq:kineq}
\end{equation}
For each $t$, the left-hand side of \eqref{eq:kineq}, viewed as a function of $k_{t+1}$, is strictly increasing. Consequently, if a continuous
function $\phi$ with $\phi(0)=0$ and an integer $n$ satisfy
\begin{equation}
\omega(\phi(d_{t+n}))
\le
G\phi(d_{t+n+1})
+\frac{Gd_{t+1}}{f'(\phi(d_{t+n+1}))} \label{eq:phi}
\end{equation}
for all $t$, then $k_0\le\phi(d_n)$ implies by induction that $k_t\le\phi(d_{t+n})$ for all $t$.

We now construct $\phi$. If $\alpha<1$, set $\eta=1-\alpha$. If $\alpha=1$, choose $\eta\in (0,\log r_2/\log r_1)$, where we note that $\log r_j<0$. In either case,
\begin{equation}
B\coloneqq \eta\log r_1-\alpha\log r_2>0. \label{eq:B}
\end{equation}
For $\alpha<1$, \eqref{eq:B} follows from \eqref{eq:alpha};
for $\alpha=1$, it follows from the choice of $\eta$.

By \eqref{eq:f'k}, there exist $C_1,C_2>0$ such that, for all sufficiently
small $k>0$,
\begin{equation}
f'(k)\le C_1k^{-\eta} \quad \text{and} \quad \omega(k)\le C_2k^\alpha,
\label{eq:f'omega}
\end{equation}
where the first inequality for $\alpha=1$ follows from
$-\log k=o(k^{-\eta})$. Choose $\gamma>0$ so large that $\lambda\coloneqq \log r_1+\gamma B>0$ and set $\phi(x)=x^\gamma$. For all sufficiently large $n$ and all
$t\ge 0$, \eqref{eq:f'omega} implies
\begin{equation}
\frac{
 f'((D_1r_1^{t+n+1})^\gamma)
 \omega((D_2r_2^{t+n})^\gamma)}
 {GD_1r_1^{t+1}} \le Cr_1^{-t-1-\gamma\eta(t+n+1)}r_2^{\alpha\gamma(t+n)}=C'\e^{-\lambda t-\gamma Bn} \label{eq:frac_ub}
\end{equation}
for some constants $C,C'>0$ independent of $t$ and $n$. Taking $n$
still larger makes the right-hand side of \eqref{eq:frac_ub} bounded above by 1. Hence
\begin{equation*}
\omega((D_2r_2^{t+n})^\gamma)
\le
\frac{GD_1r_1^{t+1}}
 {f'((D_1r_1^{t+n+1})^\gamma)}.
\end{equation*}
Using $D_1r_1^t\le d_t\le D_2r_2^t$ and the monotonicity of $\omega, 1/f'$, we obtain
\begin{equation*}
\omega(\phi(d_{t+n}))
\le
\frac{Gd_{t+1}}{f'(\phi(d_{t+n+1}))},
\end{equation*}
which implies \eqref{eq:phi}. Setting $\kappa=\phi(d_n)>0$, the comparison
argument gives $0<k_t\le\phi(d_{t+n})\to 0$ as $t\to\infty$. \hfill \qedsymbol

\subsection{Proof of Theorem \ref{thm:continuum}}
By Assumption \ref{asmp:f}, $k_g>0$ exists. By Lemma \ref{lem:diamond} and the capital over-accumulation condition \eqref{eq:over-accumulation}, we have $k_g<\ubar{k}\coloneqq \min \cK^*$. Since $g(k,0)>k$ for small enough $k$, by the definition of $\cK^*$, we have $g(k,0)>k$ for $k\in (0,\ubar{k})$. In particular, $g(k_g,0)>k_g$ and $p(k_g)>0$.

Since $V_0^*\le p(k_g)<\infty$, by Theorem \ref{thm:bubbleless}, there exists a unique bubbleless equilibrium, which we denote by $\set{(k_t,p_t)}_{t=0}^\infty$. By Corollary \ref{cor:unique_bubbleless}, it corresponds to $p_0=\min \cP_0$. By Lemma \ref{lem:diamond}\ref{item:diamond_ub}, we have $R_t\ge R_t^*$ and hence $p_0=v_0\le V_0^*\le p(k_g)$. Let us prove by induction that $k_t\ge k_g$ and $p_t\le p(k_g)$ for all $t$. The claim is trivial for $t=0$. If the claim holds for some $t$, the monotonicity of $g$ and the definition of $p$ imply $k_{t+1}=g(k_t,p_t)\ge g(k_g,p(k_g))=k_g$. Since $f'$ is decreasing and $f'(k_g)=G$, we have
\begin{equation*}
    p_{t+1}=\frac{f'(k_{t+1})}{G}p_t-d_{t+1}\le \frac{f'(k_g)}{G}p_t= p_t\le p(k_g).
\end{equation*}
Hence the claim holds for $t+1$. Since $k_t\ge k_g>0$, we cannot have $k_t\to 0$, so case \ref{item:eqset1} of Theorem \ref{thm:eqset} cannot occur. Since a bubbleless equilibrium exists, case \ref{item:eqset2} of Theorem \ref{thm:eqset} cannot occur. Therefore case \ref{item:eqset3} must occur. \hfill \qedsymbol

\subsection{Proof of Lemma \ref{lem:CD}}

\ref{item:lem_CD1} Let $x_t\coloneqq A\alpha k_t^\alpha/(Gk_{t+1})>0$. By definition, \eqref{eq:xdef_k} holds. Solving \eqref{eq:systemCD_k} for $p_t$ and using the definition of $\rho$ in \eqref{eq:rho}, we obtain
\begin{equation*}
    p_t=\beta A(1-\alpha)k_t^\alpha-Gk_{t+1}=\frac{A\alpha}{\rho}k_t^\alpha-Gk_{t+1},
\end{equation*}
which is \eqref{eq:xdef_p}. Changing $t$ to $t+1$ in \eqref{eq:systemCD_p} for $d_{t+1}$, we obtain \eqref{eq:xdef_d}. By assumption, $p_t>0$ and $d_t\ge 0$. Using \eqref{eq:xdef_k} and \eqref{eq:xdef_p}, we obtain
\begin{equation}
    0<p_t=\frac{A\alpha}{\rho}k_t^\alpha-\frac{A\alpha}{x_t}k_t^\alpha=A\alpha k_t^\alpha\left(\frac{1}{\rho}-\frac{1}{x_t}\right), \label{eq:ptx}
\end{equation}
which is equivalent to $x_t>\rho$ or \eqref{eq:xineq1}. Similarly, using \eqref{eq:xdef} and \eqref{eq:ptx}, we obtain
\begin{align}
    0\le d_{t+1}&=\frac{A\alpha}{G}k_{t+1}^{\alpha-1}A\alpha k_t^\alpha\left(\frac{1}{\rho}-\frac{1}{x_t}\right)-A\alpha k_{t+1}^\alpha\left(\frac{1}{\rho}-\frac{1}{x_{t+1}}\right) \notag \\
    &=A\alpha k_{t+1}^\alpha \left[\frac{A\alpha k_t^\alpha}{Gk_{t+1}}\left(\frac{1}{\rho}-\frac{1}{x_t}\right)-\left(\frac{1}{\rho}-\frac{1}{x_{t+1}}\right)\right] \notag \\
    &=A\alpha k_{t+1}^\alpha\left[\frac{x_t}{\rho}-1-\frac{1}{\rho}+\frac{1}{x_{t+1}}\right], \label{eq:kdx_d2}
\end{align}
which is equivalent to \eqref{eq:xineq2}.

\medskip
\noindent
\ref{item:lem_CD2} The proof that $\set{(k_t,p_t)}_{t=0}^\infty$ is an equilibrium is immediate by going in the reverse direction of \ref{item:lem_CD1}. \eqref{eq:kdx_k} is immediate from \eqref{eq:xdef_k} and induction. \eqref{eq:kdx_d} is proved in \eqref{eq:kdx_d2}. To show \eqref{eq:kdx_dp}, note that
\begin{align*}
    \frac{d_t}{p_t}&=\frac{A\alpha}{G}k_t^{\alpha-1}\frac{p_{t-1}}{p_t}-1 && (\because \eqref{eq:xdef_d})\\
    &=\frac{A\alpha}{G}k_t^{\alpha-1}\frac{A\alpha k_{t-1}^\alpha (1/\rho-1/x_{t-1})}{A\alpha k_t^\alpha (1/\rho-1/x_t)}-1 && (\because \eqref{eq:xdef_p})\\
    &=\frac{A\alpha k_{t-1}^\alpha}{Gk_t}\frac{1/\rho-1/x_{t-1}}{1/\rho-1/x_t}-1\\
    &=x_{t-1}\frac{1/\rho-1/x_{t-1}}{1/\rho-1/x_t}-1, && (\because \eqref{eq:xdef_k})
\end{align*}
which simplifies to \eqref{eq:kdx_dp}. \hfill \qedsymbol

\subsection{Proof of Lemma \ref{lem:kx}}

To simplify the notation, let $y_t\coloneqq \frac{A\alpha}{Gx_t}$ and $y=\frac{A\alpha}{Gx}\in [0,\infty)$.

Since $y_t\to y$, for any $\bar{y}>y$, we can take $T>0$ such that $y_t\le \bar{y}$ for $t\ge T$. Define the sequence $\set{\bar{k}_t}_{t=0}^\infty$ by $\bar{k}_t=k_t$ for $t\le T$ and $\bar{k}_{t+1}=\bar{y}\bar{k}_t^\alpha$ for $t\ge T$. Taking the logarithm and solving the linear difference equation in $\log \bar{k}_t$, we obtain
\begin{equation*}
    \log \bar{k}_t=\alpha^{t-T}\log \bar{k}_T+\frac{1-\alpha^{t-T}}{1-\alpha}\log \bar{y}
\end{equation*}
for $t\ge T$, so $\bar{k}_t\to \bar{y}^\frac{1}{1-\alpha}$. By monotonicity, $k_t\le \bar{k}_t$ holds for all $t$. Therefore,
\begin{equation}
    \limsup_{t\to\infty}k_t\le \limsup_{t\to\infty}\bar{k}_t=\bar{y}^\frac{1}{1-\alpha}. \label{eq:liminfk_ub}
\end{equation}
Sending $\bar{y}\downarrow y$, we obtain $\limsup_{t\to\infty}k_t\le y^\frac{1}{1-\alpha}$.  If $x=\infty$, then $y=0$, so we obtain $\lim_{t\to\infty}k_t=0=y^\frac{1}{1-\alpha}$. If $x<\infty$, then $y\in (0,\infty)$. An analogous argument using the lower bound yields $\liminf_{t\to\infty}k_t\ge y^\frac{1}{1-\alpha}$ and hence $\lim_{t\to\infty}k_t=y^\frac{1}{1-\alpha}$. \hfill \qedsymbol



\printbibliography

\newpage

\begin{center}
{\LARGE\bf Online Appendix (Not for publication)}
\end{center}

\begin{refsection}
\section{Detailed discussion of the literature}\label{sec:disc}

In this appendix, we discuss how our results relate to the analysis of \citet{Tirole1985}, \citet*{BosiHa-HuyLeVanPhamPham2018}, and \citet[\S V.A]{HiranoToda2025JPE}.

\subsection{Analysis of \texorpdfstring{\citet{Tirole1985}}{}}\label{subsec:disc_Tirole}

\citet{Tirole1985} denotes population by $(1+n)^t$ (where $n>0$ is the net population growth rate), so Assumption \ref{asmp:G} holds with $G>1$. Regarding the production function, \citet{Tirole1985} uses the notation $F(K,L)$ for output \emph{excluding} undepreciated capital and implicitly assumes Assumption \ref{asmp:f} and no capital depreciation, so our $F(K,L)$ corresponds to \citet{Tirole1985}'s $F(K,L)+K$. \citet{Tirole1985} also assumes constant rents (dividends), so Assumption \ref{asmp:D} holds with $G_d=1<1+n=G$. Thus, with respect to Assumptions \ref{asmp:G}, \ref{asmp:f}, \ref{asmp:D}, our setting is more general than \citet{Tirole1985}. To facilitate comparison, Table \ref{t:notation} compares our notation to his.

\begin{table}[!htb]
\centering
    \begin{tabular}{lcc}
    \toprule
    Description & \citet{Tirole1985} & Our paper \\
    \midrule
    Asset price & $(1+n)^ta_t$ & $P_t=p_tG^t$ \\
    Bubble & $(1+n)^tb_t$ & $B_t=b_tG^t$ \\
    Dividend & $R$ (constant) & $D_t=d_tG^t$ \\
    Fundamental value & $F_t=(1+n)^tf_t$ & $V_t=v_tG^t$ \\
    Gross population growth & $1+n>1$ & $G>0$ \\
    Gross return & $1+r_t$ & $R_t$ \\
    Long-run dividend growth & $1$ & $G_d=\limsup_{t\to\infty}D_t^{1/t}$\\
    Production function & $F(K,L)+K$ & $F(K,L)$ \\
    Utility function & $u(c^y,c^o)$ & $U(c_t^y,c_{t+1}^o)$ \\
    Young population & $(1+n)^t$ & $L_t=G^t$ \\
    \bottomrule
    \end{tabular}
    \caption{Notation in \citet{Tirole1985} and our paper.}
    \label{t:notation}
\end{table}

By Lemma \ref{lem:k}, rent and wage are related through capital as $R=f'(k)$ and $w=f(k)-kf'(k)$. Since $f''<0$, we may invert $f'$ and obtain $k=(f')^{-1}(R)$. \citet[Equation (4)]{Tirole1985} writes
\begin{equation}
    w=f(k)-kf'(k)\eqqcolon \phi(R) \label{eq:phi_Tirole}
\end{equation}
for $k=(f')^{-1}(R)$. Applying the chain rule, we obtain
\begin{equation*}
    \phi'(R)=-kf''(k)\frac{\diff k}{\diff R}=-kf''(k)\frac{1}{f''(k)}=-k<0.
\end{equation*}

Let $s(w,R)$ be the optimal savings function given the wage $w$ and the gross risk-free rate $R$. Then the equilibrium condition \eqref{eq:eqcond} is equivalent to
\begin{equation}
    G(f')^{-1}(R_{t+1})+p_t=s(w_t,R_{t+1}). \label{eq:eqcond_Tirole}
\end{equation}
\citet[Equation (7)]{Tirole1985} imposes the following high-level assumption: the equilibrium condition \eqref{eq:eqcond_Tirole} can be uniquely solved as
\begin{equation}
    R_{t+1}=\psi(w_t,p_t), \label{eq:psi}
\end{equation}
where $\psi_w<0$ and $\psi_p>0$. However, \citet{Tirole1985} does not provide any conditions on exogenous objects that guarantee this monotonicity condition. We may justify this assumption as follows. Noting that the equilibrium conditions \eqref{eq:eqcond}, \eqref{eq:eqcond_Tirole} are equivalent, it follows that our function $g$ in Lemma \ref{lem:g} satisfies
\begin{equation*}
    g(k,p)=(f')^{-1}(\psi(f(k)-kf'(k),p)).
\end{equation*}
Noting that $k\mapsto f(k)-kf'(k)$ is strictly increasing, the existence and monotonicity of $\psi$ is equivalent to those of $g$. Thus, we see that the high-level assumption of \citet{Tirole1985} is justified by our Assumption \ref{asmp:s}, which is satisfied under the conditions on exogenous objects in Lemma \ref{lem:s}.

Letting $\phi,\psi$ be defined as \eqref{eq:phi_Tirole} and \eqref{eq:psi}, the discussion around Equation (8) of \citet{Tirole1985} assumes that there exists a unique $R>0$ such that $R=\psi(\phi(R),0)$. Furthermore, if $R<G$, there exists a unique $b>0$ such that $G=\psi(\phi(G),b)$. In our notation, the existence and uniqueness of such $R$ is equivalent to saying that the set of steady-state capital-labor ratio $\cK^*=\set{k>0:k=g(k,0)}$ in \eqref{eq:cK} is a singleton, whereas we impose no condition on $\cK^*$: see Remark \ref{rem:singlecross}. Once we assume the existence and uniqueness of $R$, the existence and uniqueness of $b>0$ above is immediate due to the monotonicity of $g$.

In summary, with regard to assumptions, our setting is strictly more general than \citet{Tirole1985} and justifies his high-level assumptions.

We now turn to the discussion of results. We first note that \citet{Tirole1985} does not prove the existence of equilibrium in a general setting, whereas we provide it in Proposition \ref{prop:exist}. The main result of \citet{Tirole1985} is his Proposition 1. We quote the essential parts except that we modify the notation according to Table \ref{t:notation}. Following his assumptions, we assume that $\cK^*=\set{k}$ is a singleton and $R=f'(k)$.

\begin{oneshot}[Proposition 1 of \citet{Tirole1985}]
\quad
\begin{enumerate}[(a)]
    \item If $R>G$, there exists a unique equilibrium. This equilibrium is bubbleless and the interest rate converges to $R$.
    \item If $G_d=1<R<G$, there exists a maximum feasible bubble $\hat{b}_0>0$, such that: (i) for any $b_0\in [0,\hat{b}_0)$, there exists a unique equilibrium with initial bubble $b_0$. This equilibrium is asymptotically bubbleless and the interest rate converges to $R$.
    (ii) There exists a unique equilibrium with initial bubble $\hat{b}_0$. The bubble per capita converges to $b$ and the interest rate converges to $G$.
    \item If $R<1=G_d$, there exists no bubbleless equilibrium. There exists a unique bubbly equilibrium. It is asymptotically bubbly and the interest rate converges to $G$.
\end{enumerate}
\end{oneshot}

As discussed in Remark \ref{rem:collapse}, each case of Proposition 1 of \citet{Tirole1985} is incorrect as stated. \citet{Tirole1985} proves Proposition 1 based on Lemmas 1--10 in his paper. Lemma 1 claims that under the condition $R>G_d$ in our notation, there exists a unique bubbleless equilibrium and that $\lim_{t\to\infty}R_t=R$. We find this lemma problematic for a few reasons.

First, the continuity argument in \citet{Tirole1985} is loose. He implicitly assumes that the present value of dividends computed with the Diamond bubbleless and rentless interest rates is finite. This condition is exactly \eqref{eq:bubbleless_cond}. However, he did not prove that his function $\Gamma$ is continuous. Our proof of the existence of bubbleless equilibrium (Theorem \ref{thm:bubbleless}) is different and directly implies existence in \citet{Tirole1985}'s setting.

Second, \citet{Tirole1985}'s proof of $\lim_{t\to\infty}R_t=R$ is incomplete. To see why, it is useful to consider the following exhaustive and mutually exclusive cases:
\begin{enumerate}
    \item\label{item:Rinc} $R_t\ge R_{t-1}$ for all $t$,
    \item\label{item:Rdec1} $R_t<R_{t-1}$ and $R_t\le G$ for some $t$,
    \item\label{item:Rdec2} $R_t<R_{t-1}$ for some $t$, and $R_t>G$ for any such $t$.
\end{enumerate}
Note that the cases \ref{item:Rinc}--\ref{item:Rdec2} parallel the assumptions in our Lemmas \ref{lem:Rinc}--\ref{lem:Rdec2}. In the ``convergence'' proof of Lemma 1, \citet[p.~1522]{Tirole1985} only considers case \ref{item:Rdec1}. However, in case \ref{item:Rinc} we cannot exclude the possibility of $\lim_{t\to\infty}R_t=\infty$, and in case \ref{item:Rdec2} we can generally only conclude that $\liminf_{t\to\infty}R_t\ge G$. (See the proof of Lemma \ref{lem:Rdec2}.) In fact, our Example \ref{exmp:k0} provides a counterexample in which $\lim_{t\to\infty}R_t=\infty$.

Lemma 2 of \citet{Tirole1985} corresponds to our Lemma \ref{lem:Rdec1}. The first and second parts of the proof of his Lemma 3 correspond to our Lemmas \ref{lem:Rinc} and \ref{lem:Rdec2}. His Lemmas 4, 6, and 10 correspond to our Proposition \ref{prop:p0}. Here he proves that the equilibrium set is an interval and equilibria satisfy monotonicity. His Lemma 5 corresponds to our Corollary \ref{cor:unique_bubble}. In all these results, we follow the same proof strategy as \citet{Tirole1985} and hence we do not claim any originality.

Lemma 7 of \citet{Tirole1985} claims that if $R<G$, then we can construct a bubbly equilibrium if the initial bubble is sufficiently small. However, the proof depends on the convergence result in his Lemma 1, which is incorrect. We construct a continuum of bubbly equilibria in Theorem \ref{thm:continuum} using a different approach. Lemma 9 of \citet{Tirole1985} also assumes convergence and is problematic.

Lemma 8 of \citet{Tirole1985} shows that there exists no bubbly equilibrium if $R>G$. Lemma 2.3 of \citet{PhamToda2026ECMA} relaxes the assumptions and also establishes the uniqueness of equilibrium.

In summary, the main result of \citet{Tirole1985}, his Proposition 1, requires the following qualifications.
\begin{enumerate}[(a)]
    \item Regarding Proposition 1(a), although the existence and uniqueness of equilibrium (which is bubbleless) follows from Lemma 2.3 of \citet{PhamToda2026ECMA}, we cannot conclude that $\set{R_t}$ converges to $R<\infty$. By Theorem \ref{thm:curse}, $R_t\to\infty$ and hence $k_t\to 0$ is possible.
    \item Regarding Proposition 1(b), the proof is incomplete as it depends on the problematic Lemmas 1, 7, and 9. By Theorem \ref{thm:continuum}, we know that the equilibrium set $\cP_0=[\ubar{p}_0,\bar{p}_0]$ is a nondegenerate compact interval and any $p_0\in (\ubar{p}_0,\bar{p}_0)$ is bubbly but asymptotically bubbleless, which requires the condition $V_0^*\le p(k_g)$.
    \item Regarding Proposition 1(c), although the existence and uniqueness of equilibrium follows from Theorem \ref{thm:necessity}, we cannot conclude that the equilibrium is asymptotically bubbly. Indeed, Proposition 1 of \citet{PhamToda2026ECMA} provides a counterexample in which there exists a unique bubbleless equilibrium with $k_t\to 0$.
\end{enumerate}

\subsection{Analysis of \texorpdfstring{\citet*{BosiHa-HuyLeVanPhamPham2018}}{}}\label{subsec:disc_Bosi}

\citet*{BosiHa-HuyLeVanPhamPham2018} consider a model like ours but introduce forward (or descending) altruism. If we remove altruism, the model in \citet*{BosiHa-HuyLeVanPhamPham2018} reduces to ours. They prove that there is no bubble if $\sum_{t=1}^\infty d_t=\infty$ (Corollary 2) or $f'(k^*)>G$ (Proposition 2.1), which correspond to our Lemmas \ref{lem:impossible1}, \ref{lem:impossible2}. Proposition 2 of \citet*{BosiHa-HuyLeVanPhamPham2018} shows that when $f'(k^*)<G$ and $\set{d_t}$ is decreasing and converges to zero, then any equilibrium must be in one of three cases:
\begin{enumerate}
\item\label{item:Bosi1} $\liminf_{t\to \infty}k_t<k$, where $k$ satisfies $f'(k)=G$. In this case, the
equilibrium solution is bubbleless and unique.
\item\label{item:Bosi2} $\lim_{t\to \infty }k_t=k^*$ and $\lim_{t\to \infty}p_{t}=0$, where $k^*=g(k^*,0)$.
\item\label{item:Bosi3} $\lim_{t\to \infty}k_{t}=k$ and $\lim_{t\to
\infty}p_t=p$, where $f'(k)=G$ and $k=g(k,p)$.
\end{enumerate}
This result corresponds to our Proposition \ref{prop:longrun} but there are two differences. First, concerning the set of fixed points, Assumptions 5 and 6 (single-crossing conditions) of \citet*{BosiHa-HuyLeVanPhamPham2018} are relatively high-level, whereas our assumptions are explicit. Second, \citet{BosiHa-HuyLeVanPhamPham2018} require $\set{d_t}$ to be decreasing and to converge to zero, whereas Proposition \ref{prop:longrun} requires $\sum d_t<\infty$. The assumptions are not nested: summability does not require monotonicity, while monotone convergence to zero does not imply summability.

\citet*{BosiHa-HuyLeVanPhamPham2018} also consider a specific model with Cobb-Douglas production function and logarithmic utility and prove in Proposition 3 that when $f'(k^*)<G$ and $\lim_{t\to\infty} d_t=0$, case \ref{item:Bosi1} above must satisfy $\liminf_{t\to\infty}k_t=0$. They provide Example 1 where $k_t,p_t,d_t$ all converge to zero and Example 2 where there exists a continuum of bubbly equilibria.

Our paper provides more results. With respect to \citet*{BosiHa-HuyLeVanPhamPham2018}, our main new results are
\begin{enumerate*}
    \item the existence of bubbleless equilibrium (Theorem \ref{thm:bubbleless}),
    \item  showing all possible forms of the equilibrium set $\cP_0$ (Theorem \ref{thm:eqset}),
    \item a sufficient condition to rule out all bubbly but asymptotically bubbleless equilibria (Theorem  \ref{thm:necessity}),
    \item a general condition for capital collapse (Theorem \ref{thm:curse}), and
    \item a general condition for the existence of a continuum of equilibria (Theorem \ref{thm:continuum}).
\end{enumerate*}
By developing their approach, we provide more examples with numerical simulations.

\subsection{Analysis of \texorpdfstring{\citet{HiranoToda2025JPE}}{}}\label{subsec:disc_HT}

\citet{HiranoToda2025JPE} establish the necessity of bubbles (\ie, asset price bubble emerges in all equilibria) under some conditions in modern macro-finance models. Their main result can be roughly stated as follows. Let $G>0$ be the long-run economic growth rate, $G_d$ in \eqref{eq:Gd} the long-run dividend growth rate, and $R\ge 0$ the bubbleless interest rate (the interest rate that prevails in the absence of the long-lived asset). If the bubble necessity condition
\begin{equation}
    R<G_d<G \label{eq:necessity_HT}
\end{equation}
holds, then all equilibria are asymptotically bubbly in the sense that $P_t>V_t$ and $\liminf_{t\to\infty}P_t/G^t>0$.

The main analysis of \citet{HiranoToda2025JPE} concerns an OLG endowment economy. However, they also consider infinite-horizon or production economies. In particular, \citet[\S V.A]{HiranoToda2025JPE} consider a particular application to \citet{Tirole1985}'s model with log utility.

\begin{oneshot}[Theorem 3 of \citet{HiranoToda2025JPE}]
\itshape
Consider the model in \S\ref{subsec:model_tirole} with log utility \eqref{eq:utility_log}. Suppose Assumptions \ref{asmp:G}, \ref{asmp:f} hold with $G=1$. If there exists $k^*>0$ such that
\begin{equation}
    \beta F_L(k,1)-k\begin{cases*}
        >0 & if $0<k<k^*$,\\
        =0 & if $k=k^*$,\\
        <0 & if $k>k^*$
    \end{cases*} \label{eq:K*}
\end{equation}
and
\begin{equation}
	R\coloneqq F_K(k^*,1)<G_d\coloneqq \limsup_{t\to\infty}D_t^{1/t}<1\eqqcolon G, \label{eq:necessity_HT_diamond}
\end{equation}
then any equilibrium with $\liminf_{t\to\infty}K_t>0$ is asymptotically bubbly.
\end{oneshot}

Our Theorem \ref{thm:necessity} is strictly stronger than \citet[Theorem 3]{HiranoToda2025JPE}, henceforth HT3. Regarding the assumption, HT3 assumes log utility, which satisfies Assumption \ref{asmp:s} by Lemma \ref{lem:s}. Then $\beta F_L(k,1)-k=g(k,0)-k$ in our notation. Therefore, condition \eqref{eq:K*} implies that the set $\cK^*$ in \eqref{eq:cK} is a singleton. Under this condition, \eqref{eq:necessity_HT_diamond} is equivalent to \eqref{eq:necessity}. Therefore, the assumptions in Theorem \ref{thm:necessity} are weaker. Regarding the conclusions, all HT3 shows is that equilibria satisfying $\liminf_{t\to\infty}K_t>0$ are asymptotically bubbly. In contrast, Theorem \ref{thm:necessity} shows the existence and uniqueness of equilibrium and convergence results.

\section{Minimum equilibrium price need not be bubbleless}
\label{sec:bubbly-minimum}

Theorem~\ref{thm:eqset}\ref{item:eqset3b} states that every equilibrium below the upper endpoint is asymptotically bubbleless, but it does not claim that the lower endpoint is bubbleless. The following proposition shows that the minimum can indeed be bubbly.

\begin{prop}[Bubbly lower endpoint]
\label{prop:bubbly-minimum}
There exists an economy satisfying all the assumptions of Theorem \ref{thm:eqset} for which case \ref{item:eqset3} occurs and the equilibrium associated with $\ubar{p}_0=\min \cP_0$ is bubbly.
\end{prop}

\begin{proof}
We construct the technology so that the Diamond dynamics have three isolated
steady states.  The middle steady state is locally unstable, while the largest
steady state has an interest rate below the long-run dividend growth rate.
These two properties will force the equilibrium constructed below to be the
lower endpoint of the equilibrium set. The proof is by reverse-engineering, as in \S\ref{sec:example}.

\paragraph{Step 1: primitives.}

Let $G=1$ and consider the log utility \eqref{eq:utility_log} with $\beta=19/20$. Optimal savings is $s(w,R)=\beta w$, so Assumption \ref{asmp:s} holds. Let
\begin{equation*}
    (\kappa_1,\kappa_2,\kappa_3)=(9/10,1,13/10)
\end{equation*}
and define $m:[0,\infty)\to \R$ by
\begin{equation}
\label{eq:bubbly-min-m}
 m(k)=
 \begin{cases*}
 k+\frac{1}{45}k(\kappa_1-k)
 & for $k\in [0,\kappa_1)$,\\
 k+\frac{1}{2}(\kappa_1-k)(\kappa_2-k)(\kappa_3-k)
 & for $k\in [\kappa_1,\kappa_3]$,\\
 \kappa_3+\frac{47}{1000}
 \left(1-\exp\left(-20(k-\kappa_3)\right)\right)
 & for $k\in (\kappa_3,\infty)$.
 \end{cases*}
\end{equation}

Then $m$ is continuously differentiable, $m(k)>0$ for $k>0$, and $m'(k)>0$. To see this, we first show $m'(k)>0$. This is obvious for $k\in [0,\kappa_1)\cup (\kappa_3,\infty)$; and for $k\in [\kappa_1,\kappa_3]$, $m'(k)$ is a concave quadratic function and
\begin{align*}
    m'(\kappa_1)&=1-\frac{1}{2}(\kappa_2-\kappa_1)(\kappa_3-\kappa_1)>0,\\
    m'(\kappa_3)&=1-\frac{1}{2}(\kappa_3-\kappa_1)(\kappa_3-\kappa_2)>0
\end{align*}
because $\abs{\kappa_i-\kappa_j}<1$ for all $i,j$ pairs. Concavity of $m'(k)$ implies $m'(k)>0$ for $k\in [\kappa_1,\kappa_3]$. Furthermore, the function values and derivatives match at $k=\kappa_1,\kappa_3$ by direct computation, so $m$ is continuously differentiable and $m'>0$. Since $m$ is strictly increasing, $m(k)>m(0)=0$ for $k>0$. Moreover,
\begin{equation}
\label{eq:bubbly-min-sign}
 m(k)-k
 \begin{cases*}
 =0 & if $k\in \set{0,\kappa_1,\kappa_2,\kappa_3}$,\\
 >0 & if $k\in (0,\kappa_1)\cup (\kappa_2,\kappa_3)$,\\
 <0 & if $k\in (\kappa_1,\kappa_2)\cup (\kappa_3,\infty)$.
 \end{cases*}
\end{equation}
For $k\le \kappa_3$, \eqref{eq:bubbly-min-sign} is obvious because $m(k)-k$ is a polynomial with roots in $\set{0,\kappa_1,\kappa_2,\kappa_3}$; for $k>\kappa_3$, putting $y=k-\kappa_3>0$, we have
\begin{equation*}
    m(k)-k=-y+\frac{47}{1000}(1-\e^{-20y})<-y+\frac{47}{1000}(20y)=-\frac{3}{50}y<0.
\end{equation*}

Set $\bar{R}=4/5$ and define
\begin{align}
\label{eq:bubbly-min-Rf}
 R(k)&\coloneqq \bar{R}+\int_k^{\kappa_2}\frac{m'(x)}{\beta x}\diff x, &
 f(k)&\coloneqq \frac{m(k)}{\beta}+kR(k),
\end{align}
and $F(K,L)\coloneqq Lf(K/L)$. For $k\in (0,\kappa_1]$, by direct integration we have
\begin{equation*}
    R(k)=R(\kappa_1)+\frac{102}{95}\log\frac{\kappa_1}{k}+\frac{8}{171}(k-\kappa_1).
\end{equation*}
Hence $kR(k)\to 0$ as $k\downarrow 0$, so by \eqref{eq:bubbly-min-Rf} we have $f(0)=0$ and
\begin{align}
\label{eq:bubbly-min-f-properties}
 f'(k)&=R(k), &
 f''(k)&=-\frac{m'(k)}{\beta k}<0, &
 f(k)-kf'(k)&=\frac{m(k)}{\beta}>0.
\end{align}
Then $R'(k)=f''(k)<0$. Let us show $R(k)>0$ on $(0,\infty)$. To see this, direct integration on
$[\kappa_2,\kappa_3]$ gives
\begin{equation*}
\label{eq:bubbly-min-Rk3}
 R(\kappa_3)
 =\frac{127}{380}+\frac{137}{190}\log\left(\frac{13}{10}\right)
 >\frac{127}{380}.
\end{equation*}
For $k\in [\kappa_3,\infty)$, since $\beta k\ge (19/20)(13/10)=247/200$, \eqref{eq:bubbly-min-m} implies
\begin{equation*}
 \int_{\kappa_3}^{\infty}\frac{m'(x)}{\beta x}\diff x
 \le \frac{200}{247}\int_{\kappa_3}^\infty m'(x)\diff x=\frac{200}{247}(m(\infty)-m(\kappa_3))=\frac{47}{1235},
\end{equation*}
so
\begin{equation*}
 R(\infty)
 \ge  \frac{127}{380}-\frac{47}{1235}
 =\frac{77}{260}>0.
\end{equation*}
Furthermore, since $m'(k)>0$, we can take $\ubar{m}>0$ such that $m'(k)\ge \ubar{m}$ for $k\in [0,\kappa_2]$. Then \eqref{eq:bubbly-min-Rf} implies
\begin{equation*}
    R(0)\ge \bar{R}+\int_0^{\kappa_2}\frac{\ubar{m}}{\beta x}\diff x=\infty,
\end{equation*}
so $f'(0)=R(0)=\infty$ by \eqref{eq:bubbly-min-f-properties}. Since $f'(\infty)=R(\infty)<R(\kappa_2)=4/5<1=G$, Assumption \ref{asmp:f} holds.

By \eqref{eq:bubbly-min-f-properties}, saving equals $\beta \omega(k)=\beta (f(k)-kf'(k))=m(k)$, so the equilibrium capital equation is
particularly simple:
\begin{equation}
\label{eq:bubbly-min-g}
 k_{t+1}+p_t=m(k_t),
\end{equation}
that is, $g(k,p)=m(k)-p$. It follows from \eqref{eq:bubbly-min-sign} that
\begin{equation*}
\label{eq:bubbly-min-Kstar}
 \cK^*=\set{k>0:g(k,0)=k}
 =\set{\kappa_1,\kappa_2,\kappa_3}
\end{equation*}
and $g(k,0)>k$ for $k\in (0,\kappa_1)$. Since $R$ is strictly decreasing, the largest interest rate at a Diamond steady state is attained at $\kappa_1$. Direct integration gives
\begin{equation*}
\label{eq:bubbly-min-overaccumulation}
    f'(\kappa_1)=R(\kappa_1)=\frac{75}{76}+\frac{137}{190}\log\left(\frac{9}{10}\right)<\frac{75}{76}<1=G.
\end{equation*}
Consequently, \eqref{eq:over-accumulation} holds.

\paragraph{Step 2: a bubbly but asymptotically bubbleless equilibrium.}
Let $h(x)\coloneqq (\bar{R}-x)x$. Then $0<h(x)<\bar{R}x$ for $x\in (0,\bar{R})$. Since $\bar{R}=4/5<1$, for any $x_0\in (0,\bar{R})$, we may recursively define $\set{x_t}_{t=0}^\infty\subset (0,\bar{R})$ by
\begin{equation}
\label{eq:bubbly-min-x}
 x_{t+1}=h(x_t).
\end{equation}
Then $x_{t+1}<\bar{R}x_t<x_t$ and $x_t\to 0$ monotonically. Set $d_0=0$ and define
\begin{align}
\label{eq:bubbly-min-kp}
    k_t&\coloneqq 1+x_t, & p_t&\coloneqq m(k_t)-k_{t+1}, & d_{t+1}&\coloneqq R(k_{t+1})p_t-p_{t+1}.
\end{align}
Then \eqref{eq:bubbly-min-kp} makes the equilibrium system \eqref{eq:system} hold identically.

For $x$ near zero, the middle branch of \eqref{eq:bubbly-min-m} gives
\begin{equation*}
 m(1+x)
 =1+\frac{203}{200}x+\frac{1}{10}x^2-\frac{1}{2}x^3.
\end{equation*}
Using \eqref{eq:bubbly-min-x}, we obtain the exact expression $p_t=q(x_t)$, where
\begin{equation}
\label{eq:bubbly-min-q}
    q(x)\coloneqq \frac{43}{200}x+\frac{11}{10}x^2-\frac{1}{2}x^3.
\end{equation}
Thus $p_t=q(x_t)>0$ for all $t$ if $x_0$ is sufficiently small.  Define
\begin{equation*}
 \delta(x)\coloneqq R(1+h(x))q(x)-q(h(x)).
\end{equation*}
Then $d_{t+1}=\delta(x_t)$ by \eqref{eq:bubbly-min-kp}. A Taylor expansion at zero yields
\begin{equation}
\label{eq:bubbly-min-delta}
 \delta(x)
 =\frac{19687}{95000}x^2+O(x^3).
\end{equation}
Since the leading coefficient is strictly positive, we may choose $x_0$
small enough that $\delta(x)>0$ for every $x\in(0,x_0]$.  Because
$x_t\le x_0$, it follows that $d_{t+1}=\delta(x_t)>0$ for all $t$.

Since \eqref{eq:bubbly-min-x} implies $x_{t+1}/x_t=4/5-x_t\to 4/5$, we have $\lim_{t\to\infty}x_t^{1/t}=4/5$ and $\sum_t x_t<\infty$. Equations \eqref{eq:bubbly-min-q}--\eqref{eq:bubbly-min-delta} imply
\begin{equation}
\label{eq:bubbly-min-asymptotics}
 (p_t,d_{t+1})\sim \left(\frac{43}{200}x_t,\frac{19687}{95000}x_t^2\right).
\end{equation}
Therefore $\sum_t d_t<\infty$, so Assumption~\ref{asmp:D} holds. Because
$G=1$ and hence $D_t=d_t$,
\begin{equation}
\label{eq:bubbly-min-Gd}
 G_d=\lim_{t\to\infty}D_t^{1/t}
 =\left(\frac{4}{5}\right)^2
 =\frac{16}{25}.
\end{equation}
Moreover, \eqref{eq:bubbly-min-asymptotics} implies $d_{t+1}/p_{t+1}=O(x_t)$, so $\sum_{t=1}^\infty d_t/p_t<\infty$ and the equilibrium is bubbly by Lemma \ref{lem:bubble}. On the other hand, $p_t\to 0$, and hence its bubble component satisfies $0<b_t\le p_t\to 0$.  The equilibrium is thus bubbly but
asymptotically bubbleless, with $(k_t,p_t,R_t)\to (1,0,4/5)$. All assumptions of Theorem~\ref{thm:eqset} are satisfied.  Since the equilibrium just constructed is neither a capital collapse equilibrium nor an asymptotically
bubbly equilibrium, case \ref{item:eqset3} of Theorem \ref{thm:eqset} occurs.

\paragraph{Step 3: the constructed equilibrium is the lower endpoint.}
We now prove that $p_0=\min\cP_0$.  Suppose to the contrary that there exists another equilibrium $\set{(k_t',p_t')}_{t\ge 0}$ with $p_0'<p_0$. Proposition \ref{prop:p0}\ref{item:p0_monotone} implies $k_t'>k_t$ and $p_t'<p_t$ for all $t\ge 1$. The constructed equilibrium is asymptotically bubbleless, so $p_0<\max \cP_0$.  Theorem \ref{thm:eqset}\ref{item:eqset3b} therefore implies that $(k_t',p_t')\to (k',0)$ for some $k'\in \cK^*$. Since $k_t'>k_t$ and $k_t\to \kappa_2=1$, it must be $k'\in \set{\kappa_2,\kappa_3}$. The limit cannot be $\kappa_2$. Indeed,
$m'(\kappa_2)=203/200>1$, so there are a neighborhood $I$ of
$\kappa_2$ and a number $\lambda>1$ such that
$m'(k)\ge \lambda$ on $I$.  If both capital paths converged to
$\kappa_2$, they would eventually lie in $I$. Defining $\Delta k_t\coloneqq k_t'-k_t>0$ and $\Delta p_t\coloneqq p_t'-p_t<0$, \eqref{eq:bubbly-min-g} would then imply
\begin{equation*}
    \Delta k_{t+1}=m(k_t')-m(k_t)-\Delta p_t>m(k_t')-m(k_t)\ge \lambda\Delta k_t
\end{equation*}
by the mean value theorem, which is incompatible with $\Delta k_t\to 0$. Consequently, $k'=\kappa_3$. Using \eqref{eq:bubbly-min-f-properties}, \eqref{eq:bubbly-min-Gd}, and $\log(1+x)<x$ for $x>0$, we have
\begin{align*}
 f'(\kappa_3)
 &=R(\kappa_3)
 =\frac{127}{380}
   +\frac{137}{190}\log\left(\frac{13}{10}\right)\\
 &<\frac{127}{380}+\frac{137}{190}\frac{3}{10}
 =\frac{523}{950}
 <\frac{16}{25}=G_d,
\end{align*}
which contradicts $f'(\kappa_3)\ge G_d$ in Proposition \ref{prop:longrun}\ref{item:lr_asymbubbleless}. Therefore $p_0=\min\cP_0$ and the minimum price equilibrium is bubbly but asymptotically bubbleless.
\end{proof}

\begin{rem}
Proposition \ref{prop:bubbly-minimum} shows that Corollary \ref{cor:unique_bubbleless} is sharp:
it identifies the minimum equilibrium as bubbleless conditional on the
existence of a bubbleless equilibrium, but such an equilibrium need not exist
in case \ref{item:eqset3}. The additional present value condition $V_0^*\le p(k_g)$ in Theorem \ref{thm:continuum} therefore
does genuine work by guaranteeing existence of a bubbleless equilibrium. Indeed, the example in Proposition \ref{prop:bubbly-minimum} violates this condition. To see why, by \eqref{eq:bubbly-min-g} the capital equation in the Diamond economy is $k_{t+1}^*=m(k_t^*)$. We have already shown that $m'(1)=203/200>1$, so by \eqref{eq:bubbly-min-sign}, if $k_0^*=k_0>1=\kappa_2$ is sufficiently close to 1, then $k_t^*\to \kappa_3$ by Lemma \ref{lem:diamond}. Since $f'(\kappa_3)<G_d$, the benchmark fundamental value $V_0^*$ in \eqref{eq:bubbleless_cond} is infinite. Furthermore, since we can verify $f'(\kappa_3)<G_d<f'(\kappa_2)<f'(\kappa_1)<G$, condition \eqref{eq:necessity} is violated and Theorem \ref{thm:necessity} does not apply.
\end{rem}

\printbibliography
\end{refsection}
	
\end{document}